\newcommand{\epsi}{\varepsilon}
\newcommand{\E}{{\mathrm{e}}}
\newcommand{\I}{\mathrm{i}}
 \newcommand{\R}{ \mathbb{R} }
\newcommand{\C}{ \mathbb{C} }
\newcommand{\N}{ \mathbb{N} }
\newcommand{\Z}{ \mathbb{Z} }
\newcommand{\D}{\mathrm{d}}
\newcommand{\Or}{{\mathcal{O}}}
\newcommand{\tr}{{\mathrm{tr}}}
\newcommand{\Hi}{{\mathfrak{H}}}
\newcommand{\bm}{\begin{pmatrix}}
\newcommand{\Em}{\end{pmatrix}}
\newcommand{\ad}{\mathscr{L}}
\newcommand{\add}{\mathrm{ad}}
\newcommand{\dege}{\kappa}
\newcommand{\loc}{{\rm Loc}}
\newcommand{\epslaml}{{\epsi,\Lambda}}
\newcommand{\epslam}{{\Lambda}}
\newcommand{\lplus}{\stackrel{\Lambda}{+}}
\newcommand{\lminus}{\stackrel{\Lambda}{-}}
 \newtheorem{theorem}{Theorem}[section]
\newtheorem*{theorem*}{Theorem}
\newtheorem{lemma}{Lemma}[section]
\newtheorem*{lemma*}{Lemma}
\newtheorem{proposition}{Proposition}[section]
\newtheorem{corollary}{Corollary}[section]
\newtheorem*{remarks}{Remarks} 
\newtheorem*{remark}{Remark}
\title{Adiabatic currents   for interacting fermions\\ on a lattice}
\author{Domenico Monaco%
\thanks{%
Fachbereich Mathematik, Eberhard-Karls-Universit\"at\newline
\textcolor{white}{a} \hspace{.7em} Auf der Morgenstelle 10, 72076 T\"ubingen, Germany\newline
\textcolor{white}{a} \hspace{.7em} \textsl{and}\newline
\textcolor{white}{a} \hspace{.7em} Dipartimento di Matematica e Fisica, Universit\`a degli Studi di Roma Tre\newline
\textcolor{white}{a} \hspace{.7em} Largo San Leonardo Murialdo 1, 00146 Roma, Italy\newline
\textcolor{white}{a} \hspace{.7em} E-mail: dmonaco@mat.uniroma3.it%
}\ %
\ and Stefan Teufel%
\thanks{%
Fachbereich Mathematik, Eberhard-Karls-Universit\"at\newline
	\textcolor{white}{a} \hspace{.7em} Auf der Morgenstelle 10, 72076 T\"ubingen, Germany\newline
	\textcolor{white}{a} \hspace{.7em} E-mail: stefan.teufel@uni-tuebingen.de%
}%
}
\begin{document}
\maketitle

\begin{abstract} 
We prove an adiabatic theorem for  general densities  of   observables that are sums of local terms in finite systems of interacting fermions, without periodicity assumptions on the Hamiltonian and with error estimates that are uniform  in the size of the system. Our result provides an adiabatic expansion to all orders, in particular, also for initial data that lie in eigenspaces of degenerate eigenvalues.
 Our proof is based on ideas from~\cite{BDF}, where Bachmann et al.\ proved  an adiabatic theorem for interacting spin systems.
 
 As one important application of this adiabatic theorem, we provide
 the first rigorous derivation of the adiabatic response  formula for the current density induced by an adiabatic change  of the Hamiltonian of a system of interacting fermions in a ground state, with error estimates uniform in the system size. We also discuss the application to quantum Hall systems.

\medskip

\noindent \textbf{Keywords.} Adiabatic theorem; interacting fermions; adiabatic current; adiabatic response; quantum Hall conductivity; quantum Hall conductance.

\medskip

\noindent \textbf{AMS Mathematics Subject Classification (2010).} 81Q15; 81Q20; 81V70.
\end{abstract}

\newpage

\setcounter{tocdepth}{1}
\tableofcontents

\section{Introduction}

In a number of seminal works, Laughlin \cite{L}, Niu, Thouless {and Wu \cite{NT,NTW}}, and Avron and Seiler \cite{AS} explained the integer and fractional quantization of the Hall conductance resp.\ conductivity in interacting many-body fermion systems starting from the following idea.  According to the adiabatic theorem   of quantum mechanics, such a system remains close to its ground state even when its Hamiltonian slowly changes in time, as long as the ground state remains gapped.
The current density induced by such an adiabatic change is then computed based on the  adiabatic response of the system to this change.
For a system of interacting fermions on a finite cube $\Lambda = (-M/2, M/2]^d\cap \Z^d$ within the lattice $\Z^d$,  the resulting adiabatic response formula for this adiabatic current density can be expressed as follows. Let $(\varphi_0(t),\varphi_1(t),\ldots)$ be an orthonormal basis of eigenvectors of the time-dependent Hamiltonian $H(t)$ with eigenvalues $(E_0(t),E_1(t),\ldots)$, and assume that the system is initially in its non-degenerate ground state~$\varphi_0$. Then the averaged current density induced by a slow change of the Hamiltonian at time $t$  is
\begin{equation}\label{NiuThouless}
\langle J \rangle \approx -\frac{2}{|\Lambda|} {\rm Im}\left( \sum_{n>0} \frac{\langle\varphi_n,\dot \varphi_0\rangle \,\langle \varphi_0, J \varphi_n\rangle}{E_n-E_0}
\right)\,,
\end{equation}
where $J(t)$ is the current operator associated with $H(t)$, $|\Lambda|$ is the number of lattice sites, and $\approx$ refers to asymptotic closeness in the adiabatic limit.

Starting from   formula \eqref{NiuThouless}, e.g.\ Niu and Thouless \cite{NT} argue for quantization of the transported charge under cyclic changes of the Hamiltonian in the thermodynamic limit $|\Lambda|\to\infty$ and, by a similar argument, for integer quantization of Hall conductivity also for interacting fermion systems in the thermodynamic limit. See also Avron and Seiler \cite{AS} for   closely related arguments, Hastings and Michalakis \cite{HM} for  a rigorous proof showing quantization of conductance in  finite interacting spin systems up to almost-exponentially small terms in the system size  (and also Hatsugai et al.\ \cite{KWKH} who provide numerical evidence for this fact in an interacting Hofstadter model).

However, the standard argument (see e.g.\ \cite{ASY} for a rigorous account) leading to the  formula \eqref{NiuThouless} for the current density (i.e.\ the starting points in \cite{NT,AS,HM} and many others)  does not provide error bounds uniform in the system size $|\Lambda|$. This is because in  the standard adiabatic theorem one has no control on the dependence of the error on the system size, and the adiabatic approximation might  deteriorate in the thermodynamic limit.

More precisely, let $H^\Lambda(t)$ be a smooth time-dependent family of bounded self-adjoint Hamiltonians generating the time-evolution  
\[
\I\epsi \frac{\D}{\D t} \rho(t)  = [ H^\Lambda (t), \rho(t)]  \,, \quad \rho(0)=\rho_0\,.
\]
Assume that $E^\Lambda_*(t)\in\sigma(H^\Lambda(t))$ is an eigenvalue depending smoothly on $t$ that remains  
isolated from the rest of the spectrum for all times and 
denote by $P^\Lambda_*(t)$ the corresponding family of spectral projections.
Then a direct consequence of 
 the version of the adiabatic theorem going back to Kato \cite{Ka}
 is that for  any initial state $\rho_0$ in the range of $P^\Lambda_*(0)$, i.e.\ $P^\Lambda_*(0) \rho_0P^\Lambda_*(0)= \rho_0$, and any $T<\infty$ there exists a constant $C^\Lambda_T<\infty$ such that for any bounded $B^\Lambda$
 \begin{equation}\label{introadi}
\sup_{t,s\in [-T,T]} \left| \tr (\rho (t) B^\Lambda ) - \tr (\rho_\parallel(t) B^\Lambda  )\right| \leq  \epsi \,C^\Lambda_T \,\|B^\Lambda\|\,,
\end{equation}
 where $\rho_\parallel(t)$ is the solution to the parallel transport equation 
 \[
\I \frac{\D}{\D t} \rho_\parallel(t) =   [K^\Lambda_\parallel (t), \rho_\parallel(t)] 
 \,, \quad \rho_\parallel(0)=\rho_0\,. 
 \]
Here $K^\Lambda_\parallel(t):= \I [\dot P^\Lambda_*(t),P^\Lambda_*(t)]$ is the generator of parallel transport.
The constant $C^\Lambda_T$ in \eqref{introadi} depends, among other quantities, linearly on the norm $\|\dot H^\Lambda(t)\|$ of $\dot H^\Lambda(t)$. This, however, is unsatisfactory when dealing with extended systems, where the energy $H^\Lambda(t)$ itself as well as its time-derivative $\dot H^\Lambda(t)$ are typically extensive quantities 
with norms proportional to the size of the system. Then $C^\Lambda_T\sim |\Lambda|$ and the estimate \eqref{introadi} becomes worthless whenever one is interested in large $|\Lambda|$  at fixed $\epsi$. 

As a special case of a much more general adiabatic theorem we will show that for   lattice fermions with a Hamiltonian $H^\Lambda(t)$ that is a sum of local terms the estimate \eqref{introadi} basically holds with a constant $C^\Lambda_T\equiv C_T$ independent of the volume $|\Lambda|$ whenever the observable $B^\Lambda$ is also a sum of local terms. The ``basically'' refers to the fact that, if $B^\Lambda$ is not a local observable,   then $\|B^\Lambda\|$  gets replaced by another quantity that grows, however,  at the same rate as $\|B^\Lambda\|$ with the system size $|\Lambda|$, namely proportional to the volume of the support of~$B^\Lambda$.

The result just sketched can be obtained as a  corollary of a recent   result of   Bachmann, De Roeck, and Fraas    \cite{BDFletter,BDF}. Their result is, to our knowledge,  the first instance of an adiabatic theorem for an interacting system with error bounds uniform in the system size. They use  a very subtle combination of Lieb--Robinson bounds and the so-called quasi-adiabatic evolution  in order to maintain locality   in all steps of the adiabatic approximation.   Also our proofs   rely on the machinery developed in \cite{BDF}.

However, mostly with the application to adiabatic currents in mind, we   improve and generalize the result of \cite{BDF} in at least two ways. 
First, we show that the order of the error in \eqref{introadi} can be improved to $\epsi^2$ by  modifying the generator of parallel transport $K^\Lambda = K^\Lambda_\parallel + \epsi K^\Lambda_1$ by an explicit term of order $\epsi$.
It is well known (e.g.\ \cite{NT, ST}) and at the heart of our derivation of \eqref{NiuThouless} that this first order correction to the parallel transport is responsible for the leading order contribution to adiabatic currents.
Second, we show that if $\dot H^\Lambda(t)$ and $B^\Lambda$ are both  supported around lower-dimensional  planes, then $C_T\|B^\Lambda\|$ in the right hand side of \eqref{introadi} can be replaced by a constant times $M^{\tilde d}$, where we recall that $M$ is the side-length of the cube $\Lambda$ and $\tilde d$ is the dimension of the intersection of  the supports of $\dot H^\Lambda$ and $B^\Lambda$. This is relevant, e.g., when computing the conductance in a two-dimensional quantum-Hall system. There $\dot H^\Lambda$ is supported near a line and the observable $B^\Lambda$ is the current across a line perpendicular to the first one. The intersection of the supports of $\dot H^\Lambda$ and $B^\Lambda$ is a fixed area independent of $\Lambda$, hence $\tilde d = 0$, and the right hand side of \eqref{introadi} is of the form $\epsi C_T$ with a constant $C_T$ independent of $\Lambda$. For a more detailed presentation and discussion of our general adiabatic theorem and its relation to \cite{BDF} we refer to the remarks after Theorem~\ref{AdiThm} in Section~\ref{sec:AdiThm}.

As mentioned before, our  results   relate to quantum Hall systems, in particular to quantization of conductivity and conductance.  
 Assuming that the results of Hastings and Michalakis \cite{HM,H} or Bachmann et al.\ \cite{BBDF}
carry over as expected  from spin systems to interacting fermions, then   our derivation of adiabatic response formulas for adiabatic currents  completes a rigorous chain of arguments that starts from microscopic first principles and proves quantization of Hall conductance in the thermodynamic limit for certain perturbations of gapped  free fermion Hamiltonians,  cf.\ \eqref{infinitecond}.
Here it should be noted that  Fr\"ohlich \cite{Fr} (and references therein) developed a different approach through gauge-theoretic arguments to the quantization of conductance in interacting Hall insulators.

\medskip

We end the introduction with a few remarks on related literature.
The idea of a topological quantum pump in a non-interacting fermion system was pioneered by Thouless \cite{ThPump}, and has been recently experimentally realized with ultracold atoms \cite{NTTIOWTT,LSZAB}. Similar ideas inspired the simulation of a topological adiabatic pump in a quasicrystal through optical waveguides \cite{KLRVZ}, where the role of the adiabatic time is played by the length of the waveguide. The tunability of these quantum simulation systems could allow to test experimentally the validity of our predictions when interactions are turned on.
The formula for the induced current is not only relevant for quantum pumps and the quantum Hall effect, but also for computing the change of polarization in the piezoelectric effect. For non-interacting systems, the resulting formula in the thermodynamic limit is called the King-Smith and Vanderbilt formula \cite{KSV} and it was rigorously derived for continuous periodic systems in \cite{PST} and for random systems on a lattice in \cite{ST}. 
 
 A closely related   problem is the justification of linear response formulas in systems where the driving actually closes the gap. For example, the addition of a   uniform electric field, i.e.\ a linearly growing scalar potential, is expected to close the gap of any initially gapped  Hamiltonian. 
 While for interacting systems this problem was tackled only recently in \cite{T2}, heavily using the machinery developed in the present paper and in \cite{BDF}, 
for non-interacting systems there are numerous rigorous results (e.g.\ \cite{BES,BGKS}). For example, in~\cite{BGKS} the authors take a step towards the justification of linear response formulas for magnetic Schr\"odinger operators with random potentials, where, instead of a spectral gap, only a mobility gap is assumed for the initial Hamiltonian. A more general ``analytic-algebraic'' approach, based partly on ideas from~\cite{BES} and \cite{BGKS}, has been formalized by De Nittis and Lein in the recent monograph \cite{DNL}.
 A different approach for dealing with perturbations that close the spectral gap, but that leave a microlocal gap structure, is  based on space-adiabatic theory, see e.g.\ \cite{PST2,PST3,T,MMPT}. This approach does not apply, however, in the presence of a  mobility gap only.
 
Finally we mention a recent series of papers (see \cite{BD,BDH} and references therein) by Bru, de Siqueira Pedra, and Hertling on the derivation of a microscopic Ohm's law for interacting fermion systems at finite temperature. While their setup is quite similar to ours, they answer a different kind of question. They consider periodic systems with homogeneous randomness   initially in a thermal state at positive temperature and establish, among other things,  that the microscopic current density induced by compactly supported electro-magnetic fields has a  leading term proportional to the strength of the field with higher order terms being quadratic in the field strength uniformly in the system size. Results on the validity of linear response were obtained by Jak\v{s}i\'{c}, Ogata and Pillet, see \cite{JOP} and references therein, using a similar formalism, adapted to the context of open quantum systems.

\medskip

Our paper is structured as follows. In Section~\ref{sec:framework} we introduce  the mathematical framework for fermionic many-body Hamiltonians on a lattice.  This is mostly standard and serves to fix notation, with one exception: We introduce new spaces of  local Hamiltonians  that are localized in certain directions. This will be useful for handling observables like the charge current through a line or surface. 
 In Section~\ref{sec:AdiThm} we formulate the assumptions and the statement of our adiabatic theorem, Theorem~\ref{AdiThm}, and indicate the main steps of the proof. The application to adiabatic currents and the rigorous derivation of the adiabatic response formulas are presented in Section~\ref{sec:AdiCurrent}. 
 Section~\ref{PropProof} and Section~\ref{sec:proofAdiabatic} contain the proof of the adiabatic theorem. Finally we end with several appendices proving different technical details. 
 
\medskip

\noindent{\bf Acknowledgement.} We are grateful to Giuseppe De Nittis, Max Lein, Giovanna Marcelli,  Gianluca Panati, Felix Rexze, and Cl\'{e}ment Tauber for intensive discussions concerning closely related questions. We also profited from continual exchange with Sven Bachmann, Wojciech de Roeck, and Martin Fraas.
Finally we thank Marcello Porta for valuable hints to the literature. This work was supported by the German Research Foundation within the Research Training Group 1838 on ``Spectral theory and dynamics of quantum systems''. Financial support from the ERC Consolidator Grant 2016 ``UniCoSM -- Universality in Condensed Matter and Statistical Mechanics'' is also gratefully acknowledged.

\section{The mathematical framework} \label{sec:framework}

Let $\Gamma = \Z^d$ be the infinite lattice and  $\Lambda =\Lambda(M) := \{-\frac{M}{2}+1,\ldots, \frac{M}{2}\}^d\subset \Gamma$  the centered box of size $M$, with $M\in\N$ even. 
The map
$
 \Gamma\times \Gamma\to\Gamma$, $(x,y) \mapsto   x+y$,
makes $\Gamma$  an abelian group.
In order to  have a meaningful framework for considering currents  also in  finite systems, 
we  think of $\Lambda$ as a $d$-dimensional torus, i.e.\ as representing the quotient $\Gamma/( M\cdot\Gamma)$ of $\Gamma$ by the normal subgroup $ M\cdot\Gamma$.  
This turns also $\Lambda$ into an abelian group and we will use the notation
\[
  \Lambda\times \Lambda\to\Lambda\,,\quad  (x,y) \mapsto    x\lplus y  
\]
for the sum of elements in $\Lambda$ modulo translations in $ M\cdot\Gamma$.

The one-particle Hilbert space is $\mathfrak{h}_\Lambda = \ell^2(\Lambda, \C^\ell)$, where $\C^\ell$ describes   spin and the internal structure of the unit cell (that is, sublattice or pseudospin degrees of freedom). The $N$-particle Hilbert space is then $\Hi_{\Lambda,N} := \bigwedge_{j=1}^N \mathfrak{h}_\Lambda$, and the fermionic Fock space is denoted by $\mathfrak{F}_\Lambda = \bigoplus_{N=0}^{\ell M^d} \Hi_{\Lambda,N}$, where $\Hi_{\Lambda,0} := \C$. 
Note that all Hilbert spaces in the following are finite-dimensional and thus all operators are actually matrices.
Let $a_{i,x}$ and $a_{i,x}^*$, $i=1,\ldots,\ell$, $x\in\Gamma$, be the standard fermionic annihilation and creation operators satisfying the canonical anti-commutation relations
\[
\{ a_{i,x}, a_{j,y}^* \} = \delta_{i,j} \delta_{x,y} {\bf 1}_{\mathfrak{F}_\Lambda}\quad\mbox{and}\quad \{ a_{i,x}, a_{j,y}  \} = 0 =  \{ a_{i,x}^*, a_{j,y}^*  \}\,,
\]
where $\{a,b\}:= ab+ba$ is the anti-commutator.
While it turns out useful   in the following to write all  operators on Fock space~$\mathfrak{F}_\Lambda$, we will consider only Hamiltonians that preserve the number of particles.

For a   subset $X\subset \Lambda$ we denote by $\mathcal{A}_X\subset \mathcal{L}(\mathfrak{F}_\Lambda)$ the algebra of operators generated by the set
$\{   a_{i,x}, a_{i,x}^*\,|\, x\in X\,, i=1,\ldots, \ell\}$. Those elements of $\mathcal{A}_X$ commuting with the number operator 
\[
\mathfrak{N}_X := \sum_{x\in X} a_x^*a_x := \sum_{x\in X} \sum_{j=1}^\ell a_{j,x}^*a_{j,x}
\]
form a subalgebra $\mathcal{A}_X^\mathfrak{N}$ of $\mathcal{A}_X$ contained in the subalgebra $\mathcal{A}_X^+$ of even elements\footnote{An operator in $\mathcal{A}_X$ is called \emph{even} (resp.~\emph{odd}) if it commutes (resp.~anti-commutes) with the fermion parity operator $(-1)^{\mathfrak{N}_X}$. The subalgebra of even operators is denoted by $\mathcal{A}_X^+$.}, i.e.\ $\mathcal{A}_X^\mathfrak{N}\subset  \mathcal{A}_X^+\subset  \mathcal{A}_X $. Note that we will use the vector notation for $a_x$ as introduced above without further notice in the following.

We now come to the definition of interactions and Hamiltonians.
Let $\mathcal{F}(\Gamma) := \{ X\subset \Gamma\,|\, |X|<\infty\}$ be the set of all finite subsets of $\Gamma$. Analogously we define  also $\mathcal{F}(\Lambda) := \{ X\subset \Lambda\}$. 
An  interaction $\Phi= \{\Phi^\epslaml\}_{\epsi\in (0,1],\Lambda=\Lambda(M), \: M \in \N}$ is a family of maps  
\[
\Phi^\epslaml: \mathcal{F} (\Lambda) \to \bigcup_{X \in \mathcal{F}(\Lambda)} \mathcal{A}_X^\mathfrak{N} \,,\quad X\mapsto\Phi^\epslaml(X)\in \mathcal{A}_X^\mathfrak{N}
\]
taking values in the self-adjoint operators. Here $\epsi\in(0,1]$ is the adiabatic parameter and $\epsi$-dependent interactions and Hamiltonians will naturally appear in our analysis, typically (but not necessarily) by considering interactions which depend on the adiabatic time $\tau = \epsi t$.
The   Hamiltonian $A = \{ A^\epslaml\}_{\epsi,\Lambda}$ associated with the   interaction  $\Phi$ is the family of self-adjoint operators
\begin{equation}\label{Hamiltonian}
A^\epslaml \equiv A^\epslaml(\Phi) :=  \sum_{X \subset \Lambda} \Phi^\epslaml(X)  \in \mathcal{A}_\Lambda^\mathfrak{N}\,.
\end{equation}
One can turn the vector space of interactions into a normed space as follows (cf.\ e.g.\ \cite{NSY}). Introduce first
\[
d^\Lambda: \Lambda\times \Lambda \to \N_0\,,\quad d^\Lambda(x,y) := d( 0, y\lminus x)\,,
\]
where  $d:\Gamma\times\Gamma \to \N_0$ denotes the $\ell^1$-distance on $\Gamma$. Thus  $d^\Lambda$ is exactly the $\ell^1$-distance on the ``torus'' $\Lambda$.
Moreover, define
\[
F (r) := \frac{1}{(1+r)^{d+1}}   \qquad\mbox{and}\qquad F_\zeta(r) := \frac{\zeta(r)}{(1+r)^{d+1}}\,,
\]
where 
\begin{equation} \label{eqn:S_def}
\begin{aligned}
\zeta\in \mathcal{S} & := \{ \zeta:[0,\infty)\to (0,\infty)\,|\, \mbox{$\zeta$ is bounded, non-increasing, satisfies }\\
& \quad\quad
\zeta(r+s) \geq \zeta(r)\zeta(s)\;\mbox{ for all } r,s\in[0,\infty)\mbox{ and } \\
& \quad\quad
\sup\{ r^n\zeta(r)\,|\, r\in [0,\infty)\} <\infty \mbox{ for all } n\in\N\}\,.
\end{aligned}
\end{equation}
For $\zeta \in \mathcal{S}$, the corresponding norm on the vector space of  interactions is then given by
\[
\|\Phi\|_{\zeta,n} :=  \sup_{\epsi\in (0,1]} \sup_{\Lambda} \sup_{x,y\in\Lambda} \sum_{\substack{X \subset \Lambda:\\ \{x,y\}\subset X}} |X|^n \frac{\|\Phi^\epslaml(X)\|}{F_\zeta(d^\Lambda(x,y))} 
\]
for $n\in \N_0$. The prime example for a function $\zeta\in\mathcal{S}$ is  $\zeta(r) = \E^{-ar}$ for some $a>0$: for this specific choice of $\zeta$ we write $F_a$ and $\|\Phi\|_{a,n}$ for the corresponding norm. 

It will be important to consider also interactions that are localized in certain directions around certain locations. To this end we introduce the space of localization planes 
\[
\loc :=  \{0,1\}^d \times {\textstyle \prod}_{M=2}^\infty \Lambda(M)\,.
\]
The idea is that a point 
 $L=: ( \ell, l^{\Lambda(2)},  l^{\Lambda(4)}, \ldots)\in \loc$ defines for each $\Lambda$ a $(d-|\ell|)$-dimensional hyperplane through the point $l^\Lambda\in\Lambda$ which is parallel to the one given by $\{x_j=0 \text{ if } \ell_j=1\}$.
 Here $|\ell|:=|\{ \ell_j =1\}|$ is the number   of   constrained directions. 
  \begin{SCfigure}[1.15]
   \includegraphics[width=0.31\textwidth]{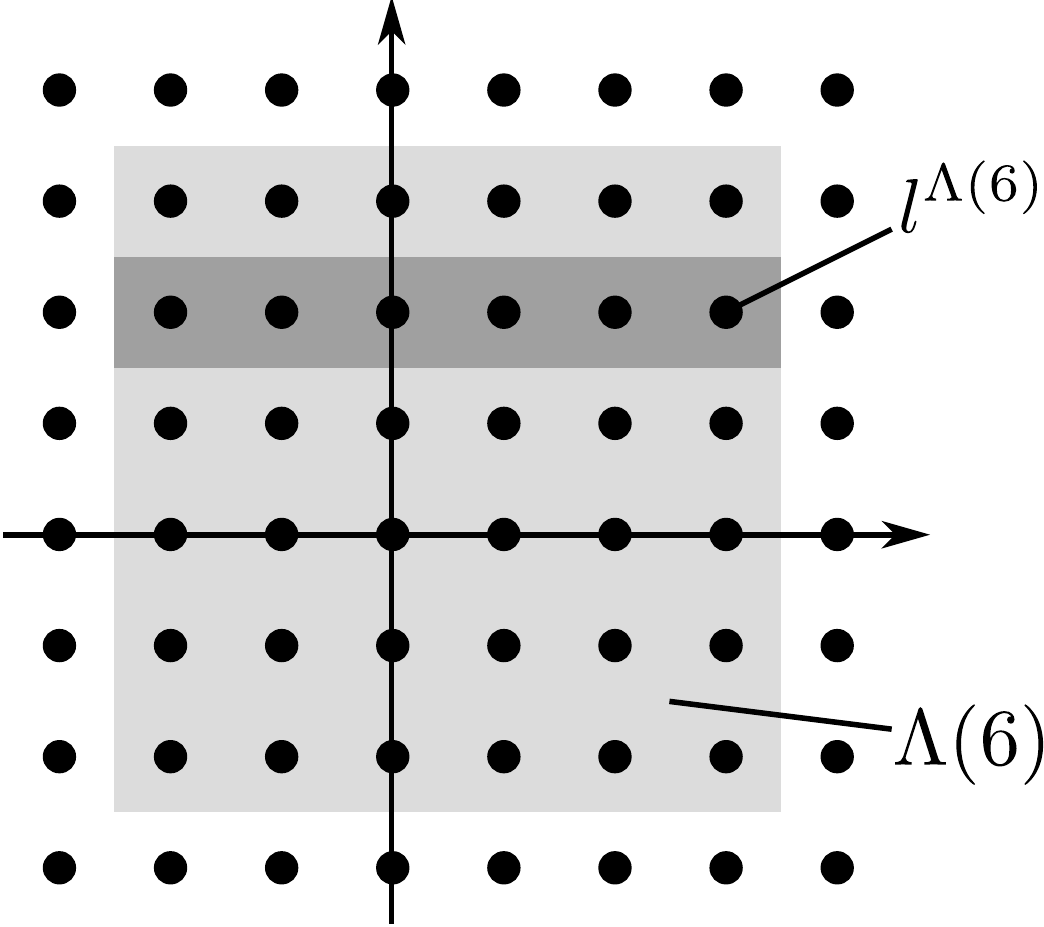}
    \caption{\small The light shaded region is the cube $\Lambda(6)$. The darker shaded region is the hyperplane defined by the localization vector $L$ with $\ell =(0,1)$ corresponding to localisation   in the $2$-direction around the point $l^{\Lambda(6)}$. An interaction with finite $\|\cdot\|_{\zeta,n,L}$-norm is then localized near this hyperplane. } 
    \label{fig:1}
\end{SCfigure}
 The 
 distance of a point $x\in\Lambda$ to this hyperplane is
 \begin{equation} \label{distxL}
 {\rm dist}(x,L) := \sum_{j=1}^d   | (x\lminus l^\Lambda)_j| \,\ell_j
 \end{equation}
 and we define a
 new ``metric'' on $\Lambda$ by 
 \[
 d^\Lambda_L:\Lambda\times\Lambda\to \N_0\,,\quad (x,y)\mapsto d^\Lambda_L(x,y) := d^\Lambda(x,y) +  {\rm dist}(x,L) + {\rm dist}(y,L) \,.
 \] 
 Note that $d^\Lambda_L$ is no longer a metric on $\Lambda$ but obviously still satisfies the triangle inequality.
 The corresponding norms are denoted by 
 \[
\|\Phi\|_{\zeta,n,L} :=  \sup_{\epsi\in(0,1]} \sup_{\Lambda} \sup_{x,y\in\Lambda} \sum_{\substack{X \subset \Lambda:\\ \{x,y\}\subset X}} |X|^n \frac{\|\Phi^\epslaml(X)\|}{F_\zeta(d^\Lambda_L(x,y))}  \,.
\]
These norms will basically always be used for the following type of estimate,
\[
\sum_{\substack{X \subset \Lambda:\\ \{x,y\}\subset X}} |X|^n  \|\Phi^\epslaml(X)\|\;\leq\; \|\Phi\|_{\zeta,n,L}\,F_\zeta(d^\Lambda_L(x,y))\,.
\] 
That means, in particular,  that  $\|\Phi^\epslaml(X)\|$ is small whenever the diameter of $X$ is large or if the distance of $X$ to $L$ is large.  This situation is illustrated by Figure~\ref{fig:1}.

A Hamiltonian $A$ with interaction $\Phi_A$ such that $\|\Phi_A\|_{\zeta,0,L}<\infty$ for some $\zeta\in\mathcal{S}$ is called \emph{local and $L$-localized}. One crucial property of local $L$-localized Hamiltonians is that the norm of the finite-size operator $A^\epslaml$
grows at most as the volume $M^{d-|\ell|}$ of its support, 
\begin{equation}\label{normgrowth}
\|A^\epslaml\|\leq C_\zeta\, \|\Phi_A\|_{\zeta,0,L} \, M^{d-|\ell|}\, \,,
\end{equation}
  cf.\ Lemma~\ref{BoundLemma} in Appendix~\ref{AppendixTech}.  
 
Let $\mathcal{B}_{\zeta,n,L}$ be the Banach space of   interactions with finite $\|\cdot\|_{\zeta,n,L}$-norm, and put
\[
 \mathcal{B}_{\mathcal{S},n,L} :=  \bigcup_{\zeta\in\mathcal{S}}  \mathcal{B}_{\zeta,n,L}\,, \qquad   \mathcal{B}_{\mathcal{E},n,L} := \bigcup_{a>0}  \mathcal{B}_{a,n,L} \,,
\]
and
\[
 \mathcal{B}_{\mathcal{S},\infty,L} := \bigcap_{n\in\N_0}  \mathcal{B}_{\mathcal{S},n,L} \,,\qquad  \mathcal{B}_{\mathcal{E},\infty,L} := \bigcap_{n\in\N_0}  \mathcal{B}_{\mathcal{E},n,L}\,.
\]
Note that $\Phi \in \mathcal{B}_{\mathcal{S},\infty,L} $ merely means that there exists a sequence $\zeta_n\in\mathcal{S}$ such that
$\Phi\in\mathcal{B}_{\zeta_n,n,L}$ for all $n\in\N_0$.
The corresponding spaces of   Hamiltonians are denoted by $\mathcal{L}_{\zeta,n,L}$, $ \mathcal{L}_{\mathcal{E},n,L}$, $\mathcal{L}_{\mathcal{E},\infty,L}$, $ \mathcal{L}_{\mathcal{S},n,L}$, and $\mathcal{L}_{\mathcal{S},\infty,L}$ respectively: that is, a   Hamiltonian $A$ belongs to $\mathcal{L}_{\zeta,n,L}$ if it can be written in the form \eqref{Hamiltonian} with an   interaction in~$\mathcal{B}_{\zeta,n,L}$, and similarly for the other spaces.  Lemma~\ref{Slemma} in Appendix~\ref{Sapp} shows that the spaces $ \mathcal{B}_{\mathcal{S},n,L} $ and thus also $ \mathcal{B}_{\mathcal{S},\infty,L} $ are indeed vector spaces.
One of the crucial features of these spaces, that will be used repeatedly in the following, is that these are in general not \emph{algebras} of operators (that is, the product of two local $L$-localized operators  need neither be local nor $L$-localized), but nonetheless are closed under taking \emph{commutators}: for example, $A\in \mathcal{L}_{\mathcal{S},\infty,L}$ and $B\in \mathcal{L}_{\mathcal{S},\infty}$ implies $\add_{A}(B) := [A,B] \in \mathcal{L}_{\mathcal{S},\infty,L}$,
 compare Lemmas~\ref{lemma:comm1} and \ref{manyadlemma} in Appendix~\ref{AppendixTech}. Note that when we don't write the index $L$, this means that $L=0 := (\vec{0},0,0,\ldots) $ and the interaction (respectively the Hamiltonian) is local but not localized in any direction.

Finally, we say that an interaction $\Phi_A$, resp.\ the corresponding Hamiltonian~$A$, is {\em uniformly finite range} if 
\[
\sup_{\epsi,\Lambda} \max \{ \operatorname{diam}(X)\,|\, X \subset \Lambda\,, \Phi^\epslaml_A (X)\not=0\}  <\infty
\]
and 
\[
\sup_{\epsi,\Lambda} \max \{ \|\Phi^\epslaml_A(X)\|\,|\, X \subset \Lambda\}  <\infty\,.
\]
Note that these conditions imply that $\Phi_A\in\mathcal{B}_{\mathcal{E},\infty}$ and thus $A\in\mathcal{L}_{\mathcal{E},\infty}$.

\section{Adiabatic theorems} \label{sec:AdiThm}

Let $\Phi_H(t)$, $t\in \R$, be a time-dependent   interaction  giving rise to a time-dependent   Hamiltonian  $H(t)$, which will be the physical Hamiltonian of the system in the following. 
A typical  example of a physically relevant family of Hamiltonians to which our results apply is the family of operators
\begin{equation}\label{ExHamiltonian}
\begin{aligned}
H^\Lambda_{TVW}(t) &=  \sum_{ (x,y)\in \Lambda^2}    a^*_x \,T(t,x\lminus y) \,a_y\; +\; \sum_{x\in\Lambda}  a^*_xV(t,x)a_x \\
&\quad +\;  \sum_{\{x,y\}\subset \Lambda }  a^*_xa_x \,W(t,d^\Lambda(x,y))\,a^*_ya_y \;-\; \mu\, \mathfrak{N}_\Lambda\,.
\end{aligned}
\end{equation}
Here the kinetic term $T(t):\Gamma \to \mathcal{L}(\C^\ell)$ is a compactly supported function with $T(t,-x) = T(t,x)^*$, the potential term $V(t):\Gamma \to \mathcal{L}(\C^\ell)$ is a bounded function taking values in the self-adjoint matrices, and the two-body interaction $W(t):\N_0\to 
\mathcal{L}(\C^\ell)$ is compactly supported and also takes values in the self-adjoint matrices. The real number $\mu\in\R$ is   the chemical potential.
Under these conditions on $T$, $V$, and $W$, the Hamiltonian $H_{TVW}(t)$ is uniformly finite range, as the interactions $\Phi_{TVW}^\Lambda(t,X)$ associated to $H_{TVW}^\Lambda(t)$ via \eqref{Hamiltonian} vanish whenever $X \subset \Lambda$ has cardinality larger than $2$.

We will now state the standing assumptions on the Hamiltonian needed to formulate our adiabatic theorems. To this end, we introduce the following norms for time-dependent interactions. For $\zeta \in \mathcal{S}$, $n \in \N_0$, $L \in \loc$, and  $T \ge 0$ let
\[
\|\Phi\|_{\zeta,n,L,T} := \sup_{|t|\leq T} \|\Phi(t)\|_{\zeta,n,L} \,.
\]

\noindent {\bf (A1)$_{ m,L_{ H }}$ Smoothness  of  the Hamiltonian and   localization of   the driving.}\\ {\em
Let $\Phi_H(t)$, $t\in\R$, be a time-dependent interaction with $\|\Phi_H\|_{a,n, T} <\infty$ for some $a>0$ and all $T\in [0,\infty)$ and $n\in\N_0$. 
Let $m\in\N$ and assume that each map $[0,\infty) \to \mathcal{A}_X^\mathfrak{N}$, $t\mapsto \Phi^\epslaml_H(t,X)$ is $(d+m)$-times differentiable. Let $\{ (\Phi^\epslaml_H)^{(k)}(t)\}_\Lambda$ be the time-dependent  interaction defined by their $k$-th derivatives,  for $1\leq k \leq d+m$. Assume that  for some localization vector $L _H\in\loc$ and all $T>0$ and $n\in\N_0$
\[
  \sup_{1\leq k \leq d+m} \| (\Phi_H)^{(k)}\|_{a,n,L_H,T} <\infty \,.
\]}

According to this assumption,  the driving, that is the region of space where the Hamiltonian varies in time, can---but need not---be localized around some lower dimensional plane.
Note also that the Hamiltonian $H^\Lambda_{TVW}(t)$  in~\eqref{ExHamiltonian} satisfies Assumption (A1)$_{m,0}$ $\equiv$ (A1)$_{m}$  whenever $T$, $V$, and $W$ are, in addition to the conditions formulated above,  $(d+m)$-times differentiable with respect to $t$. \\

\noindent {\bf (A2) Uniform gap in the spectrum.} {\em 
We assume that there exists $M_0\in\N$ such that for all $M\geq M_0$ and corresponding $\Lambda=\Lambda(M)$  the operator $H^\epslaml(t)$   has a gapped part $\sigma^\epslaml_*(t)\subset\sigma(H^\epslaml (t))$ of its spectrum with the following properties: There exist continuous functions $f^\epslaml_\pm:\R\to \R$ and  constants $g >0$ and $\delta<\infty$  such that  for all $\epsi\in (0,1]$, $\Lambda$, and $t\in\R$
\[
f^\epslaml_\pm(t)\in \rho(H^\epslaml(t))\,,\qquad [f^\epslaml_-(t), f^\epslaml_+(t)] \cap \sigma(H^\epslaml (t)) = \sigma^\epslaml_*(t)\,,
\]
\[
{\rm diam}(\sigma_*^\epslaml(t)) \leq \delta   \,,\quad\mbox{and}\quad
{\rm dist}\left( \sigma_*^\epslaml(t) , \sigma(H^\epslaml(t))\setminus \sigma_*^\epslaml(t)\right) \geq g \,.
\]
  We denote by $P_*^\epslaml(t)$ the spectral projection of $H ^\epslaml(t)$ corresponding  
to the spectrum $\sigma_*^\epslaml(t)$.}\\

To prove the existence of a uniform gap is   a nontrivial problem in general. For Hamiltonians of the form $H^\Lambda_{TVW}$, however, existence of a gap for appropriate choice of the chemical potential $\mu$ can be deduced as follows:  For $W=0$, the problem can be reduced to the spectral analysis of the underlying  one-body Hamiltonian
\[
h^\Lambda: \ell^2(\Lambda,\C^\ell)\to \ell^2(\Lambda,\C^\ell)\,,\quad (h^\Lambda\psi)(x):= \sum_{y\in\Lambda} T( x\lminus y)\psi(y)  + V(x) \psi(x)\,.
\]
 If the latter has a spectral gap and the chemical potential  $\mu$  lies in this gap, then the corresponding non-interacting many-body Hamiltonian $H^\Lambda_{TV}$ has a gapped ground state. see also Figure~\ref{fig2}.
   Moreover,  it has been shown recently     in \cite{H,DS} (and announced in \cite{NSY}) that   for  sufficiently small interactions $W$ the gap of the interacting many-body operator $H^\Lambda_{TVW}$ remains open. 
    \begin{SCfigure}[1.78] \label{fig2}
   \includegraphics[width=0.35\textwidth]{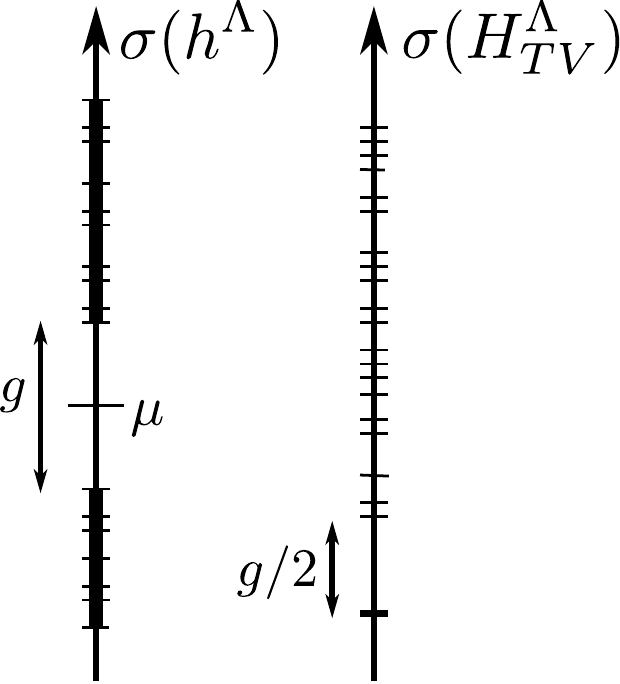}
    \caption{\small The spectrum of the one-body Hamiltonian $h^\Lambda$ restricted to $\Lambda$ is shown on the left side and assumed to have a gap of size $g$. E.g.\ for periodic one-body operators $h^\Lambda$ the eigenvalues (horizontal lines) lie in intervals, so-called spectral bands  (thick vertical lines) that are independent of $\Lambda$.  If $\mu$ lies in the middle of the gap of $h^\Lambda$, then the spectrum of the corresponding non-interacting many-body operator $H^\Lambda_{TV}$ is shown on the right and has a gap of size $g/2$ above its lowest eigenvalue, the so-called   ground state (thick horizontal line). The ground state eigenfunction of $H^\Lambda_{TV}$ is supported in the sector of Fock space with  $N$ particles, where $N$ is the number of eigenvalues of $h^\Lambda$ below $\mu$. The restriction $H^\Lambda_{TV}|_{\mathfrak{N}=N}$ of $H^\Lambda_{TV}$ to this sector even has a gap of size $g$ above its ground state.
  } 
\end{SCfigure}
\medskip

For each $M\geq M_0$ we   consider the   time evolution generated by $H^{\epsi,\Lambda}(t)$ with adiabatic scaling, i.e.\ the unitary propagator $U^{\epsi,\Lambda}(t,s)$ satisfying
\begin{equation} \label{Hdynamics}
\I\epsi \frac{\D}{\D t} U^{\epsi,\Lambda}(t,s) = H^{\epsi,\Lambda}(t) \,U^{\epsi,\Lambda}(t,s)\,, \quad U^{\epsi,\Lambda}(s,s) = {\bf 1}\;\mbox{ for }\; 
 t,s\in \R\,.
\end{equation}
We will be interested in the adiabatic limit, that is the asymptotic behavior of the solution $U^{\epsi,\Lambda}(t,s)$ for $\epsi\ll 1$. 
 
We start by first formulating  a simple leading order adiabatic theorem in the spirit of Kato \cite{Ka}. 
It will follow as a special case of a much more general superadiabatic theorem stated and proved afterwards.
First recall that
the generator of the parallel transport within the time-dependent eigenspaces is given by 
\begin{equation}\label{Kato}
  K^{\epsi,\Lambda}_\parallel (t) :=   \I\, [ \dot P_*^{\epsi,\Lambda}(t), P_*^{\epsi,\Lambda}(t)]   \,,
\end{equation}
i.e.\ the parallel transport map  $U^{\epsi,\Lambda}_\parallel (t,s)$   is the solution to 
\begin{equation}  
\I  \frac{\D}{\D t} U^{\epsi,\Lambda}_\parallel (t,s) =  K^{\epsi,\Lambda}_\parallel (t)  \,U^{\epsi,\Lambda}_\parallel(t,s)\,, \quad U^{\epsi,\Lambda}_\parallel (s,s) = {\bf 1}\;\mbox{ for }\; 
 t,s\in \R\,.
\end{equation}
Note that if the Hamiltonian $H^{\epsi,\Lambda}=H^{\Lambda}$ does not depend on $\epsi$ (which is the typical situation in adiabatic theory), then also $P_*^{\epsi,\Lambda}=P_*^{ \Lambda}$, $K^{\epsi,\Lambda}_\parallel=K^{ \Lambda}_\parallel$, and $U^{\epsi,\Lambda}_\parallel=U^{ \Lambda}_\parallel$ are independent of $\epsi$.
It is well known, and easy to check by differentiating the following equality, that the parallel transport map indeed intertwines the eigenspaces at different times,
\begin{equation}\label{adievo}
U^{\epsi,\Lambda}_\parallel (t,s) \,P^{\epsi,\Lambda}_*(s) = P^{\epsi,\Lambda}_*(t)\,U^{\epsi,\Lambda}_\parallel (t,s)\,.
\end{equation}

From now on we will drop the superscript $\epsi$ from $\epsi$-dependent operators and other quantities in order to not overburden the notation. We keep the superscript $\Lambda$ in order to distinguish a local Hamiltonian $B=\{B^\epslaml\}$ from    its elements $B^\epslam=B^\epslaml \in \mathcal{A}^\mathfrak{N}_\Lambda$ and analogously for interactions.
 \medskip

 Kato's adiabatic theorem implies, under the additional assumption that $\sigma^\epslam_*(t) = \{E^\Lambda(t)\}$ is an eigenvalue,   that on the range of $P_*^\epslam(t)$ the Heisenberg time evolution of arbitrary observables $B^\Lambda$ can be approximated by parallel transport in the following sense: For each $T>0$ there exists a constant $C_T^\Lambda$ such that for all $B^\Lambda  \in \mathcal{A}^\mathfrak{N}_\Lambda$ is holds that
\begin{equation}\label{KatoThm}
\sup_{s,t\in [-T,T]} \left \| P^\epslam_*(s)\Big(U^\epslam(s,t) \,B\,   U^\epslam(t,s)- U^\epslam_\parallel (s,t) \,B\,  U^\epslam_\parallel ( t,s)\Big) P^\epslam_*(s)\right\| \leq   \epsi\,C_T^\Lambda\, \|B^\Lambda \|\,.
\end{equation}
However, the constant $C_T^\Lambda$ grows, in general with the system size $\Lambda$. Then   the standard adiabatic theorem is of no use, as it stands, if one is interested in large $\Lambda$  or in the thermodynamic limit $|\Lambda|\to \infty$.

As part (a) of the following theorem shows, when restricting to observables that are given by local Hamiltonians, then
\eqref{KatoThm} remains ``almost'' valid with a constant $C_T$ uniform in $\Lambda$. ``Almost'', because  for $B\in \mathcal{L}_{\zeta,2,L}$ the norm $\|B^\epslam\|$ in \eqref{KatoThm}  must be replaced by  the quantity  $C_\zeta\, M^{d-|\ell|}\, \|\Phi_B\|_{\zeta,2,L} $, which, as  explained  after \eqref{normgrowth}, is expected to reflect the growth of $\|B^\epslam\|$ with $\Lambda$ correctly.
On the other hand, if the  time-dependence of the Hamiltonian is also spatially localised, then we can improve on \eqref{KatoThm} by showing that the error grows not like the size of the ``support'' of $B^\epslam$, but only with the size of the intersection of the supports of $B^\epslam$ and~$\dot H^\epslam$, cf.\ Figure~\ref{fig3}. Moreover, part (b) of the following theorem shows that the standard first order corrections to the adiabatic approximation yield an order $\epsi^2$  approximation also in the present setting.

\begin{theorem}\label{AdiThm0} {\bf (Adiabatic Theorem: Leading orders for eigenvalues)}\\   Let the Hamiltonian $H$ satisfy Assumptions  (A1)$_{1,L_H}$ and (A2) for some   $L_H  \in\loc$. Assume      that $\sigma^\epslam_*(t) = \{E^\epslam_*(t)\}$ is an eigenvlaue and that $\frac{\D^n}{\D t^n} H^\Lambda(0)=0$ for all $n=1,\ldots,d+1$.

 Then for any    $T>0$, $\zeta\in\mathcal{S}$, and localization vector $L\in\loc$ with $\ell\cdot \ell_H =0$ there exists a constant $C$, independent of $\epsi$ and $\Lambda$,   such that for any    $B\in \mathcal{L}_{\zeta,2,L}$    the following holds:   
 \begin{itemize}
 \item[(a)] {\rm Adiabatic approximation:}
 \begin{eqnarray}\label{ATstatement0}\lefteqn{\hspace{-40pt}
 \sup_{t\in [-T,T]}\;   \Big\| P^\epslam_*(0) \Big( U^\epslam(0,t) \, B^\epslam \, U^\epslam(t,0) - U^\epslam_\parallel(0,t)\, B^\epslam \,U^\epslam_\parallel(t,0)\Big)P^\epslam_*(0)
 \Big\|}\nonumber\\ &  &\hspace{180pt}\leq \;\epsi \,C\,M^{d-|\ell|-|\ell_H|} \,\|\Phi_B\|_{\zeta,2,L}  \,.
\end{eqnarray}
\item[(b)] {\rm First order superadiabatic approximation:}
\begin{equation}\label{ATstatement11} 
 \sup_{t\in [-T,T]}\;   \Big\| P^\epslam_*(0) \Big( U^\epslam(0,t) \, B^\epslam \, U^\epslam(t,0) -  B^\epslam_{\parallel\,(1)}(t)\Big)P^\epslam_*(0)
 \Big\| \leq \;\epsi^2 \,C\,M^{d-|\ell|-|\ell_H|} \,\|\Phi_B\|_{\zeta,2,L}  \,,
\end{equation}
where
\begin{eqnarray}\label{B1formel}
B^\epslam_{\parallel\,(1)}(t) & :=&   U^\epslam_{\parallel\,(1)} (0,t)\, B^\epslam \,U^\epslam_{\parallel\,(1)}(t,0)\\
&&  +\;  \I\,\epsi\,  U^\epslam_{\parallel}(0,t ) \left( B^\epslam R^\epslam_*(t)\dot P^\epslam_*(t)   - \dot P^\epslam_*(t)  R^\epslam_*(t) B^\epslam\right)  U^\epslam_{\parallel}(t,0 )\nonumber
\end{eqnarray}
and
  $U^\epslam_{\parallel\,(1)}(t,s)$ is the solution of the modified parallel transport equation 
\begin{equation}  
\I  \frac{\D}{\D t} U^\epslam_{\parallel\,(1)} (t,s) =  \left( K^\epslam_\parallel (t) + \epsi  \, \dot P_*^\epslam(t) R_*^\epslam(t) \dot P_*^\epslam(t)   \right) U^\epslam_{\parallel\,(1)}(t,s)\,, \quad U^\epslam_{\parallel\,(1)}(s,s) = {\bf 1} \,,
\end{equation}
with $R_*^\epslam(t) := (H^\epslam(t) -E^\epslam_*(t))^{-1} ({\bf 1} - P^\epslam_*(t))$ denoting the reduced resolvent.
\end{itemize}
\end{theorem}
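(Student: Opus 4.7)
The plan is to deduce Theorem~\ref{AdiThm0} as the low-order manifestation of a general superadiabatic expansion in the spirit of~\cite{BDF}. For each $n\in\N_0$ I would construct a family of $L_H$-localized Hamiltonians $K^\epslam_j(t)$, $0\leq j\leq n$, with $K^\epslam_0 = K^\epslam_\parallel$ and $K^\epslam_1(t) = \dot P^\epslam_*(t)\,R^\epslam_*(t)\,\dot P^\epslam_*(t)$, together with an almost-invariant superadiabatic projector $P^\epslam_n(t) = P^\epslam_*(t) + \sum_{j=1}^n \epsi^j\,\Pi^\epslam_j(t)$, all with interactions in $\mathcal{B}_{\mathcal{S},\infty,L_H}$ uniformly in $\epsi$ and $\Lambda$, such that
\[
\I\epsi\,\frac{\D}{\D t} P^\epslam_n(t) = \Big[H^\epslam(t) + \epsi\sum_{j=0}^{n-1}\epsi^j K^\epslam_j(t)\,,\, P^\epslam_n(t)\Big] + \epsi^{n+1}\, R^\epslam_n(t),
\]
with remainder $R^\epslam_n$ again in $\mathcal{L}_{\mathcal{S},\infty,L_H}$. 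The construction proceeds iteratively: at each step one decomposes the previous error into $P^\epslam_*$-block-diagonal and off-diagonal pieces, inverts the Liouvillian $\add_{H^\epslam(t)}$ on the latter through the quasi-adiabatic flow of Hastings (which exploits the gap assumption (A2) and preserves quasi-locality via Lieb--Robinson bounds), and absorbs the block-diagonal remainder into the next $K^\epslam_j$. The assumption $\D^k_t H^\Lambda(0)=0$ for $1\leq k\leq d+1$ forces $\Pi^\epslam_j(0)=0$ and $K^\epslam_j(0)=0$ for every relevant~$j$, so $P^\epslam_n(0)=P^\epslam_*(0)$ and, at the level $n=2$ needed for part~(b), the superadiabatic propagator coincides at $t=0$ with $U^\epslam_{\parallel\,(1)}$.

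With these objects in hand, let $\widetilde U^\epslam_n(t,s)$ be the unitary generated by $\sum_{j=0}^{n-1}\epsi^j K^\epslam_j$, and let $V^\epslam_n(t,s)$ be the one generated by $H^\epslam(t) + \epsi\sum_{j=0}^{n-1}\epsi^j K^\epslam_j(t)$ with the adiabatic scaling $\I\epsi\partial_t$. The almost-invariance identity then gives the intertwining $V^\epslam_n(t,s)\,P^\epslam_n(s) = P^\epslam_n(t)\,V^\epslam_n(t,s) + O(\epsi^n)$ in the relevant local norm. A standard Duhamel computation yields
\[
U^\epslam(0,t)\,B^\Lambda\,U^\epslam(t,0) - \widetilde U^\epslam_n(0,t)\,B^\Lambda\,\widetilde U^\epslam_n(t,0) = \frac{1}{\I\epsi}\int_0^t \widetilde U^\epslam_n(0,r)\,\Big[G^\epslam_n(r)\,,\,U^\epslam(r,t)\,B^\Lambda\,U^\epslam(t,r)\Big]\,\widetilde U^\epslam_n(r,0)\,\D r,
\]
where $G^\epslam_n = H^\epslam - \epsi\sum_{j=0}^{n-1}\epsi^j K^\epslam_j$. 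Sandwiching between $P^\epslam_*(0)$'s and using the intertwining to convert the explicit $1/\epsi$ prefactor into $\epsi^n$ produces an error of order~$\epsi^n$. For part~(a), $n=1$ with $K^\epslam_1$ discarded yields the $\epsi$ bound, while $n=2$ gives the $\epsi^2$ bound of~(b); the formula \eqref{B1formel} for $B^\epslam_{\parallel\,(1)}(t)$ arises simply by expanding the superadiabatic Heisenberg evolution of $B^\Lambda$ to first order in~$\epsi$.

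The sharp volume dependence $M^{d-|\ell|-|\ell_H|}$ comes out at the locality estimate. Each term arising in the Duhamel integrand is a commutator of an $L_H$-localized Hamiltonian (built from $H^\epslam$ and the $K^\epslam_j$) with the $L$-localized observable $B^\Lambda$. By the commutator closure results of Lemmas~\ref{lemma:comm1} and~\ref{manyadlemma}, combined with the hypothesis $\ell\cdot\ell_H=0$, such a commutator has an interaction in $\mathcal{B}_{\mathcal{S},\infty,L''}$ where $L''$ localizes on the intersection hyperplane of dimension $d-|\ell|-|\ell_H|$. The volume estimate~\eqref{normgrowth} then produces the prefactor $M^{d-|\ell|-|\ell_H|}\,\|\Phi_B\|_{\zeta,2,L}$, uniformly in $\epsi$ and~$\Lambda$.

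The main obstacle will be the locality-preserving construction of the superadiabatic tower: showing that each iterative application of the inverse Liouvillian, realized via the quasi-adiabatic flow, leaves the corrections in $\mathcal{B}_{\mathcal{S},\infty,L_H}$ with bounds uniform in $\Lambda$ and~$\epsi$. This is precisely where the machinery of~\cite{BDF} (Lieb--Robinson bounds to control the spatial spreading of the flow, together with careful bookkeeping of how the decay class $\zeta\in\mathcal{S}$ degrades under composition) has to be invoked and adapted to the fermionic and $L$-localized setting developed in Section~\ref{sec:framework}. Once this locality control is in place, the Duhamel argument and the volume counting are essentially routine.
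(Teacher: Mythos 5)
The central gap is the claimed locality of your tower. You set $K^\epslam_0 = K^\epslam_\parallel$ and $K^\epslam_1 = \dot P^\epslam_* R^\epslam_* \dot P^\epslam_*$ and assert that all the $K^\epslam_j$ have interactions in $\mathcal{B}_{\mathcal{S},\infty,L_H}$ uniformly in $\epsi,\Lambda$. But Kato's generator $K_\parallel = \I[\dot P_*,P_*]$ is built from the spectral projection and is \emph{not} a sum of local terms, and the same applies to $\dot P_* R_* \dot P_*$; only its $P_*(\cdot)P_*$ block appears in the paper, as part of a local $K_2$. This is exactly the dilemma the paper resolves in Proposition~\ref{SuperLemma}: the local object is $A_1=-\mathcal{I}_H(\mathcal{I}_H(\dot H))$ with $K_1=-\ad_H(A_1)$, which agrees with $K_\parallel$ only in its off-diagonal part and carries a nontrivial diagonal correction that restores locality (Lemma~\ref{AKlem}, properties (I1)--(I4)). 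The genuinely non-local generator $K^\epslam_{(k)}$ (containing $K_\parallel$ and $P_*K_2P_*$) enters only in a second, separate comparison, Theorem~\ref{expThm}(b), carried out on the range of $P_*$, where the difference of generators sits in the $P_*^\perp(\cdot)P_*^\perp$ block and no adiabatic time-scaling occurs, so the weaker (non-exponential) part of Lemma~\ref{CommuLemma} suffices. Your single Duhamel comparison between $U^\epslam$ and the evolution generated by $\sum_j\epsi^jK^\epslam_j$ conflates these two steps; since the $K_j$ you propose are not local Hamiltonians, the Lieb--Robinson-based commutator bounds that you invoke to produce the factor $M^{d-|\ell|-|\ell_H|}$ are not available for that dynamics.

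Two further points. First, the order counting: each application of Lemma~\ref{CommuLemma} along the adiabatically scaled evolution costs a factor $\epsi^{-d}$ (microscopic times are of order $|t|/\epsi$), which is why the paper builds the expansion to order $m+d$ and why the hypothesis requires $\frac{\D^n}{\D t^n}H^\Lambda(0)=0$ for $n=1,\ldots,d+1$; with your tower truncated at $n=2$ the Duhamel remainder comes out only as $\Or(\epsi^{2-d})$, not $\Or(\epsi^2)$. Second, in the Duhamel identity you write, the generator difference $G^\epslam_n$ contains $H^\epslam$ itself, so the integrand is $\Or(1)$; sandwiching with $P_*$ removes the $H-E_*$ part (for $\delta=0$) but leaves $P_*[K_\parallel,\cdot\,]P_*$ at order $\epsi^0$ against the explicit $1/\epsi$, so the announced step ``use the intertwining to convert $1/\epsi$ into $\epsi^n$'' is precisely where the proof is missing. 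The mechanism in the paper is the dressing $V^\epslam=\E^{\I\epsi S^\epslam}$, constructed so that the superadiabatic evolution is generated by $H^\epslam+\epsi^{m+d+1}R^\epslam$ with $R\in\mathcal{L}_{\mathcal{S},\infty,L_H}$, eq.~\eqref{supadieq}; only then does the Duhamel integrand carry an explicit high power of $\epsi$ together with an $L_H$-localized local Hamiltonian, to which Lemma~\ref{CommuLemma} and the summation via Lemma~\ref{distLemma} (yielding $M^{d-|\ell|-|\ell_H|}$) apply, and the explicit formulas \eqref{B1formel} then come from Theorem~\ref{expThm} and Corollary~\ref{ExpCor} evaluated at $s=0$ where $V^\epslam(0)={\bf 1}$ and $P^\epslam_{\rm sa}(0)=P^\epslam_*(0)$. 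Without reproducing that construction and the two-step expansion, the proposal does not yield either part of the theorem.
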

  \begin{SCfigure}[1.4] \label{fig3}
   \includegraphics[width=0.31\textwidth]{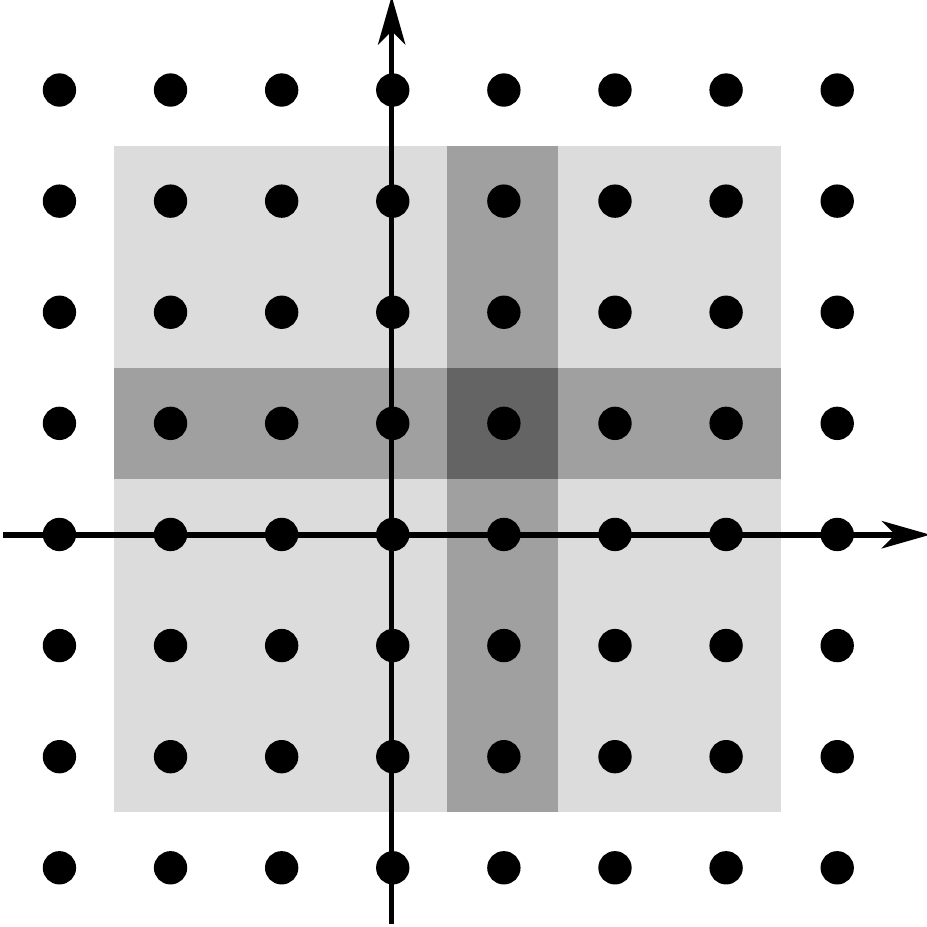}
    \caption{\small In the figure we have $d=2$, $|\ell|=1$,   $|\ell_H|=1$, and $\ell\cdot\ell_H=0$. That means  the support of $B^\Lambda$ is  localised around a line, say the horizontal shaded one,  and the support of $\dot H^\Lambda$ is   localised around   the vertical one. The intersection of the two regions, the darker shaded area, does not grow with $\Lambda$ and thus  the factor $M^{d-|\ell|-|\ell_H|} = M^0=  1$ in the error estimates in Theorem~\ref{AdiThm0} does not grow with $\Lambda$ either.
    } 
\end{SCfigure}
In Section~\ref{sec:AdiCurrent} we will use part (b) of Theorem~\ref{AdiThm0} to prove  response formulas for adiabatic currents uniformly in the system size.
The proof of Theorem~\ref{AdiThm0} is based on  a   general superadiabatic theorem that we formulate now and that generalises the above result in several ways. First of all, we replace the parallel transport $U^\epslam_\parallel(t)$ by a superadiabatic time-evolution $U^\epslam_{\rm sa}(t)$ and thereby obtain an error estimate of order~$\epsi^m$ on the whole space.
We also allow for more general gapped parts of the spectrum instead of only  eigenvalues.   Finally, we prove explicit asymptotic expansions when restricting to the range of $P_*^\epslam(t)$, which, as a special case, yield  the statements of Theorem~\ref{AdiThm0}.  

The superadiabatic evolution is constructed from two ingredients. The first one is a modified parallel transport $U^\epslam_{\rm a}$, called the  adiabatic evolution in the following, satisfying 
\begin{equation}\label{adiabaticevolution}
\I\epsi \frac{\D}{\D t} U^\epslam_{\rm a}(t,s) = H^\epslam_{\rm a}(t) \,U^\epslam_{\rm a}(t,s)\,, \quad U^\epslam_{\rm a}(s,s) = {\bf 1}\;\mbox{ for }\; 
 t,s\in \R\,,
\end{equation}
with generator 
\begin{equation}\label{HaDef}
H^\epslam_{\rm a}(t) := H^\epslam(t)   + \epsi K^\epslam(t)\,.
\end{equation}
 Note that we now need   to incorporate $H^\epslam(t) $ in the   effective evolution since we no longer restrict ourselves to spectral subspaces corresponding to a single eigenvalue (where $ H^\epslam(t)P^\epslam_*(t)$ acts trivially by multiplication with the eigenvalue).  
 Moreover, the generator $K^\epslam_\parallel (t)$ of parallel transport is  replaced  by $K^\epslam(t)\equiv K^\epslaml(t)$ in such a way   that   $[ K^\epslam_\parallel (t) - K^\epslam(t), P^\epslam_*(t)]=0$ and thus
  $U^\epslam_{\rm a}$  still intertwines the spectral subspaces exactly,  
 \begin{equation}\label{adievo2}
U^\epslam_{\rm a} (t,s) \,P^\epslam_*(s) = P^\epslam_*(t)\,U^\epslam_{\rm a} (t,s)\,.
\end{equation}
The second ingredient is
the superadiabatic near-identity transformation  $V^\epslam(t)$ that unitarily maps the spectral projection  $P_*^\epslam(t)$ to the so-called superadiabatic projection 
\begin{equation}\label{superadipro}
P^\epslam_{\rm sa}(t)  := V^\epslam(t)\, P_*^\epslam(t)\,V^\epslam(t)^*\,.
\end{equation}
It is well known (see e.g.\ \cite{Nenciu} and references therein) that the  full evolution $U^\epslam $ generated by $H^\epslam(t)$ intertwines the instantaneous spectral subspaces ran$P^\epslam_*(t)$ only up to errors of order $\epsi$, i.e.\ that the leakage out of these subspaces is of order $\epsi$ and the error term in \eqref{KatoThm} can not be improved in general. To obtain improved error estimates one has to track solutions within the superadiabatic subspaces ran$P^\epslam_{\rm sa}(t)$ that are intertwined by $U^\epslam (t)$ up to much smaller errors.

With these two ingredients  the superadiabatic evolution is defined by
\begin{equation}\label{superadi}
U^\epslam_{\rm sa}(t,s) := V^\epslam(t)\, U^\epslam_{\rm a}(t,s)\, V^\epslam(s)^* 
\end{equation}
and intertwines, by construction,  the superadiabatic subspaces,
 \begin{equation}\label{superadievo2}
U^\epslam_{\rm sa} (t,s) \,P^\epslam_{\rm sa}(s) = P^\epslam_{\rm sa}(t)\,U^\epslam_{\rm sa} (t,s)\,.
\end{equation}
In Proposition~\ref{SuperLemma} we will construct  $K^\epslam(t)$ and $V^\epslam(t)$ in such a way that
 \begin{equation}\label{supadieqvorn}
 \I\epsi    \frac{\D}{\D t}
 U^\epslam_{\rm sa} (t,s)   =\Big( H^\epslam(t) + \epsi^{m+d+1}  \,R^\epslam(t)\Big)U^\epslam_{\rm sa} (t,s) \,,
   \end{equation}
where the remainder term $R^\epslam(t)$ is a local Hamiltonian. Note that  \eqref{supadieqvorn} implies immediately that
\[
\|   U^\epslam  (t,s)  -  U^\epslam_{\rm sa} (t,s)   \| = \Or(\epsi^{m+d } \|R^\epslam\|)= \Or(\epsi^{m+d } \,|\Lambda|)\,,
\] 
since, as a local Hamiltonian,  $\|R^\epslam\|$ grows at most like $|\Lambda|$.
Again, the following theorem shows that when considering the Heisenberg time evolution of  local Hamiltonians $B$, then the factor $|\Lambda|$ coming from $\|R^\epslam\|$ is absent and the only growth of the approximation error with $|\Lambda|$ comes from $\|B^\epslam\|$.

\begin{theorem}\label{AdiThm}{\bf  (Superadiabatic Theorem) }\\
 Let $H$ satisfy Assumptions (A1)$_{m,L_H}$ and (A2) for some   $m\in\N$ and $L_H  \in\loc$.  
 There exist local Hamiltonians $   S  (t),   K  (t) \in \mathcal{L}_{\mathcal{S},\infty,L_H}$, $t\in \R$,
 such that
\begin{itemize}
\item  $t\mapsto   S^\epslam(t) , K^\epslam(t)  \in C^{1} (\R,  \mathcal{A}^\mathfrak{N}  _\Lambda)$ for all $\epsi,\Lambda$, and
\item the adiabatic evolution $U^\epslam_{\rm  a} (t,s)$ defined in \eqref{adiabaticevolution}  satisfies \eqref{adievo2} and the superadiabatic evolution $U^\epslam_{\rm sa} (t,s)$ defined in \eqref{superadi} 
 with  
 \begin{equation}\label{VDef}
 V^\epslam(t) := \E^{\I\epsi S^\epslam(t)}
 \end{equation}
  has the following properties:
\end{itemize}
 
For any    $T>0$, $\zeta\in\mathcal{S}$, and $L\in\loc$ with $\ell\cdot \ell_H =0$ there exists a constant~$C$, independent of $\epsi$ and $\Lambda$,   such that for any     $B\in \mathcal{L}_{\zeta,2,L}$           it holds that 
   \begin{equation}\label{ATstatement}
   \sup_{t,s\in [-T,T]}\;   \Big\|   U^\epslam(s,t) \, B^\epslam \, U^\epslam(t,s) - U^\epslam_{\rm sa}(s,t) \,B^\epslam\, U^\epslam_{\rm sa}(t,s)
 \Big\| \leq \epsi^m \,C\,M^{d-|\ell|-|\ell_H|} \,\|\Phi_B\|_{\zeta,2,L}  \,.
\end{equation}
If at some time $t'\in \R$ it holds that $\frac{\D^n}{\D t^n} H^\Lambda(t')=0$ for all $n=1,\ldots, m+d$, then 
$S^\epslam(t') = K^\epslam(t') =0$ and thus, in particular, $V^\epslam(t') = \mathrm{id}$ and $P_{\rm sa}^\epslam(t') = P_*^\epslam(t')$.
\end{theorem}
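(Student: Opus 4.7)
The plan is to construct $K^\epslam(t)$ and $S^\epslam(t)$ iteratively as finite expansions in $\epsi$, adapting the superadiabatic scheme of \cite{BDF} from spin systems to fermion systems and carefully tracking $L_H$-localization at every step, and then to reduce \eqref{ATstatement} to a Duhamel integral bounded by a commutator norm estimate tailored to the transverse geometry $\ell\cdot\ell_H=0$.

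The central analytic ingredient is a quasi-local pseudo-inverse $\mathcal{I}^\epslam(t)$ of the Liouvillian $\add_{H^\epslam(t)}=[H^\epslam(t),\cdot]$ on its off-diagonal subspace, i.e.\ operators $A$ with $P_*^\epslam A P_*^\epslam=0=(\mathbf{1}-P_*^\epslam)A(\mathbf{1}-P_*^\epslam)$. Following the Hastings-type construction used in \cite{BDF}, one can put
\[
\mathcal{I}^\epslam(t)[A] := \int_\R w_g(s)\,\E^{\I H^\epslam(t)s}\,A\,\E^{-\I H^\epslam(t)s}\,\D s,
\]
with $w_g\in L^1(\R)$ whose Fourier transform satisfies $\widehat{w_g}(E)=\I/E$ for $|E|\geq g$ and which decays faster than any polynomial. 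Assumption (A2) then yields $\add_{H^\epslam(t)}[\mathcal{I}^\epslam(t)[A]]=A$ on the off-diagonal subspace, and the Lieb--Robinson bound for $H^\epslam$ combined with the super-polynomial decay of $w_g$ shows that $\mathcal{I}^\epslam(t)$ maps $\mathcal{L}_{\mathcal{S},\infty,L_H}$ into itself. I would then make the ansatz $S^\epslam(t)=\sum_{j=0}^{m+d-1}\epsi^j S_j^\epslam(t)$ and $K^\epslam(t)=K_\parallel^\epslam(t)+\sum_{j=1}^{m+d-1}\epsi^j K_j^\epslam(t)$, insert into the requirement that $U^\epslam_{\rm sa}(t,s)=\E^{\I\epsi S^\epslam(t)}U^\epslam_{\rm a}(t,s)\E^{-\I\epsi S^\epslam(s)}$ be generated by $H^\epslam(t)$ up to order $\epsi^{m+d+1}$, and match powers of $\epsi$. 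At each order the diagonal part of the accumulated remainder defines $K_j^\epslam$ (which commutes with $P_*^\epslam$, securing \eqref{adievo2}), while the off-diagonal part is cancelled by taking $S_{j+1}^\epslam:=\mathcal{I}^\epslam(\cdot)$ of that part. Closure of $\mathcal{L}_{\mathcal{S},\infty,L_H}$ under commutators (Lemmas \ref{lemma:comm1} and \ref{manyadlemma}), the regularity from (A1)$_{m,L_H}$, and the locality preservation of $\mathcal{I}^\epslam$ give by induction $S^\epslam,K^\epslam\in C^1(\R,\mathcal{L}_{\mathcal{S},\infty,L_H})$. Since each $S_j^\epslam,K_j^\epslam$ is assembled from $(\partial_t^k H^\epslam)_{k\geq 1}$ through $\mathcal{I}^\epslam$, the hypothesis $\partial_t^n H^\Lambda(t')=0$ for $n=1,\ldots,m+d$ forces $S^\epslam(t')=K^\epslam(t')=0$.

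With the construction above, \eqref{supadieqvorn} holds for a remainder $R^\epslam(t)\in\mathcal{L}_{\mathcal{S},\infty,L_H}$ of uniformly bounded interaction norm. Duhamel's principle then gives
\[
U^\epslam_{\rm sa}(s,t)B^\epslam U^\epslam_{\rm sa}(t,s)-U^\epslam(s,t)B^\epslam U^\epslam(t,s) = \I\,\epsi^{m+d}\int_s^t U^\epslam_{\rm sa}(s,u)\,[R^\epslam(u),\widetilde B^\epslam(u,t)]\,U^\epslam_{\rm sa}(u,s)\,\D u,
\]
where $\widetilde B^\epslam(u,t):=U^\epslam(u,t)B^\epslam U^\epslam(t,u)$. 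This object is no longer a local Hamiltonian in the strict sense, but a Lieb--Robinson bound for $H^\epslam$ approximates it in operator norm by a local $L$-localized Hamiltonian with interaction norm controlled by $\|\Phi_B\|_{\zeta',2,L}$ for some $\zeta'\in\mathcal{S}$, the approximation error being dominated by the $\epsi^{m+d}$ prefactor. Since $\ell\cdot\ell_H=0$, the $(d-|\ell|)$- and $(d-|\ell_H|)$-dimensional localizing hyperplanes of $B^\epslam$ and $R^\epslam$ intersect in a $(d-|\ell|-|\ell_H|)$-dimensional region, so the commutator estimate for transversely localized sums of local terms from Appendix \ref{AppendixTech} gives
\[
\|[R^\epslam(u),\widetilde B^\epslam(u,t)]\|\leq C\,M^{d-|\ell|-|\ell_H|}\,\|\Phi_B\|_{\zeta,2,L}
\]
uniformly in $u\in[-T,T]$, and integrating produces \eqref{ATstatement}.

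The hard part will be the faithful preservation of $L_H$-localization through every step of the construction: showing that $\mathcal{I}^\epslam(t)$ really maps $\mathcal{L}_{\mathcal{S},\infty,L_H}$ into itself requires a Lieb--Robinson estimate sensitive to the directional weight hidden in $d^\Lambda_{L_H}$, and iterating the recursion forces an (acceptable) degradation of the decay function $\zeta\in\mathcal{S}$ that must be controlled quantitatively through each order. Combined with the commutator estimate for sums of local terms localized around distinct hyperplanes---which is the geometric mechanism converting the extensivity of both $R^\epslam$ and $B^\epslam$ into the small overlap factor $M^{d-|\ell|-|\ell_H|}$---this is the technical heart of the argument that delivers uniformity in $\Lambda$ and motivates the introduction of the norms $\|\cdot\|_{\zeta,n,L}$ in Section \ref{sec:framework}.
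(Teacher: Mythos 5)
Your overall architecture is the same as the paper's: construct $S^\epslam$ and $K^\epslam$ order by order using a quasi-local inverse $\mathcal{I}_{H}$ of the Liouvillian, obtain \eqref{supadieqvorn} with a local $L_H$-localized remainder, and then run a Duhamel argument combined with Lieb--Robinson-based commutator estimates and the transverse-localization summation to get the factor $M^{d-|\ell|-|\ell_H|}$. However, there are two genuine gaps. First, your ansatz $K^\epslam(t)=K^\epslam_\parallel(t)+\sum_{j\ge 1}\epsi^j K_j^\epslam(t)$ puts Kato's generator directly into the generator of the adiabatic evolution; but $K_\parallel=\I[\dot P_*,P_*]$ is \emph{not} a local ($L_H$-localized) Hamiltonian in general, so this choice contradicts the very statement you are proving ($K(t)\in\mathcal{L}_{\mathcal{S},\infty,L_H}$) and, worse, it propagates non-locality into the remainder $R^\epslam$, whose membership in $\mathcal{L}_{\mathcal{S},\infty,L_H}$ is indispensable for your Duhamel step. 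The paper resolves exactly this dilemma by taking $A_1:=-\mathcal{I}_H(\mathcal{I}_H(\dot H))$ and $K_1:=-\ad_H(A_1)$, which is local by construction, and proving in Lemma~\ref{AKlem} that $K_1^{\rm OD}=K_\parallel$, which suffices for the intertwining \eqref{adievo2}; the extra diagonal piece of $K_1$ is precisely what restores locality. Without this device your induction does not stay inside $\mathcal{L}_{\mathcal{S},\infty,L_H}$.

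Second, your key quantitative claim, that $\|[R^\epslam(u),\widetilde B^\epslam(u,t)]\|\leq C\,M^{d-|\ell|-|\ell_H|}\|\Phi_B\|_{\zeta,2,L}$ uniformly in $u\in[-T,T]$, is false as stated: $U^\epslam(u,t)$ is the adiabatically scaled evolution, so each local term $\Phi_B^\epslam(X)$ spreads over a Lieb--Robinson cone of radius $\sim v|t-u|/\epsi$, and the correct estimate (the paper's Lemma~\ref{CommuLemma}, applied term by term to the local pieces of $B$ via a duality with rank-one projections) carries the light-cone volume factor $(1+\epsi^{-d}|t-u|^d)$. This loss of $\epsi^{-d}$ is the whole reason the expansion must be pushed to order $m+d$: the prefactor $\epsi^{m+d}$ from the remainder pays the $\epsi^{-d}$ and leaves the rate $\epsi^m$ in \eqref{ATstatement}. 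You do build the expansion to order $m+d$, but you never connect that choice to the cone spreading, and your uniform commutator bound, if it were true, would yield the rate $\epsi^{m+d}$, which this construction does not give. Repairing the argument requires replacing your ``approximate $\widetilde B^\epslam$ by an $L$-localized Hamiltonian'' step by the conditional-expectation decomposition of each evolved local term, keeping track of the $(1+\epsi^{-d}|t-u|^d)$ growth, and only then performing the sum $\sum_{x\in\Lambda}\zeta({\rm dist}(x,L))\,\zeta({\rm dist}(x,L_H))\lesssim M^{d-|\ell|-|\ell_H|}$ that exploits $\ell\cdot\ell_H=0$.
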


\begin{remarks}\rm
\begin{enumerate}
\item As an immediate consequence of \eqref{ATstatement} we find for any initial state $\rho_0^\Lambda$  (at initial time $t=s$) the full evolution is well approximated by the superadiabatic one when testing against observables $B^\Lambda$ given by local Hamiltonians in the appropriate trace ``per unit volume'':
\[
 \sup_M  \;\sup_{|t|\leq T}\;  \frac{1}{M^{d -|\ell|-|\ell_H|}} \left|  \tr  \left(  \left(U^\epslam(t,s) \rho_0^\Lambda U^\epslam(s,t) - U^\epslam_{\rm sa}(t,s) \rho_0^\Lambda  U^\epslam_{\rm sa}(s,t)^*
\right)    B^\epslam\right)\right| \leq  \epsi^m \,C \,\|\Phi_B\|_{\zeta,2,L}  \,.
\]
In particular, if initially $\rho_0^\Lambda = P_{\rm sa}^\epslam(s) \rho_0^\Lambda P_{\rm sa}^\epslam(s)$ lives in the range of $P_{\rm sa}^\epslam(s)$, then $\rho ^\Lambda(t) := U^\epslam(t,s) \rho_0^\Lambda U^\epslam(s,t)$ is well approximated by the state $\rho_{\rm sa}^\Lambda(t):= U^\epslam_{\rm sa}(t,s) \rho_0^\Lambda  U^\epslam_{\rm sa}(s,t)$ that lives in the range of $P_{\rm sa}^\epslam(t)$, i.e.\ satisfies $\rho_{\rm sa}^\Lambda(t) = P_{\rm sa}^\epslam(t) \rho_{\rm sa}^\Lambda(t) P_{\rm sa}^\epslam(t)$. In this sense the superadiabatic subspaces are almost invariant for the time-evolution generated by $H^\Lambda(t)$.
\item
The superadiabatic projection $ P^\epslam_{\rm sa}(t)$ agrees at any order with the superadiabatic projection constructed by the standard adiabatic expansion, cf.\ e.g.\ \cite{Nenciu}. The crucial novelty here,  as in \cite{BDF}, is that the error terms are uniform in the system size $|\Lambda|$ even for interacting systems.
 \item One might wonder why it is of any interest to replace the full time evolution $U^\epslam(t)$ by the superadiabatic time evolution $U^\epslam_{\rm sa}(t)$, although the latter seems even more complicated. Because the superadiabatic time evolution intertwines the superadiabatic subspaces exactly, cf.\ \eqref{superadievo2},   one can restrict the  superadiabatic time evolution to these subspaces. Moreover, if the range of $P^\Lambda_*(t)$ has finite dimension uniformly in $\Lambda$, then also $ P^\epslam_{\rm sa}(t)$ has finite-dimensional range and the action of
   $H^\Lambda_{\rm sa} (t) $ on it might be   more accesible than that of $H^\Lambda(t)$ on the full Hilbert space.
\item \label{compremark}As was mentioned above, the proof of Theorem~\ref{AdiThm}  given below  is based  on a key proposition, Proposition~\ref{SuperLemma}, that is proved in Section~\ref{PropProof}. Both proofs, the one of Theorem~\ref{AdiThm} and the one of Proposition~\ref{SuperLemma},
rely heavily on ideas from and technical lemmas proved in  \cite{BDF}. However, many small and several substantial changes in the arguments are necessary to arrive at Theorems~\ref{AdiThm0} and \ref{AdiThm}. 
Let us briefly comment on these changes. 

The step from spin systems to fermions on a lattice is   straightforward, in particular, since Lieb--Robinson bounds are readily available also for fermions, cf. e.g.\ \cite{NSY, BD}. 
 The change from bounded subsets $\Lambda\subset \Gamma$ to the ``torus'' $\Lambda$ enters only in the proof of the Lieb--Robinson bound. For this reason, in Appendix~\ref{AppendixLR}  we  state the Lieb--Robinson bound for systems on the torus and briefly discuss the small necessary modifications in the proof. 

 One novelty of our result compared to \cite{BDF} is the explicit  treatment of arbitrary densities, including the trace per unit volume\footnote{The possibility to treat also densities  for translation invariant Hamiltonians, observables, and  states was already indicated in \cite{BDF}. Note that we make no such assumption on translation invariance at all.}, and localized driving. This change poses new technical problems and we need to adapt and   extend several technical lemmas from \cite{BDF} to  local $L$-localized Hamiltonians in Appendix~\ref{AppendixTech}. 
 
The second  novelty is the superadiabatic tracking of the solution within the spectral subspace\footnote{Note that a statement about the leading order approximation to the adiabatic evolution within degenerate eigenspaces has been added in a later version of \cite{BDF}  after the first version of our paper appeared.} and the formula \eqref{supadieqvorn}\footnote{In the second equation in Section~2.9 of \cite{BDF}  a seemingly similar claim is formulated, namely that
 \begin{equation}\label{supadieqwrong}
 \I\epsi    \frac{\D}{\D t}
 V^\epslam(t)   =\Big( H^\Lambda(t) + \epsi^{m+d+1}  \,R^\epslam(t)\Big)V^\epslam  (t )   \end{equation}
 with $V^\Lambda(t)$ as in \eqref{VDef}.
  Note that \eqref{supadieqwrong} is clearly wrong and just a lapse in the presentation of \cite{BDF}.}. Among other things, in our approach  this requires a slight modification of 
  a certain map $\mathcal{I}_H  $ introduced in \cite{HW,BMNS} in order to invert the Liouvillian $\ad_H(\cdot):= - \I \,[H,\,\cdot\, ]$.
  In Appendix~\ref{AppendixI} 
  we provide this analysis and also prove an  explicit formula for the action of $\mathcal{I}_H$  in the case of $\sigma_*^\Lambda(t) = \{E^\Lambda(t)\}$,  expressed in terms of the reduced resolvent of the Hamiltonian, which is used to formulate Corollary~\ref{ExpCor} and the formulas for adiabatic currents in the next section without  $\mathcal{I}_H$ appearing in the statements.  
  Finally, Appendix~\ref{Sapp} collects a few useful properties of functions in $\mathcal{S}$, defined in \eqref{eqn:S_def}, that are used throughout the paper but not completely obvious.  
  \end{enumerate}
 \end{remarks}
 
 The proof of Theorem~\ref{AdiThm} is based on the following proposition, in which the ingredients for the superadiabatic evolution are constructed. We need to state it here, because otherwise we couldn't properly formulate  Theorem~\ref{expThm} below concerning explicit expansions. In the following, we abbreviate $P_*^\epslam(t)^\perp := {\bf 1}-P_*^\epslam(t)$.

\begin{proposition}\label{SuperLemma}
 Let the Hamiltonian $H$ satisfy Assumptions (A1)$_{m,L_H}$ and (A2) for some fixed $m$.  There are 
 self-adjoint operators 
 \begin{equation}\label{KSDef}
   K^\epslam(t) = \sum_{\mu = 1}^{m+d} \epsi^{\mu-1} K^\epslam_\mu(t) \qquad\mbox { and } \qquad  S^\epslam(t) = \sum_{\mu = 1}^{m+d} \epsi^{\mu-1}  A^\epslam_\mu(t) \,,
 \end{equation}
   with $K_\mu,  A_\mu\in \mathcal{L}_{\mathcal{S},\infty,L_H}$ for $1\leq \mu\leq  m+d$, such that 
\begin{itemize}
\item  the adiabatic evolution $U^\epslam_{\rm  a} (t,s)$ defined in \eqref{adiabaticevolution}  satisfies \eqref{adievo2}, and 
\item the superadiabatic evolution $U^\epslam_{\rm sa} (t,s)$ defined in \eqref{superadi}    is close to the   time evolution $U^\epslam(t,s)$ defined in \eqref{Hdynamics}
    in the following sense:
\end{itemize}
  There exists a time dependent local Hamiltonian    $R^\epslam(t)\in  \mathcal{L}_{\mathcal{S},\infty,L_H}$, $t\in\R$, such that
    \begin{equation}\label{supadieq}
 \I\epsi    \frac{\D}{\D t}
 U^\epslam_{\rm sa} (t,s)   =\Big( H^\epslam(t) + \epsi^{m+d+1}  \,R^\epslam(t)\Big)U^\epslam_{\rm sa} (t,s) \,.
   \end{equation}
In addition we have:
\begin{itemize}
\item[(a)]   
   If at some time $t'\in \R$ it holds that $\frac{\D^n}{\D t^n} H^\Lambda(t')=0$ for all $n=1,\ldots, m+d$, then 
$S^\epslam(t') = K^\epslam(t') =0$ and thus, in particular,  $P_{\rm sa}^\epslam(t') = P_*^\epslam(t')$.
\item[(b)]   The off-diagonal part   $K_1^{ \Lambda}(t)^{\rm OD}  := P_*^\Lambda(t) K_1^\Lambda(t) P_*^\Lambda(t)^\perp + P_*^\Lambda(t)^\perp  K_1^\Lambda(t) P_*^\Lambda(t)$ of $K_1$ equals Kato's generator $K_\parallel$, i.e.\
\[
[K_1^\epslam(t) ,P_*^\epslam(t) ] = [K_\parallel^\epslam(t) ,P_*^\epslam(t) ] \,.
\]
If $\delta<g$ in Assumption (A2), then  $ P_*^\epslam(t) K_1^\epslam(t) P_*^\epslam(t) \equiv 0$.\\
\item[(c)] 
If $\delta =0$, i.e.\  $\sigma^\epslam_*(t) = \{E^\epslam_*(t)\}$, then  the relevant blocks of   $A_1 =: \tilde A_1 + P_*^\perp A_1P_*^\perp$ and   $K_2 =: \tilde K_2 + P_*^\perp K_2P_*^\perp$ are 
\begin{equation}\label{A1}
\tilde A_1^\epslam(t) :=   P^\epslam_*(t)\dot P^\epslam_*(t)  R^\epslam_*(t) + R^\epslam_*(t)\dot P^\epslam_*(t) P^\epslam_*(t)   
\end{equation}
and 
\begin{equation}\label{K2}
 \tilde  K_2^\epslam(t) = P_*^\epslam(t) \dot P_*^\epslam(t) R_*^\epslam(t) \dot P_*^\epslam(t)  P_*^\epslam(t)\,,
\end{equation}
where
$R^\epslam_*(t) := (H^\epslam(t)-E_*^\epslam(t))^{-1} P_*^\epslam(t)^\perp$ denotes the reduced resolvent.
\end{itemize}
 \end{proposition}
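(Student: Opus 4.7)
The plan is to construct $K^\epslam(t)$ and $S^\epslam(t)$ recursively in $\epsi$, order by order, as is standard in superadiabatic perturbation theory (cf.\ e.g.\ \cite{Nenciu}), and to verify at each step that the operators produced remain in $\mathcal{L}_{\mathcal{S},\infty,L_H}$. To set things up, I would compute
\[
 H^\epslam_{\rm sa}(t) \;:=\; \I\epsi\,\bigl(\partial_t U^\epslam_{\rm sa}(t,s)\bigr)U^\epslam_{\rm sa}(t,s)^{-1} \;=\; V^\epslam(t)\,\bigl(H^\epslam(t)+\epsi K^\epslam(t)\bigr)V^\epslam(t)^{*}\;+\;\I\epsi\bigl(\partial_t V^\epslam(t)\bigr)V^\epslam(t)^{*}
\]
and expand, using $V^\epslam=\E^{\I\epsi S^\epslam}$ and the Dyson-type series $V^\epslam X V^{\epslam\,*}=\sum_{k\geq 0}\frac{(\I\epsi)^k}{k!}\add_{S^\epslam}^k(X)$, to write $H^\epslam_{\rm sa}-H^\epslam$ as a formal power series in $\epsi$ in which $A_\mu$ and $K_\mu$ appear only up to order $\epsi^{\mu-1}$. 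The goal is to arrange that the coefficients at orders $\epsi^1,\dots,\epsi^{m+d}$ vanish; what remains is then a well-defined local Hamiltonian $\epsi^{m+d+1}R^\epslam(t)$.

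Next I would carry out the order-by-order cancellation. Splitting any self-adjoint $B\in\mathcal{L}_{\mathcal{S},\infty,L_H}$ into diagonal and off-diagonal parts with respect to $P_*^\epslam$, $B=B^{\rm D}+B^{\rm OD}$, the off-diagonal part can be inverted by the quasi-local operator $\mathcal{I}_{H}$ of Appendix~\ref{AppendixI}, while the diagonal part commutes with $P_*^\epslam$. At order $\epsi$ the cancellation condition reads $-\dot P_*^\epslam =\I[K_1^\epslam,P_*^\epslam]$ on the off-diagonal block, which forces $K_1^{\rm OD}=K_\parallel$ (this is part (b)), and leaves $K_1^{\rm D}$ free; then $A_1$ is determined by $\mathcal{I}_H$ applied to the next leftover off-diagonal term. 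Assuming by induction that $K_1,\dots,K_{\mu-1},A_1,\dots,A_{\mu-1}\in\mathcal{L}_{\mathcal{S},\infty,L_H}$ have been constructed so that the residual error $R^{(\mu-1)}(t)$ is in $\mathcal{L}_{\mathcal{S},\infty,L_H}$, I would define $K_\mu$ to absorb the diagonal part of $R^{(\mu-1)}$ and set $A_\mu:=\mathcal{I}_H(R^{(\mu-1),\rm OD})$, so that the error becomes $O(\epsi^\mu)$. Because the recursion only involves commutators, time derivatives, and applications of $\mathcal{I}_H$, invoking Lemmas~\ref{lemma:comm1} and~\ref{manyadlemma} of Appendix~\ref{AppendixTech} together with the mapping properties of $\mathcal{I}_H$ from Appendix~\ref{AppendixI} keeps all operators within $\mathcal{L}_{\mathcal{S},\infty,L_H}$. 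The intertwining property \eqref{adievo2} follows because $[K^\epslam,P_*^\epslam]=[K_\parallel,P_*^\epslam]$ by construction, and smoothness in $t$ is inherited from Assumption (A1)$_{m,L_H}$.

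For (a), if $\tfrac{\D^n}{\D t^n}H^\Lambda(t')=0$ for $n=1,\dots,m+d$, then at $t=t'$ every iterative step produces zero: $K_\parallel(t')=0$ since $\dot P_*(t')=0$, so $K_1(t')=0$; the first residual $R^{(1)}(t')$ involves only time derivatives of $H$ and of the already-vanishing $K_1,A_1$, hence vanishes as well, and inductively $K_\mu(t')=A_\mu(t')=0$ for all $\mu\leq m+d$, giving $S^\epslam(t')=K^\epslam(t')=0$. For (c), in the non-degenerate case $\sigma_*^\epslam(t)=\{E_*^\epslam(t)\}$ the map $\mathcal{I}_H$ has an explicit expression in terms of the reduced resolvent $R_*^\epslam$ (proved in Appendix~\ref{AppendixI}): on off-diagonal operators $B$ it acts as $B\mapsto R_*^\epslam B P_*^\epslam - P_*^\epslam B R_*^\epslam$. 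Applying this to the order-$\epsi$ residual $\dot P_*^\epslam$ yields exactly \eqref{A1}, and then carrying the recursion one step further and projecting with $P_*^\epslam$ on both sides produces \eqref{K2} after a short computation using $\dot P_*^\epslam = P_*^\epslam\dot P_*^\epslam P_*^{\epslam\perp}+P_*^{\epslam\perp}\dot P_*^\epslam P_*^\epslam$ and $P_*^\epslam R_*^\epslam=0$.

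The main obstacle, and the step where I would invest the most care, is controlling the $\|\cdot\|_{\zeta,n,L_H}$-norms through the recursion. Each iteration produces nested commutators of the form $\add_{A_{\mu_1}}\cdots\add_{A_{\mu_k}}(\partial_t^j H^\epslam)$ together with applications of $\mathcal{I}_{H}$, and a priori these could destroy either the $L_H$-localization or the decay profile $\zeta$. The key point is that (i) the multi-commutator lemma (Lemma~\ref{manyadlemma}) produces a new function in $\mathcal{S}$ rather than leaving the class $\mathcal{B}_{\mathcal{S},\infty,L_H}$, and (ii) $\mathcal{I}_H$, being defined through a rapidly decaying weight function against the Heisenberg evolution, likewise preserves $\mathcal{L}_{\mathcal{S},\infty,L_H}$ thanks to the Lieb--Robinson bound of Appendix~\ref{AppendixLR}. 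Once these two closure properties are in place, the recursion terminates cleanly at order $m+d$, and the remainder $R^\epslam(t)$ collects finitely many local $L_H$-localized contributions, hence lies in $\mathcal{L}_{\mathcal{S},\infty,L_H}$ as claimed.
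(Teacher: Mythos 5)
Your overall outline (order-by-order cancellation in $\epsi$, inversion of the Liouvillian by $\mathcal{I}_H$, locality bookkeeping via Appendices~\ref{AppendixTech} and \ref{AppendixI}) matches the paper's strategy, but the way you run the recursion has a genuine gap exactly at the point that is the heart of the proposition. You split the residual at each order into $P_*$-diagonal and $P_*$-off-diagonal blocks \emph{first}, and then set $A_\mu:=\mathcal{I}_H\bigl(R^{(\mu-1),\rm OD}\bigr)$ and let $K_\mu$ ``absorb the diagonal part''. But taking the blocks $P_*(\cdot)P_*^\perp+P_*^\perp(\cdot)P_*$ or $P_*(\cdot)P_*+P_*^\perp(\cdot)P_*^\perp$ is \emph{not} a locality-preserving operation: $P_*^\epslam(t)$ is a spectral projection of an extensive Hamiltonian and is not a local Hamiltonian, so neither $R^{(\mu-1),\rm OD}$ nor $(R^{(\mu-1)})^{\rm D}$ lies in $\mathcal{L}_{\mathcal{S},\infty,L_H}$ a priori, and Lemmas~\ref{lemma:comm1}, \ref{manyadlemma} and \ref{Ilemma2} (which you invoke for closure) only apply to commutators and to $\mathcal{I}_H$ acting on operators that are \emph{already} local and $L_H$-localized. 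The same problem appears at first order: you note that $K_1^{\rm OD}=K_\parallel$ with $K_1^{\rm D}$ ``free'', but $K_\parallel=\I[\dot P_*,P_*]$ is not local, so the entire difficulty is to exhibit a specific diagonal correction making $K_1$ local together with a local $A_1$ solving $\ad_H(A_1)=-K_1$; declaring $K_1^{\rm D}$ free does not produce it, and applying your explicit resolvent formula to $\dot P_*^\epslam$ again means applying $\mathcal{I}_H$ to a non-local object.

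The paper's construction is arranged precisely so that $\mathcal{I}_H$ and $\ad_H$ are only ever applied to manifestly local, $L_H$-localized operators and no block decomposition of a non-local operator is ever taken: at first order one sets $A_1:=-\mathcal{I}_H\bigl(\mathcal{I}_H(\dot H)\bigr)$ (a double application to the local $\dot H$) and $K_1:=-\ad_H(A_1)$, and then \emph{verifies} via the algebraic properties (I2)--(I3) that $K_1^{\rm OD}=K_\parallel$ (Lemma~\ref{AKlem}); at higher orders one sets $A_\mu:=\mathcal{I}_H(L_\mu-Q_\mu)$ with $\mathcal{I}_H$ applied to the full local operator $L_\mu-Q_\mu$, and defines $K_\mu:=L_\mu-Q_\mu-\ad_H(A_\mu)$, which is local as a difference of local Hamiltonians and turns out to be diagonal by (I3)/(I4) (the modified weight $\mathcal{W}_{g,\tilde\delta}$ guaranteeing $P_*\,\mathcal{I}_H(\cdot)\,P_*=0$ when $\delta<g$), which is what yields the intertwining property \eqref{adievo2} and the statements in (b) and (c). Note also that your $K_\mu=(R^{(\mu-1)})^{\rm D}$ is not the same operator as the paper's $K_\mu$ (since $\ad_H\circ\mathcal{I}_H$ is not the identity on diagonal blocks), so even granting the formal cancellation your recursion produces objects whose membership in $\mathcal{L}_{\mathcal{S},\infty,L_H}$ is unproven. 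Without this rearrangement of the induction, the claim $K_\mu,A_\mu,R\in\mathcal{L}_{\mathcal{S},\infty,L_H}$ --- the whole point of the proposition, as opposed to the classical superadiabatic expansion of \cite{Nenciu} --- does not follow.
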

\medskip

\begin{remark}\rm
An alternative formula for $\tilde A_1^\epslam(t)$ that emerges from the proof is
\[
\tilde A_1^\epslam(t) =      (R_*^{ \Lambda}(t))^2     \dot H^{ \Lambda}(t)^{\rm OD} +
 \dot H^{ \Lambda}(t)^{\rm OD} (R_*^{ \Lambda}(t))^2
 \,.
 \]
 \end{remark}

While the superadiabatic approximation is of conceptual interest in itself, for applications one needs explicit expansions of $ U^\epslam(s,t) \, B^\epslam \, U^\epslam(t,s)  \approx U^\epslam_{\rm sa}(s,t) \, B^\epslam \, U^\epslam_{\rm sa}(t,s) $. The following theorem is at the basis of computing such expansions.

\begin{theorem}\label{expThm}{\bf (Asymptotic expansions)}\\
 Let $H$ satisfy Assumptions (A1)$_{m,L_H}$ and (A2) for some   $m\in\N$, $L_H  \in\loc$, $T\geq 0$, and let
$  S  , K $ be as in Proposition~\ref{SuperLemma}.  
Let  $L\in\loc$ with $\ell\cdot \ell_H =0$  and define for  $1\leq  k< m+d$
\[
S^\epslam_{(k)}(t) := \sum_{\mu=1}^{k} \epsi^{\mu-1} A^\epslam_\mu(t)\quad\mbox{ and }\quad 
K^\epslam_{(k)}(t) :=  K^\epslam_\parallel +  \sum_{\mu=2}^{k+1} \epsi^{\mu-1} P^\epslam_*(t) K^\epslam_\mu(t)P^\epslam_*(t)  \,.
\]
Note that in the definition of $K^\epslam_{(k)}(t)$ there appears Kato's generator $K^\epslam_\parallel$ and not $K^\epslam_1$.
Moreover, let $U^\epslam_{\parallel\,(k)}(t,s)$ be the 
solution of
 \begin{equation}  \label{Uparallel}
\I  \frac{\D}{\D t} U^\epslam_{\parallel\,(k)} (t,s) =   K^\epslam_{(k)} (t)  \,U^\epslam_{\parallel\,(k)}(t,s)\,, \quad U^\epslam_{\parallel\,(k)} (s,s) = {\bf 1}\;\mbox{ for }\; 
 t,s\in \R\,.
\end{equation}

Then there exists a constant $C<\infty$ independent of $\epsi$ and $\Lambda$  such that  for any $B\in \mathcal{L}_{\zeta,k+1,L}$ the following statements hold:
\begin{enumerate}
\item[\rm(a)] {\rm Expansion of the superadiabatic transformation:}\\
 \begin{align}\label{Vexp}
 \sup_{|t|\leq T}\;  \left\| V^\epslam(t)^* B^\epslam\, V^\epslam(t)- \sum_{j=0}^k
    \frac{ (-\I \epsi)^j}{j!}  \, \add_{S^\epslam_{(k+1-j)}(t)}^j (B^\epslam)  \right\|
  \leq  \epsi^{k+1} C  \|\Phi_B\|_{\zeta,k+1,L} M^{d -|\ell|-|\ell_H|}\,,
\end{align}
where $\add^j_S(B) := [  S,[ S,\ldots,[ S,B] ]]$ 
with $S$ appearing $j$ times.
\item[\rm(b)]  {\rm Expansion of the adiabatic evolution:}\\
Assume, in addition, that  the upper bound $\delta$ on the width of the spectral patch $\sigma_*^\epslam(t)$ in Assumption~(A2) satisfies $\delta<g$. Then 
\begin{align}
 \sup_{s,t\in[-T,T]}\;  \Big\| P_*^\epslam(s) \Big( U^\epslam_{\rm a} (s,t) & B^\epslam\,  U^\epslam_{\rm a} (t,s)  -  
      U^\epslam_{\parallel\,(k)} (s,t)  B^\epslam\,  U^\epslam_{\parallel\,(k)} (t,s) \Big) P_*^\epslam(s) \Big\|\nonumber\\& \label{Uaexp}
  \leq C\left( \frac{\delta}{\epsi}M^{d -|\ell| } + \epsi^{k }    M^{d -|\ell|-|\ell_H|}\right) \|\Phi_B\|_{\zeta,k+1,L}\,.
\end{align}
 Note that if $\sigma_*^\epslam(t) =\{E_*^\epslam(t)\}$ is a single (possibly degenerate) eigenvalue, then $\delta =0$.
\end{enumerate}
\end{theorem}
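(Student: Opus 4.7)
\emph{Part (a).} The plan is to combine the Hadamard lemma for conjugation by $V^\epslam(t) = \E^{\I\epsi S^\epslam(t)}$ with a careful term-by-term truncation of $S$. I would start from the Taylor--Hadamard identity
\[
V^* B V \;=\; \sum_{j=0}^k \frac{(-\I\epsi)^j}{j!}\, \add_S^j(B) \;+\; \epsi^{k+1} \int_0^1 \frac{(1-s)^k}{k!}\, \E^{-\I\epsi s S}\, \add_S^{k+1}(B)\, \E^{\I\epsi s S}\, \D s,
\]
whose integral remainder is controlled by $\|\add_S^{k+1}(B)\|$. In each $\add_S^j(B)$ with $j\le k$ I would then substitute $S = S_{(k+1-j)} + (S - S_{(k+1-j)})$ and expand multilinearly; every summand carrying at least one factor $S - S_{(k+1-j)} = \Or(\epsi^{k+1-j})$ combines with the outer $\epsi^j$ to an error of order $\Or(\epsi^{k+1})$ or higher, and is absorbed into the remainder. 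What remains is to bound iterated commutators $\add_{A_{\mu_1}}\cdots \add_{A_{\mu_j}}(B)$ with $A_{\mu}\in \mathcal{L}_{\mathcal{S},\infty,L_H}$ and $B\in \mathcal{L}_{\zeta,k+1,L}$. Since $\ell\cdot\ell_H=0$, iterating Lemma~\ref{lemma:comm1} via Lemma~\ref{manyadlemma} keeps these commutators in a space of local Hamiltonians jointly localised around $L$ and $L_H$, and the volume bound~\eqref{normgrowth} produces the factor $M^{d-|\ell|-|\ell_H|}$. The hypothesis $B\in\mathcal{L}_{\zeta,k+1,L}$ is precisely the regularity required to take $k+1$ nested commutators.

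\emph{Part (b), reduction.} Both $U^\epslam_{\rm a}$ and $U^\epslam_{\parallel\,(k)}$ intertwine $P^\epslam_*$, so the quantity inside the norm on the left-hand side of~\eqref{Uaexp} equals
\[
U^\epslam_{\rm a}(s,t)\, \tilde B(t)\, U^\epslam_{\rm a}(t,s) \,-\, U^\epslam_{\parallel\,(k)}(s,t)\, \tilde B(t)\, U^\epslam_{\parallel\,(k)}(t,s), \qquad \tilde B(t) := P^\epslam_*(t)\, B^\epslam\, P^\epslam_*(t),
\]
as an operator on the range of $P^\epslam_*(s)$. I would introduce the relative unitary $\Omega(t) := U^\epslam_{\parallel\,(k)}(s,t)\, U^\epslam_{\rm a}(t,s)$, which commutes with $P^\epslam_*(s)$, satisfies $\Omega(s) = {\bf 1}$, and a direct computation gives
\[
\I\epsi\, \dot\Omega(t) \;=\; U^\epslam_{\parallel\,(k)}(s,t)\, \bigl[H(t) + \epsi(K(t) - K_{(k)}(t))\bigr]\, U^\epslam_{\parallel\,(k)}(t,s)\, \Omega(t).
\]
A short manipulation using the block-diagonality of $\Omega$ with respect to $P^\epslam_*(s)$ reduces the target difference to the commutator $[\tilde B(t), \Omega(t) - {\bf 1}]$ restricted to the $P^\epslam_*(s)$-block, so the task becomes to bound this commutator.

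\emph{Part (b), the two orders in the bound.} For the $H$-contribution to the generator of $\Omega$, I would split $P_*(t) H(t) P_*(t) = \tilde E_*(t)\, P_*(t) + R_H(t)$ with $\|R_H(t)\|\leq \delta$ by Assumption~(A2); the scalar piece $\tilde E_*(t)\, P_*(t)$ produces only a global phase that commutes with $\tilde B$ and drops out of the commutator, while $R_H$, divided by $\epsi$ and integrated over the bounded interval, gives the $\delta/\epsi$ prefactor in the first term of the bound. That term carries the factor $M^{d-|\ell|}$ only, because the $\delta$-bound is derived purely from the spectral diameter and does not inherit any $L_H$-localisation. For the $\epsi(K - K_{(k)})$-contribution, the definition $K_{(k)} = K_\parallel + \sum_{\mu=2}^{k+1}\epsi^{\mu-1} P_* K_\mu P_*$ combined with Proposition~\ref{SuperLemma}(b) (the off-diagonal part of $K_1$ equals $K_\parallel$, and $P_* K_1 P_* = 0$ since $\delta < g$) forces
\[
P_*(t)\bigl(K(t) - K_{(k)}(t)\bigr) P_*(t) \;=\; \sum_{\mu\ge k+2} \epsi^{\mu-1}\, P_*(t) K_\mu(t) P_*(t) \;=\; \Or(\epsi^{k+1}),
\]
and the order $\epsi^k$ in the final bound arises from dividing by $\epsi$ in the equation for $\dot\Omega$.

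\emph{Main obstacle.} The hardest step is to obtain the scaling $M^{d-|\ell|-|\ell_H|}$ for the $(K-K_{(k)})$-commutator: a naive estimate $\|[\tilde B, \Omega - {\bf 1}]\| \leq 2\,\|\tilde B\|\cdot\|\Omega - {\bf 1}\|$ yields the product $M^{d-|\ell|}\cdot M^{d-|\ell_H|}$ and wastes a full volume factor $M^d$. To recover the right scaling I would exploit that the relevant part of the generator of $\Omega$ lies in $\mathcal{L}_{\mathcal{S},\infty,L_H}$ and that its Heisenberg conjugation by $U^\epslam_{\parallel\,(k)}$ stays quasi-local around $L_H$ thanks to the Lieb--Robinson bound of Appendix~\ref{AppendixLR}; Lemma~\ref{lemma:comm1}, applied termwise inside the Duhamel representation of $\Omega - {\bf 1}$ and combined with a Gr\"onwall iteration, then extracts the intersection-volume factor $M^{d-|\ell|-|\ell_H|}$ directly from the commutator structure. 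This Lieb--Robinson-based locality control for commutators on a moving spectral subspace, closely following the strategy developed in~\cite{BDF}, is the technical heart of the proof.
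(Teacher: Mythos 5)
Your part (a) follows the paper's own route: Taylor expansion of the conjugation by $V^\epslam(t)$, truncation of $S^\epslam$ to $S^\epslam_{(k+1-j)}$ inside each nested commutator with the discarded pieces absorbed into the $\Or(\epsi^{k+1})$ remainder, and termwise commutator estimates; note only that the factor $M^{d-|\ell|-|\ell_H|}$ does not come from the crude norm bound \eqref{normgrowth} but from Lemma~\ref{lemma:comm1} combined with the summation over the interaction terms of $B$ as in \eqref{comparerhoPi}, which is in effect the mechanism you describe. Your reduction in part (b) via the relative unitary $\Omega$ and the identification of the two error sources --- $P_*HP_*$ minus a scalar, of size $\delta$, producing $\frac{\delta}{\epsi}M^{d-|\ell|}$, and $P_*(K-K_{(k)})P_*=\sum_{\mu\ge k+2}\epsi^{\mu-1}P_*K_\mu P_*$ producing the $\epsi^k$ term --- also matches the paper's accounting.

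The genuine gap is in the step you yourself call the technical heart. You propose to apply the Lieb--Robinson bound to the evolution $U^\epslam_{\parallel\,(k)}$ in order to keep the relevant commutators quasi-local. But the generator $K^\epslam_{(k)}$ contains Kato's generator $K^\epslam_\parallel=\I[\dot P_*,P_*]$ and the projected terms $P_*K_\mu P_*$, neither of which is a local Hamiltonian (the paper stresses this for $K_\parallel$ in the proof of Proposition~\ref{SuperLemma}); hence Theorem~\ref{LRB} and Lemma~\ref{CommuLemma} are not available for $U^\epslam_{\parallel\,(k)}$, and the claimed quasi-locality of the conjugated quantities is unjustified. The paper circumvents exactly this by first replacing $U^\epslam_{\parallel\,(k)}$ with the evolution $\widehat U^\epslam_{\parallel\,(k)}$ generated by $\widehat K^\epslam_{(k)}=\sum_{\mu=1}^{k+1}\epsi^{\mu-1}K^\epslam_\mu$, which consists of honest local Hamiltonians in $\mathcal{L}_{\mathcal{S},\infty,L_H}$; the two evolutions agree on ${\rm ran}\,P_*^\epslam(s)$ because both intertwine $P_*$ and their generators differ only in the $P_*^\perp(\cdots)P_*^\perp$ block (using $K_1^{\rm OD}=K_\parallel$, $P_*K_1P_*=0$ for $\delta<g$, and diagonality of $K_\mu$ for $\mu\ge 2$). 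Only after this replacement can the second estimate of Lemma~\ref{CommuLemma} be applied, termwise to the bare local terms $\Phi^\epslam_B(X)$ evolved by $\widehat U^\epslam_{\parallel\,(k)}$, with the projections $P_*(\tau)$ kept outside; your $\tilde B=P_*BP_*$ is not a sum of local terms, so the localization lemmas cannot be applied to it directly. Finally, no Gr\"onwall iteration is needed or wanted: a single Duhamel step, bounding the norm of the integrand as in \eqref{proof1}, already gives a bound linear in $|t-s|$, whereas a Gr\"onwall estimate with the full generator difference in its kernel would produce a factor of order $\exp\bigl(C(\delta/\epsi+\epsi^kM^{d-|\ell_H|})\bigr)$, which is not uniform in the system size and would ruin \eqref{Uaexp}.
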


\begin{remark}\rm
Note that the first term on the right hand side of \eqref{Uaexp} is only small if $\delta$ is small. For general spectral patches $\sigma_*^\epslam(t)$ this estimate is of no use. However, there are  situations where $\sigma_*^\epslam(t)$ is an almost degenerate ground state where   $\delta\sim M^{-\infty}$ goes to zero faster than any inverse power of the system size $M$. Then the estimate \eqref{Uaexp} remains useful in situations where one first takes the thermodynamic limit $M\to \infty$ and only afterwards the adiabatic limit $\epsi\to 0$. This is for example the case when computing  linear response formulas.
\end{remark}

In the following corollary we exemplify how to combine the expansions of Theorem~\ref{expThm} in order to approximate the superadiabatic Heisenberg evolution of observables.

\begin{corollary}\label{ExpCor} 
Under the assumptions of Theorem~\ref{expThm} and the additional condition that $\sigma_*^\epslam(t) =\{E_*^\epslam(t)\}$ is a single eigenvalue, the following holds.
Let
\[
B^\epslam_{ (0)}(t,s)  := U^\epslam_{\parallel }(s,t) B^\epslam U^\epslam_{\parallel }(t,s) 
\]
and
\[
B^\epslam_{ (1)}(t,s)  :=  U^\epslam_{\parallel\,{(1)}} (s,t) B^\epslam U^\epslam_{\parallel\,{(1)}} (t,s)  - \I\epsi  U^\epslam_{\parallel}(s,t ) [ \tilde A_1^\epslam(t),B^\epslam]  U^\epslam_{\parallel}(t,s )
\,,
\]
where $\tilde A_1^\epslam(t)$ is explicitly given in \eqref{A1} and the generator of $U^\epslam_{\parallel\,{(1)}} (t,s)$ is $K^\epslam_{ {(1)}} (t ) = K_\parallel^\epslam + \epsi \tilde K_2^\epslam(t)$ and is explicitly given in \eqref{Kato} and \eqref{K2}.
Then 
 \begin{equation} \label{spezialfall}
   \sup_{t,s\in [-T,T]}\;   \Big\|   P_{\rm sa}^\epslam(s)\Big(    U^\epslam(s,t) \, B^\epslam \, U^\epslam(t,s) - V^\epslam(s)  B^\epslam_{ (0)}(t,s)V^\epslam(s)^*\Big) P_{\rm sa}^\epslam(s)
 \Big\| \leq \epsi  \,C\,M^{d-|\ell|-|\ell_H|} \,\|\Phi_B\|_{\zeta,2,L}   
 \end{equation}
and 
\begin{equation} 
   \sup_{t,s\in [-T,T]}\;   \Big\|   P_{\rm sa}^\epslam(s)\Big(    U^\epslam(s,t) \, B^\epslam \, U^\epslam(t,s) -  V^\epslam(s)  B^\epslam_{ (1)}(t,s)V^\epslam(s)^*\Big) P_{\rm sa}^\epslam(s)
 \Big\| \leq \epsi^2  \,C\,M^{d-|\ell|-|\ell_H|} \,\|\Phi_B\|_{\zeta,2,L}   \,.
 \end{equation}
 
  \end{corollary}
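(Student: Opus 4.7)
The strategy is to feed Theorem~\ref{AdiThm} and both parts of Theorem~\ref{expThm} through the factorization $U^\epslam_{\rm sa}(t,s) = V^\epslam(t)\, U^\epslam_{\rm a}(t,s)\, V^\epslam(s)^*$ together with the intertwining identities $P_{\rm sa}^\epslam(r)\, V^\epslam(r) = V^\epslam(r)\, P_*^\epslam(r)$ and $U^\epslam_{\rm a}(t,s)\, P_*^\epslam(s) = P_*^\epslam(t)\, U^\epslam_{\rm a}(t,s)$. First I would invoke Theorem~\ref{AdiThm} (applicable with $m\geq 2$, which is available under the hypotheses of Theorem~\ref{expThm}) to replace $U^\epslam(s,t)\, B^\epslam\, U^\epslam(t,s)$ by $U^\epslam_{\rm sa}(s,t)\, B^\epslam\, U^\epslam_{\rm sa}(t,s)$ at a cost of order $\epsi^m M^{d-|\ell|-|\ell_H|} \|\Phi_B\|_{\zeta,2,L}$, which is absorbed into either target bound. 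Sandwiching the result with $P_{\rm sa}^\epslam(s)$, commuting $V^\epslam(s)$ to the outside, and intertwining $U^\epslam_{\rm a}$ with $P^\epslam_*$ will reduce matters to analyzing
\[
V^\epslam(s)\, P_*^\epslam(s)\, U^\epslam_{\rm a}(s,t)\, C^\epslam(t)\, U^\epslam_{\rm a}(t,s)\, P_*^\epslam(s)\, V^\epslam(s)^*,\qquad C^\epslam(t) := P_*^\epslam(t)\, V^\epslam(t)^* B^\epslam V^\epslam(t)\, P_*^\epslam(t),
\]
so everything comes down to expanding $C^\epslam(t)$ and converting the $U^\epslam_{\rm a}$-conjugation into one by a parallel transport.

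For the leading estimate \eqref{spezialfall}, I would use the $k=0$ case of Theorem~\ref{expThm}(a) to write $C^\epslam(t) = P_*^\epslam(t)\, B^\epslam\, P_*^\epslam(t) + O(\epsi)$, and the $k=1$ case of Theorem~\ref{expThm}(b) (which applies since $\delta = 0$) to replace the $U^\epslam_{\rm a}$-conjugation by $U^\epslam_{\parallel\,(1)}$-conjugation up to $O(\epsi)$. A short Duhamel comparison between the generators $K^\epslam_{(1)} = K^\epslam_\parallel + \epsi\, \tilde K^\epslam_2$ and $K^\epslam_\parallel$ will then trade $U^\epslam_{\parallel\,(1)}$ for Kato's $U^\epslam_\parallel$ at the same order, reassembling into $P^\epslam_{\rm sa}(s)\, V^\epslam(s)\, B^\epslam_{(0)}(t,s)\, V^\epslam(s)^*\, P^\epslam_{\rm sa}(s)$.

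For the $\epsi^2$ estimate I would repeat the analysis one order higher. The $k=1$ case of Theorem~\ref{expThm}(a), together with $S^\epslam_{(1)} = A^\epslam_1$ and the identity $P_*\,[A_1,B]\,P_* = P_*\,[\tilde A_1,B]\,P_*$ (which follows from Proposition~\ref{SuperLemma}(c), since $A_1 - \tilde A_1 = P_*^\perp A_1 P_*^\perp$ is annihilated by $P_*$ on either side), gives
\[
C^\epslam(t) = P_*^\epslam(t)\, B^\epslam\, P_*^\epslam(t) - \I \epsi\, P_*^\epslam(t)\,[\tilde A^\epslam_1(t), B^\epslam]\, P_*^\epslam(t) + O(\epsi^2).
\]
I would then process the two contributions separately: for the leading piece $P_*\, B\, P_*$, Theorem~\ref{expThm}(b) at $k=2$ produces $U^\epslam_{\parallel\,(2)}$-conjugation up to $O(\epsi^2)$, and a Duhamel step with the generator difference $K^\epslam_{(2)} - K^\epslam_{(1)} = \epsi^2\, P^\epslam_*\, K^\epslam_3\, P^\epslam_*$ replaces $U^\epslam_{\parallel\,(2)}$ by $U^\epslam_{\parallel\,(1)}$ at the same order; for the $\epsi$-correction, Theorem~\ref{expThm}(b) at $k=1$ applied to $[\tilde A^\epslam_1, B^\epslam]$ gives $U^\epslam_{\parallel\,(1)}$-conjugation, and a final Duhamel swap of $U^\epslam_{\parallel\,(1)}$ for $U^\epslam_\parallel$ costs only $\epsi \cdot O(\epsi) = O(\epsi^2)$. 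Reassembling the pieces reproduces exactly $B^\epslam_{(1)}(t,s)$ as defined in the statement.

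The main technical obstacle will be ensuring that each Duhamel swap $U^\epslam_{\parallel\,(k)} \leftrightarrow U^\epslam_{\parallel\,(k')}$ respects the refined $M^{d-|\ell|-|\ell_H|}$ scaling, rather than degenerating into a naive product $\|\tilde K^\epslam_\bullet\|\cdot\|B^\epslam\|$ of extensive operator norms. The saving grace is that the perturbing generators $\tilde K^\epslam_2$, $P^\epslam_*\, K^\epslam_3\, P^\epslam_*$, and $\tilde A^\epslam_1$ all lie in $\mathcal{L}_{\mathcal{S},\infty,L_H}$ by Proposition~\ref{SuperLemma}, so their commutators against $B^\epslam \in \mathcal{L}_{\zeta,2,L}$ inherit the joint $(L, L_H)$-localization through the closure lemmas~\ref{lemma:comm1} and \ref{manyadlemma} of Appendix~\ref{AppendixTech}, and the uniform volume bound \eqref{normgrowth} then delivers the desired scaling.
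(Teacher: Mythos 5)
Your overall skeleton coincides with the paper's proof: pass from $U^\epslam$ to $U^\epslam_{\rm sa}$ via Theorem~\ref{AdiThm}, factor $U^\epslam_{\rm sa}=V^\epslam U^\epslam_{\rm a}V^{\epslam\,*}$ and commute $P^\epslam_{\rm sa}V^\epslam=V^\epslam P^\epslam_*$, expand $V^{\epslam\,*}B^\epslam V^\epslam$ by Theorem~\ref{expThm}(a), convert the $U^\epslam_{\rm a}$-conjugation into parallel-transport conjugations via Theorem~\ref{expThm}(b) (with $\delta=0$), and finally use $P_*[A_1,B]P_*=P_*[\tilde A_1,B]P_*$ to match the definition of $B^\epslam_{(1)}$. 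The paper, however, does this in a single application of \eqref{Uaexp} to the combined observable $B^\epslam-\I\epsi[A_1(t),B^\epslam]$, keeping the \emph{local} operator $A_1$ (not $\tilde A_1$) until the very last line where \eqref{A1} is invoked, and it keeps the leading term conjugated by $U^\epslam_{\parallel\,(1)}$ directly: when $\delta=0$ the $P_*$-sandwiched difference of the generators of $U^\epslam_{\rm a}$ and $U^\epslam_{\parallel\,(1)}$ is already $\Or(\epsi^2)$, so no detour through $U^\epslam_{\parallel\,(2)}$ and a swap generated by $\epsi^2P_*K_3P_*$ is needed. Your detour is workable in principle, but note that invoking Theorem~\ref{expThm}(b) at $k=2$ requires $B\in\mathcal{L}_{\zeta,3,L}$ and produces $\|\Phi_B\|_{\zeta,3,L}$ rather than the advertised $\|\Phi_B\|_{\zeta,2,L}$.

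The genuine gap is in your localization bookkeeping. Proposition~\ref{SuperLemma} places $A_1,K_2,K_3$ in $\mathcal{L}_{\mathcal{S},\infty,L_H}$, but it does \emph{not} place $\tilde A_1=P_*\dot P_*R_*+R_*\dot P_*P_*$, $\tilde K_2=P_*K_2P_*$, or $P_*K_3P_*$ there: dressing with the spectral projection destroys the interaction structure, and these operators are not local Hamiltonians. Consequently (i) Theorem~\ref{expThm}(b) cannot be applied to the observable $[\tilde A_1(t),B^\epslam]$ as you propose, since its hypotheses require a local $L$-localized Hamiltonian, which $[\tilde A_1,B]$ is not, and (ii) your final appeal to Lemmas~\ref{lemma:comm1} and \ref{manyadlemma} to give the Duhamel swaps the $M^{d-|\ell|-|\ell_H|}$ scaling does not apply to $\tilde K_2$ and $P_*K_3P_*$ as written. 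Both points are repairable along the lines the paper itself follows: work with the local operators throughout (apply \eqref{Uaexp} to $[A_1(t),B^\epslam]$, not $[\tilde A_1(t),B^\epslam]$); in the Duhamel comparisons use the intertwining of the parallel transports to sandwich the commutators with $P_*(\tau)$ and use that $K_\mu$ is block-diagonal for $\mu\geq2$, so $P_*K_\mu P_*^\perp=0$ and hence $P_*[P_*K_\mu P_*,C]P_*=P_*[K_\mu,C]P_*$, after which Lemma~\ref{CommuLemma} resp.\ Lemma~\ref{lemma:comm1} applies to the genuinely local, $L_H$-localized $K_\mu$; and only at the very end convert $P_*[A_1,B]P_*$ into $P_*[\tilde A_1,B]P_*$, exactly by the identity you already stated.
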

\begin{proof}
This follows from a straightforward combination of the previous estimates,
\begin{eqnarray*}\lefteqn{
P_{\rm sa} (s)    U (s,t) \, B \, U(t,s) P_{\rm sa}(s) \stackrel{\eqref{ATstatement}}{=}
P_{\rm sa}(s)    U_{\rm sa}(s,t) \, B \, U_{\rm sa}(t,s) P_{\rm sa}(s)+ R_1}\\
&=& V(s)\, P_*(s)    U_{\rm a}(s,t)\, \big(V(t)^*  \, B \, 
V(t)\big) \, U_{\rm a}(t,s)\, P_*(s)V(s)^* +R_1\\
&\stackrel{\eqref{Vexp}}{=}& V(s)\,  P_*(s)  U_{\rm a}(s,t)\,  \big(B - \I\epsi [  A_1(t),B]\big) \,  U_{\rm a}(t,s)\,P_*(s) V(s)^* +  R_2\\
&\stackrel{\eqref{Uaexp}}{=}& V(s)\,  P_*(s) \left(  U_{\parallel\,{(1)}}(s,t)\,  B\, U_{\parallel\,{(1)}}(t,s ) -  \I\epsi  U_{\parallel }(s,t ) [  A_1(t), B]  U_{\parallel}(t,s ) \right)   P_*(s) V(s)^* +R_3\\
&=& V(s)\,  P_*(s) \left(  U_{\parallel\,{(1)}}(s,t)\,  B\, U_{\parallel\,{(1)}}(t,s ) -  \I\epsi  U_{\parallel}(s,t )P_*(t) [  A_1(t), B] P_*(t) U_{\parallel}(t,s ) \right)   P_*(s) V(s)^* +R_3
 \\&\stackrel{\eqref{A1}}{=}&  P_{\rm sa} (s) V(s) B _{ (1)}(t,s) V(s)^* P_{\rm sa} (s) +R_3\,,
\end{eqnarray*}
where all the remainder terms {$R_{1,2,3}$} are bounded in norm by $\epsi^2  \,C\,M^{d-|\ell|-|\ell_H|} \,\|\Phi_B\|_{\zeta,2,L}  $.
 \end{proof}

Note that Theorem~\ref{AdiThm0} follows from \eqref{spezialfall} using that  {the condition} $\frac{\D^n}{\D t^n} H^\Lambda(0)=0$ for all $n=1,\ldots,d+1$ implies $P^\epslam_{\rm sa}(0) = P^\epslam_*(0)$ and $V^\epslam(0)={\bf 1}$.
 
The other  proofs are organised as follows. In Section~\ref{PropProof} we construct  the adiabatic expansion and prove Proposition~\ref{SuperLemma}. In Section~\ref{sec:proofAdiabatic}, based on Proposition~\ref{SuperLemma}, we prove Theorem~\ref{AdiThm} and then Theorem~\ref{expThm} (which then implies Corollary~\ref{ExpCor}).
In Section~\ref{sec:AdiCurrent} we first discuss  applications of the above results in the context of adiabatic charge transport. 
\bigskip

\section{Adiabatic  currents and quantum Hall systems}\label{sec:AdiCurrent}

 In this section we apply Theorem~\ref{AdiThm} and its Corollary~\ref{ExpCor} in order to compute currents and  current densities induced by   adiabatic changes of a Hamiltonian when the system starts in its gapped ground state. Then we briefly discuss the application to conductivity and conductance in quantum Hall systems.
  
First note that in general the  total current operator  on a torus $\Lambda$ is only well defined for Hamiltonians with finite-range hoppings, since for a long-range hop on a torus the direction of the hop might be ambiguous. This is related to the fact that
there is no ``good'' position operator $Q$ on the torus   that yields the  current operator in the form $J = \I [H,Q]$ for general $H$. Thus  we     restrict ourselves to Hamiltonians $H$ that are uniformly finite range uniformly in time.
Recall that this means, in particular, that there is a uniform bound on the size of the sets $X\subset\Lambda$ where $\Phi^\Lambda_H(X)$ does not vanish, i.e.\  there exists a number $r\in\N$ such that 
\begin{equation} \label{eqn:r}
\sup_{t\in[0,\infty)} \sup_{\Lambda} \max \{ |X|\,|\, X \in \mathcal{F}_{H(t)}(\Lambda) \} \, \leq r  \,, \;\text{ where }\; \mathcal{F}_{H(t)}(\Lambda):= \{X \subset \Lambda\,|\, \Phi^\Lambda_H (t,X)\not=0\}.
\end{equation}
 Hence, if $\Lambda$ is sufficiently large, for each $X\in \mathcal{F}_{H(t)}(\Lambda)$ and any point $y\in X$ it holds that $  X \lminus y \subset \{-M+2,\ldots, M-1\}^d$, i.e.\ the shifted set $X$  does not ``cross the boundary'' of   $\Lambda$. With the help of the shifted position operator  
\[
Q^\Lambda_y   := \sum_{x\in \Lambda} (x\lminus y)  a_x^*a_x \in \mathcal{A}_\Lambda^\mathfrak{N}
\]
``centered'' at $y$
we can now define the   interaction of the microscopic current operator as 
\[
\Phi^\Lambda_J(t,X) := \begin{cases}
\I [ \Phi_H^\Lambda (t,X), Q^\Lambda_{y  }] & \mbox{for } X \in \mathcal{F}_{H(t)}(\Lambda) \mbox{ and any }y \in X\,,\\
0 & \mbox{for } X \in \mathcal{F}(\Lambda) \setminus \mathcal{F}_{H(t)}(\Lambda)\,.
\end{cases}
\]
Note that the definition is independent of the choice of $y \in X$ because for no $y\in X$ does the set $X$   overlap the set where the shifted position operator $Q^\Lambda_{ y }$ is discontinuous.
The  current operator on $\Lambda$ is defined accordingly as
\[
J^\Lambda(t) =  \sum_{X \subset \Lambda} \Phi_J^{\Lambda}(t,X)  \in \mathcal{A}_\Lambda^\mathfrak{N}\,.
\]
 Since  $H(t)$ is uniformly finite range, also  $J(t)$ has this property.
 For a Hamiltonian of the form \eqref{ExHamiltonian} the current operator is explicitly given by\footnote{Note that the following expression makes also sense if $T$ is not compactly supported but only exponentially decaying, and one could use it as a definition of the current operator in this specific case.}
\[
J_{TVW}^\Lambda (t) = -\I \sum_{(x,y)\in  \Lambda^2}   (x\lminus y) \, a^*_x \,T(t,x\lminus y) \,a_y\,.
\]
For the discussion of currents it is more transparent if we shift the time-evolution to states. In the following we write
\[
 \rho^\epslam(t) := U^\epslam(t,0) \rho_0^\Lambda U^\epslam(0,t)\,, \;\; \rho_\parallel^\epslam(t) := U_\parallel^\epslam(t,0) \rho_0^\Lambda U_\parallel^\epslam(0,t)\,,
 \;\; \mbox{and} \;\;  \rho_{\rm sa}^\epslam(t) := U_{\rm sa}^\epslam(t,0) \rho_0^\Lambda U_{\rm sa}^\epslam(0,t)
\]
for the full resp.\ adiabatic resp.\ super-adiabatic evolution of an initial state $\rho_0^\epslam$. The \emph{current density} (in the macroscopic time scale) at time $t$ is then, by definition,
\begin{equation} \label{eqn:curr_dens}
\mathcal{J}^\epslam(t) := \frac{1}{\epsi |\Lambda|}   \tr  \left(\rho^\epslam(t)  J^\Lambda(t) \right) \,.
\end{equation}
 To make contact to   certain formulas for $\mathcal{J}^\epslam(t)$ that are widespread in the literature {(see e.g.\ \cite{NT,AS,HM,GMP})}, we introduce the family of twisted Hamiltonians $\{H(\alpha)\}_{\alpha\in\R^d}$ defined by the twisted interactions
\begin{equation}\label{Halpha}
\Phi^\Lambda_H(\alpha,t,X) := \begin{cases}
 \E^{-\I \alpha\cdot Q^\Lambda_{y  }}\, \Phi_H^\Lambda (t,X) \,\E^{\I \alpha\cdot Q^\Lambda_{y  }} & \mbox{for } X \in \mathcal{F}_{H(t)}(\Lambda) \mbox{ and any }y \in X\,,\\
0 & \mbox{for } X \in \mathcal{F}(\Lambda) \setminus \mathcal{F}_{H(t)}(\Lambda)\,.
\end{cases}
\end{equation}
Then
\[
J^\Lambda(t)  =  \nabla_\alpha H^\Lambda(\alpha,t)\big|_{\alpha=0},
\]
and, by standard perturbation theory, the ground state projection $P^\Lambda_*(\alpha,t)$ is a differentiable function of $\alpha$ for $\alpha$ in a possibly $\Lambda$-dependent neighborhood of $ 0\in\R^d$.

 As a corollary of the adiabatic theorem, Theorem~\ref{AdiThm0} (b), we can now easily show that the  current density  is given by one of the standard formulas used in the physics and mathematics literature as a definition of the adiabatic current density in such systems. In this very general setting, however, we have to add one more assumption, namely the \emph{vanishing of persistent currents} in the system. More precisely, we assume that for any   ground state {projection}  $ G ^\Lambda(t)$, i.e.\  $P_*^\Lambda(t) G ^\Lambda(t) P_*^\Lambda(t) = G ^\Lambda(t)$, the stationary current vanishes, 
 \begin{equation}\label{persvanish}
  \tr  \left( G ^\Lambda(t) J^\Lambda(t) \right) = 0 \qquad \mbox{ for all $\Lambda$.} 
 \end{equation}
 That means that in such a system the only current flowing is the one induced by the change of the Hamiltonian.\footnote{Alternatively, we could take  a point of view that is often taken  in response theory and compute the relative quantity
\[ \frac{1}{\epsi |\Lambda|}   \tr  \left(\left(\rho^\epslam(t)  - \rho^\epslam_{{\rm sa},(0) }(t) \right) J^\Lambda(t) \right) \,. \] 
 That is, we are only interested in the current induced by the change of the Hamiltonian and not in the persistent current flowing  through the system even in the stationary state.}
 A sufficient condition for \eqref{persvanish} to hold in the case of a non-degenerate ground state, i.e.\ ${\dim} \operatorname{ran} P_*^\Lambda(t)\equiv1$,  is space-inversion symmetry.

\begin{corollary}\label{currentcor}
Let the Hamiltonian satisfy conditions  (A1)$_m$  and (A2), $\sigma_*^\Lambda(t) = \{E^\Lambda(t)\}$, and  assume that $H$ is uniformly finite range. Then for every $T>0$ there is a constant $C>0$ such that
\begin{equation}\label{corstat1}
\sup_{\Lambda(M):M\geq M_0} \,\sup_{t\in[-T,T]}\left|
\frac{1}{\epsi |\Lambda|}   \tr  \left(\rho^\epslam(t)  J^\Lambda(t) \right)  -  \frac{1}{ \epsi |\Lambda|}   \tr   \left(\rho^\epslam_{{\rm sa} }(t)  J^\Lambda(t) \right) \right| \leq \epsi^{ m-1 }\, C\,.
\end{equation}
Assume, in addition, $\rho_0^\epslam = P_*^\epslam(0) \rho_0^\epslam P_*^\epslam(0)$   and that the system  admits no persistent currents, i.e.\ that \eqref{persvanish} holds, then
\begin{eqnarray}\label{corstat1bis} 
\mathcal{J}^\epslam (t) &= &\frac{\I}{|\Lambda|} \tr  \left( \rho^\Lambda_\parallel(t)    \left( J^\epslam(t) R^\epslam_*(t)\dot P^\epslam_*(t)   - \dot P^\epslam_*(t)  R^\epslam_*(t) J^\epslam(t)\right)
\right)
 +\Or(\epsi)\\\label{corstat1bis2} 
&= &  \frac{\I}{|\Lambda|}  \tr \left(\rho^\Lambda_\parallel(t)  \left[  \dot P^\epslam_*(t) ,\nabla_\alpha P^\epslam_*(t) |_{\alpha=0} \right] \right) +\Or(\epsi)
\end{eqnarray}
uniformly in the system size $|\Lambda|$ and on any bounded time interval $[-T,T]$.

\end{corollary}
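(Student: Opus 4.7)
The plan is to reduce the statement to applications of Theorem~\ref{AdiThm} and Corollary~\ref{ExpCor}. Since $H$ is uniformly finite range, so is $J(t)$, and $\Phi_J\in\mathcal{B}_{\mathcal{E},\infty,0}$ with $\|\Phi_J\|_{\zeta,2,0}$ locally uniformly bounded in $t$. The bound~\eqref{corstat1} then follows directly from Theorem~\ref{AdiThm} applied with $B^\epslam = J^\Lambda(t)$ and $L = L_H = 0$: the operator-norm estimate $\epsi^m C M^d \|\Phi_J\|_{\zeta,2,0}$, after tracing the difference against $\rho_0^\Lambda$ (with $\|\rho_0^\Lambda\|_1 = 1$) and dividing by $\epsi|\Lambda| = \epsi M^d$, yields the claimed $\epsi^{m-1}$ bound uniformly in $\Lambda$.

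For~\eqref{corstat1bis} the plan is to combine~\eqref{corstat1} with Corollary~\ref{ExpCor} applied to $B = J^\Lambda(t)$ at $s=0$. Tracing the $P^\epslam_{\rm sa}(0)$-sandwiched bound of Corollary~\ref{ExpCor} against $\rho_0^\Lambda$ and carrying the $V^\epslam(0)$-conjugation through by means of $P^\epslam_*(0) V^\epslam(0)^* = V^\epslam(0)^* P^\epslam_{\rm sa}(0)$ gives
\[
\tr\bigl(\rho^\Lambda(t) J^\Lambda(t)\bigr) = \tr\bigl(\rho_0^\Lambda B^\epslam_{(1)}(t,0)\bigr) + O(\epsi^2 |\Lambda|).
\]
The crucial observation is that the generator $K^\epslam_\parallel + \epsi \tilde K^\epslam_2$ of $U^\epslam_{\parallel\,(1)}$ commutes with $P^\epslam_*$---because $\tilde K^\epslam_2$ lies in the $P^\epslam_*$-sector by~\eqref{K2}---so $U^\epslam_{\parallel\,(1)}$ also intertwines the spectral subspaces exactly, whence $\rho^\epslam_{\parallel\,(1)}(t) := U^\epslam_{\parallel\,(1)}(t,0) \rho_0^\Lambda U^\epslam_{\parallel\,(1)}(0,t)$ stays in the range of $P^\epslam_*(t)$. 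By the no-persistent-current hypothesis~\eqref{persvanish} the first summand in
\[
\tr(\rho_0^\Lambda B^\epslam_{(1)}(t,0)) = \tr(\rho^\epslam_{\parallel\,(1)}(t) J^\Lambda(t)) - \I\epsi\,\tr(\rho^\epslam_\parallel(t)\, [\tilde A^\epslam_1(t), J^\Lambda(t)])
\]
vanishes identically. The remaining order-$\epsi$ term is reshaped algebraically from the explicit form~\eqref{A1} of $\tilde A^\epslam_1$ together with $P^\epslam_* R^\epslam_* = R^\epslam_* P^\epslam_* = 0$ and $\rho^\epslam_\parallel = P^\epslam_* \rho^\epslam_\parallel P^\epslam_*$ to obtain
\[
\tr\bigl(\rho^\epslam_\parallel(t)\,[\tilde A^\epslam_1(t), J^\Lambda(t)]\bigr) = -\tr\bigl(\rho^\epslam_\parallel(t)\,\bigl(J^\Lambda(t) R^\epslam_*(t) \dot P^\epslam_*(t) - \dot P^\epslam_*(t) R^\epslam_*(t) J^\Lambda(t)\bigr)\bigr),
\]
and division by $\epsi|\Lambda|$ gives~\eqref{corstat1bis}.

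For~\eqref{corstat1bis2} the plan is to invoke first-order perturbation theory for the spectral projection of the twisted family $H^\Lambda(\alpha, t)$. Since $J^\Lambda(t) = \nabla_\alpha H^\Lambda(\alpha, t)|_{\alpha=0}$, the standard Riesz-projection formula yields
\[
\nabla_\alpha P^\epslam_*(t)\bigr|_{\alpha=0} = -R^\epslam_*(t) J^\Lambda(t) P^\epslam_*(t) - P^\epslam_*(t) J^\Lambda(t) R^\epslam_*(t).
\]
Using $P^\epslam_* \dot P^\epslam_* P^\epslam_* = 0$, $P^\epslam_* \nabla_\alpha P^\epslam_* P^\epslam_* = 0$, and $P^\epslam_* R^\epslam_* = R^\epslam_* P^\epslam_* = 0$, one checks algebraically that
\[
P^\epslam_*(t)\,\bigl[\dot P^\epslam_*(t), \nabla_\alpha P^\epslam_*(t)|_{\alpha=0}\bigr]\,P^\epslam_*(t) = P^\epslam_*(t) \bigl(J^\Lambda(t) R^\epslam_*(t) \dot P^\epslam_*(t) - \dot P^\epslam_*(t) R^\epslam_*(t) J^\Lambda(t)\bigr) P^\epslam_*(t),
\]
and tracing against $\rho^\epslam_\parallel$ (which is $P^\epslam_*$-sandwiched) identifies~\eqref{corstat1bis} with~\eqref{corstat1bis2}.

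I expect the main technical point to be the careful bookkeeping of the $O(\epsi)$-corrections produced by the $V^\epslam(0)$-conjugation, ensuring that the only surviving order-$\epsi$ contribution to $\tr(\rho^\Lambda(t) J^\Lambda(t))$ is the one identified above: these corrections take the form of traces $\tr([\rho_0^\Lambda, Y] Z)$ with $Y, Z$ built from local Hamiltonians (such as $S^\epslam(0)$ and $U^\epslam_\parallel(0,t) J^\Lambda(t) U^\epslam_\parallel(t,0)$), and they are $O(\epsi^2 |\Lambda|)$---hence subleading after division by $\epsi|\Lambda|$---by the locality and commutator estimates from the appendices, in the same spirit as the proof of Corollary~\ref{ExpCor}.
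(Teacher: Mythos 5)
Your treatment of \eqref{corstat1}, of the role of \eqref{persvanish} (via the exact intertwining of $U^\epslam_{\parallel\,(1)}$), and the algebra turning $-\I\epsi\,\tr(\rho^\epslam_\parallel[\tilde A^\epslam_1,J^\Lambda])$ into \eqref{corstat1bis} and then, via $\nabla_\alpha P^\epslam_*|_{\alpha=0}=-R^\epslam_*J^\Lambda P^\epslam_*-P^\epslam_*J^\Lambda R^\epslam_*$, into \eqref{corstat1bis2}, all match the paper's proof. The genuine gap is in the step where you pass from $\tr(\rho^\epslam(t)J^\Lambda(t))$ to $\tr(\rho_0^\Lambda B^\epslam_{(1)}(t,0))+\Or(\epsi^2|\Lambda|)$ using Corollary~\ref{ExpCor}. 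The paper does not use Corollary~\ref{ExpCor} here; it invokes Theorem~\ref{AdiThm0}(b), whose hypothesis $\frac{\D^n}{\D t^n}H^\Lambda(0)=0$ for $n=1,\ldots,d+1$ forces $S^\epslam(0)=K^\epslam(0)=0$, hence $V^\epslam(0)={\bf 1}$ and $P^\epslam_{\rm sa}(0)=P^\epslam_*(0)$, so the corrections you are trying to control simply do not arise. In your route they do arise, twice: the state $\rho_0^\Lambda$ is $P^\epslam_*(0)$-sandwiched while the estimate of Corollary~\ref{ExpCor} is $P^\epslam_{\rm sa}(0)$-sandwiched, and after using $P^\epslam_*(0)V^\epslam(0)^*=V^\epslam(0)^*P^\epslam_{\rm sa}(0)$ you are left with a conjugation of the \emph{time-evolved} observable, i.e.\ with terms of the schematic form $\epsi\,\tr\bigl(\rho_0^\Lambda\,[\,U^\epslam(0,t)J^\Lambda(t)U^\epslam(t,0),\,S^\epslam(0)\,]\bigr)$ plus higher commutators.

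Your claim that these are $\Or(\epsi^2|\Lambda|)$ ``by the locality and commutator estimates from the appendices'' is not justified and is the crux of the matter. The operator $S^\epslam(0)$ is an extensive local Hamiltonian ($\|S^\epslam(0)\|\sim M^d$), and the observable has been evolved over the \emph{adiabatic} time scale, so Lemma~\ref{CommuLemma} only yields $\|[\,U^\epslam(0,t)\Phi^\epslam_J(X)U^\epslam(t,0),S^\epslam(0)]\|\lesssim (1+(|t|/\epsi)^d)$ per interaction term; summing over $X$ gives a bound of order $\epsi\cdot M^d\,\epsi^{-d}$ for the correction, which after division by $\epsi|\Lambda|$ is $\Or(\epsi^{-d})$, not $\Or(\epsi)$. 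Even the most optimistic bound one could hope to extract from Lemmas~\ref{lemma:comm1} and \ref{manyadlemma} without the adiabatic scaling, namely $\Or(\epsi M^d)$, would after division by $\epsi|\Lambda|$ contribute at $\Or(1)$, i.e.\ at the same order as the right-hand side of \eqref{corstat1bis}, and would therefore spoil the formula unless shown to vanish — which it does not in general. So your argument cannot be completed ``in the same spirit as the proof of Corollary~\ref{ExpCor}''; one either needs the vanishing-derivative condition at the initial time (as in Theorem~\ref{AdiThm0}, which is the route the paper takes), or one must take the initial state to be sandwiched by $P^\epslam_{\rm sa}(0)$ rather than $P^\epslam_*(0)$ so that the $V^\epslam(0)$-conjugation can be absorbed into the state exactly.
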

\begin{proof}
 Statement \eqref{corstat1} follows immediately from statement  \eqref{ATstatement} of Theorem~\ref{AdiThm}, cf.\ also Remark~1 below Theorem~\ref{AdiThm}. 
For the second statement observe that according to Theorem~\ref{AdiThm0}~(b) we have
 \begin{eqnarray*}
 \frac{1}{\epsi |\Lambda|} \tr  \left(\rho^\epslam(t)  J^\Lambda(t) \right) &=& \frac{1}{\epsi |\Lambda|} \tr  \left(U^\epslam(t,0) \rho_0^\Lambda U^\epslam(0,t) J^\Lambda(t) \right)\\
  &=& \frac{1}{\epsi |\Lambda|} \tr  \left( \rho_0^\Lambda P_*^\epslam(0) U^\epslam(0,t) J^\Lambda(t) U^\epslam(t,0)P_*^\epslam(0)\right)\\
  &=& \frac{1}{\epsi |\Lambda|} \tr  \left( \rho_0^\Lambda P_*^\epslam(0) J^\epslam_{\parallel\,(1)}(t) P_*^\epslam(0)\right) +\Or(\epsi \|\Phi_{J(t)}\|_{a,2,0})\,.
 \end{eqnarray*}
 The first summand in 
  \[
J^\epslam_{\parallel\,(1)}(t) =   U^\epslam_{\parallel\,(1)} (0,t)\, J^\epslam(t)  \,U^\epslam_{\parallel\,(1)}(t,0)+ \I\,\epsi\,  U^\epslam_{\parallel}(0,t ) \left( J^\epslam(t) R^\epslam_*(t)\dot P^\epslam_*(t)   - \dot P^\epslam_*(t)  R^\epslam_*(t) J^\epslam(t)\right)  U^\epslam_{\parallel}(t,0 )
 \]
 does not contribute because $U^\epslam_{\parallel\,(1)} (t,0)\rho_0^\Lambda U^\epslam_{\parallel\,(1)} (0,t)= P_*^\epslam(t)U^\epslam_{\parallel\,(1)} (t,0)\rho_0^\Lambda U^\epslam_{\parallel\,(1)} (0,t)P_*^\epslam(t)$  and we assume \eqref{persvanish}. 
 For the second summand we find by straightforward algebra that 
 \begin{eqnarray*}\lefteqn{
\frac{1}{\epsi |\Lambda|} \tr  \left( \rho_0^\Lambda P_*^\epslam(0) J^\epslam_{(1)\parallel}(t) P_*^\epslam(0)\right) }\\&=&
\frac{\I}{|\Lambda|} \tr  \left( \rho^\Lambda_\parallel(t)  P^\epslam_*(t) \left( J^\epslam(t) R^\epslam_*(t)\dot P^\epslam_*(t)   - \dot P^\epslam_*(t)  R^\epslam_*(t) J^\epslam(t) 
\right)
P^\epslam_*(t)
\right)\,,
\end{eqnarray*}
proving \eqref{corstat1bis}, since $\rho^\Lambda_\parallel(t) = P^\epslam_*(t)\rho^\Lambda_\parallel(t)  P^\epslam_*(t)$.
 To evaluate this expression further,  first observe that (omitting time-variables and superscripts for better readability)
 \begin{align*}
P_* J R &= P_*\,(\nabla_\alpha H)  R |_{\alpha=0}   = P_*\,(\nabla_\alpha (H -E_{ * } ) ) R |_{\alpha=0} \\
& = \nabla_\alpha
( P_* (H -E_{ * } )  R) |_{\alpha=0} - (\nabla_\alpha P_*)\, (H -E_{ * } )  R |_{\alpha=0} - P_*\, (H -E_{ * } )   \nabla_\alpha R |_{\alpha=0}\\
&=  - (\nabla_\alpha P_*)\,P_*^\perp |_{\alpha=0}\,.
\end{align*}
Hence,
\begin{eqnarray*}\lefteqn{
 P^\epslam_*(t) \left( J^\epslam(t) R^\epslam_*(t)\dot P^\epslam_*(t)   - \dot P^\epslam_*(t)  R^\epslam_*(t) J^\epslam(t) 
\right)
P^\epslam_*(t)=}\\
&=&  P^\epslam_*(t)\left(- (\nabla_\alpha P^\epslam_*(t) |_{\alpha=0}) P^{\epslam }_*(t)^\perp \dot P^\epslam_*(t) +\dot P^\epslam_*(t)P^{\epslam }_*(t)^\perp (\nabla_\alpha P_*^\epslam(t) |_{\alpha=0})
\right)
P^\epslam_*(t)\\
&=&  P^\epslam_*(t)  \left[  \dot P^\epslam_*(t) ,\nabla_\alpha P^\epslam_*(t) |_{\alpha=0} \right]  
P^\epslam_*(t)\,,
\end{eqnarray*}
proving also \eqref{corstat1bis2}.
\end{proof}

To obtain even more explicit formulas, let $(\varphi_n^\Lambda(t))_{n=0,\ldots, \dim (\Hi_{\Lambda,N})-1 }$ be an orthonormal basis 
 of  eigenvectors of $H^\Lambda(t)$,
\[
H^\Lambda(t) \, \varphi_n^\Lambda(t)  = E^\Lambda_n(t)\, \varphi^\Lambda_n(t) \,,
\]
such that $\operatorname{span} \{ \varphi^\Lambda_0(0) ,\ldots,  \varphi^\Lambda_{\dege-1}(0)\} = {\rm ran} P_*(0)$ and $\varphi_j^\Lambda(t) = U^\epslam_{\parallel}(t,0 ) \varphi_j^\Lambda({0})$ for $j=0,\ldots, \dege-1$.
Insering this into  \eqref{corstat1bis} and \eqref{corstat1bis2}
 we find by a straightforward computation%
\footnote{Notice that
$
\langle \varphi_n^\Lambda(t) , \dot \varphi_0^\Lambda(t) \rangle = 0$  for all $n=0,\ldots \kappa-1$.
} 
 two formulas for the leading order approximation to the macroscopic current density: dropping the dependence on time, this reads
\begin{align}\label{currentformula}
\mathcal{J}^\Lambda(t)    &=  -\frac{2}{|\Lambda|} {\rm Im}\left( \sum_{n\geq \dege} \frac{\langle\varphi_n^{\Lambda},\partial_t \varphi_0^{\Lambda}\rangle \,\langle \varphi_0^{\Lambda}, J^{\Lambda} \varphi_n^{\Lambda}\rangle}{E_n^{\Lambda}-E_0^{\Lambda}}
\right)\;+\;\Or(\epsi)\\
\label{currentformula2}
&= -\,  \frac{ 2}{|\Lambda|}\,{\rm Im} \left\langle  \partial_t  \varphi_0^\Lambda,  \nabla_\alpha \varphi_0^\Lambda  \big|_{\alpha=0}  \right\rangle
\;+\;\Or(\epsi)\,.
\end{align}
The right-hand side of \eqref{currentformula}, to be compared with \eqref{NiuThouless}, matches exactly the integrand of Formula  (2.13) in {\cite{NT}} (see also Formula (2.5) in \cite{ThPump}) for lattice systems: contrary to \cite{ThPump,NT}, however, in our case the formula holds even for a possibly degenerate ground state. 
Formula~\eqref{currentformula2} has the form of a curvature of the line bundle of ground states and was derived  e.g.\ in  \cite{AS}.  Let us stress once again that the error terms in both formulas above are bounded uniformly in the system size $|\Lambda|$. 

\subsection{Conductivity in quantum Hall systems}

Since the quantum Hall effect is the most prominent application of adiabatic currents,  let us briefly recall how \eqref{currentformula2} relates to the quantum Hall current. In a quantum Hall system an electromotive force in the form of a linear electric potential is applied across a two-dimensional sample and the Hall current is measured perpendicular to the electromotive force.  The general idea from \cite{NT,AS} is to implement the electromotive force in the case of a torus-geometry of the sample by a time-dependent ``gauge'' transformation. Let $H_0^\Lambda$ be the time-independent Hamiltonian of the unperturbed system and 
$H_0^\Lambda(\alpha_1,\alpha_2)$ the corresponding family of twisted Hamiltonians as in \eqref{Halpha}.
Then,  if the field is applied in the $2$-direction, the time-dependent Hamiltonian of the system is\footnote{We ignore the initial smooth switching   of the electric field, which could be modeled by putting $H^\Lambda(t) := H_0^\Lambda ( 0, f(\mathcal{E} t))$ for some smooth function $f:\R\to\R$ supported in $[0,\infty)$ with $f(s) =s$ for $s>s_0$.}
\[
H^\Lambda(t) := H_0^\Lambda ( 0, \mathcal{E} t)\,.
\]
Transforming to the   time variable $ \alpha_2 = \mathcal{E}t$ and assuming that the gap remains open for all $ \alpha_2 \in[0,2\pi)$, we obtain exactly an adiabatic problem to which Theorem~\ref{AdiThm} and Corollary~\ref{currentcor} apply  with $\epsi$ replaced by $\mathcal{E}$.  Note that now $t$ is the relevant time-variable, and the current density in \eqref{eqn:curr_dens}  does not have a prefactor $1/\mathcal{E}$.

According to \eqref{currentformula2}, the induced current density in the $1$-direction at time $t$ is, uniformly in the system size,
\begin{equation} \label{conductivity}
\begin{aligned}
\frac{1}{  |\Lambda|}   \tr  \left(\rho^{\mathcal{E},\Lambda}(t)  J^\Lambda_{1}(t) \right)   &= -\,  \frac{ 2}{|\Lambda|}\,{\rm Im} \left\langle  \partial_t  \varphi_0^\Lambda(0,\mathcal{E}t),  \partial_{\alpha_1} \varphi_0^\Lambda(0,\mathcal{E}t)  \right\rangle    +\Or(\mathcal{E}^2)\\ 
&=
\mathcal{E}\, \frac{ 2}{|\Lambda|}\,{\rm Im} \left\langle  \partial_{\alpha_1} \varphi_0^\Lambda(0,\mathcal{E}t) ,  \partial_{\alpha_2} \varphi_0^\Lambda(0,\mathcal{E}t)\right\rangle +\Or(\mathcal{E} ^2)\,.
\end{aligned}
\end{equation}
Hence, the Hall conductivity at finite system size and finite field $\mathcal{E}$, that is  the ratio between the current density and the applied field, is  
\begin{equation}\label{conductivityformula}
\sigma^{\mathcal{E},\Lambda}_{12}(t)  =  \frac{ 2}{|\Lambda|}\,{\rm Im} \left\langle  \partial_{\alpha_1} \varphi_0^\Lambda(0,\mathcal{E}t) ,  \partial_{\alpha_2} \varphi_0^\Lambda(0,\mathcal{E}t)\right\rangle +\Or(\mathcal{E} )\,.
\end{equation}
A quantity of physical interest would be the zero-field Hall conductivity of the infinite system, i.e.\ the limit
\[
\sigma_{1 2} := \lim_{\mathcal{E}\to 0} \lim_{M \to \infty} \sigma^{\mathcal{E},\Lambda}_{12}(t)\,. 
\]
This quantity is expected to be independent of $t$ and quantized, i.e.\  to take  values in $\frac{1}{2\pi}\frac{1}{\kappa}\Z$ in our units
\footnote{ Notice that $1/2\pi = e^2/h$ in units where $e=1$ and $\hbar=1$.}  
\cite{Th}. Recall that $\kappa$ is the degeneracy of the ground state, which we now assume to become constant for $M$ large enough.
The existence of this limit clearly depends on the details of the Hamiltonian $H_0$. However, our result shows that it suffices to analyze the leading order term in \eqref{conductivityformula}, since the error term is of order $\mathcal{E}$ uniformly in the system size.

For non-interacting systems and $\kappa=1$, quantization of $\sigma_{12}$ is well known    (e.g.\ \cite{BES,ASS}). Recently also integer quantization of Hall conductivity in  interacting  Haldane-type  models with small interaction was shown by Giuliani, Mastropietro, and Porta \cite{GMP}.
Although they do not take \eqref{conductivity} as a definition of conductivity, they also assume validity of a linear response approximation.
On the other hand, as they start from perturbing a gapped non-interacting system with a non-degenerate ground state, they do not need to assume a uniform gap for the interacting system. 

In general, however, a proof of quantization of Hall conductivity for interacting systems is still an open problem, even when starting from formula \eqref{conductivityformula}, which is now established rigorously by our result. Also an averaging procedure (c.f.\ \cite{AS,NT} and the next subsection for averaging in the case of Hall conductance) does not prove quantization in a simple way: Assume that the gap of $H_0(\alpha_1,\alpha_2)$ remains open for all $(\alpha_1,\alpha_2)\in [0,2\pi)^2$ and that $\kappa=1$.  Introduce
\[
\sigma^{\mathcal{E},\Lambda}_{12}(\alpha_1, t)  =  \frac{ 2}{|\Lambda|}\,{\rm Im} \left\langle  \partial_{\alpha_1} \varphi_0^\Lambda(\alpha_1,\mathcal{E}t) ,  \partial_{\alpha_2} \varphi_0^\Lambda(\alpha_1,\mathcal{E}t)\right\rangle\,.
\]
 Then the average of  $\sigma^{\mathcal{E},\Lambda}_{12}(\alpha_1, t)$  is
\begin{align*} 
\langle \sigma_{12}\rangle &:=
\frac{\mathcal{E}}{4\pi^2 }\int_0^{2\pi/\mathcal{E}}\D t\int_0^{2\pi}\D\alpha_1  \frac{ 2}{|\Lambda|}\,{\rm Im} \left\langle  \partial_{\alpha_1} \varphi_0^\Lambda(\alpha_1,\mathcal{E}t) ,  \partial_{\alpha_2} \varphi_0^\Lambda(\alpha_1,\mathcal{E}t)\right\rangle\\
&=  \frac{1}{|\Lambda|}  \int_{[0,2\pi)^2} \frac{\D^2\alpha}{4\pi^2 } \;2\,{\rm Im} \left\langle  \partial_{\alpha_1} \varphi_0^\Lambda(\alpha ) ,  \partial_{\alpha_2} \varphi_0^\Lambda(\alpha )\right\rangle \;\in\; \frac{1}{|\Lambda|} \cdot \,\tfrac{1}{2\pi}\Z\,,
\end{align*}
since the integral is the Chern number of  a line bundle over the torus. Without any further assumption, it is not obvious (and not even clear if it should be expected, see \cite{ThPump}) that this Chern number is a multiple of $|\Lambda|$. 

This statement can be proved by assuming that the system be translational invariant, see e.g.\ \cite{ASS}, by relating it to the Hall conductance (see the next subsection). 
A similar result can be obtained even in the presence of disorder, which breaks translation invariance pointwise, but under an homogeneity assumption that models a disordered crystalline system, in the sense of \cite{BES}. The latter approach allows also for the presence of a mobility gap rather than a spectral gap, but is however limited to non-interacting fermion systems.

\subsection{Conductance in quantum Hall systems}

The Hall conductance is somewhat  easier to handle. The latter is usually defined as the ratio  of the current $I$   through a fiducial line in the two-dimensional sample, say the line $\{x_1=0\}$, and a voltage drop $\Delta V$ across a fiducial line, say $\{x_2=0\}$, in the perpendicular direction. 
To model these quantities in our setting on the torus we follow essentially \cite{HM} and define  yet another 2-parameter family of Hamiltonians. As before, let $H_0$ be a uniformly locally-finite gapped Hamiltonian and define
\[
\mathfrak{N}_j    := \sum_{x\in \Lambda_j}  a_x^*a_x    \in \mathcal{A}_\Lambda^\mathfrak{N}\,,
\] 
that is, the number operator counting particles in the left, respectively lower, half  $\Lambda_j:=\{x\in\Lambda\,|\, x_j\leq 0\}$, $j=1,2$,   of the square $\Lambda$. Then the interaction of the Hamiltonian $H_0(\beta_1,\beta_2)$ is defined in two steps as
\[
\Phi^\Lambda_{H_0(\beta_1,0)}( X) := \begin{cases}
 \E^{-\I \beta_1 \mathfrak{N}_1 }\, \Phi_{H_0}^\Lambda ( X) \,\E^{\I \beta_1 \mathfrak{N}_1} &   \mbox{if } X\cap \Lambda_1\not=\emptyset\,,\; X\cap \Lambda\setminus \Lambda_1\not=\emptyset\,,  {\rm dist}(X,\{x_1=0\})\leq r\\
 \Phi_{H_0}^\Lambda ( X) & \mbox{otherwise, }
\end{cases}
\]
and then
\[
\Phi^\Lambda_{H_0(\beta_1,\beta_2)}( X) := \begin{cases}
 \E^{-\I \beta_2 \mathfrak{N}_2 }\, \Phi_{H_0(\beta_1,0)}^\Lambda ( X) \,\E^{\I \beta_2 \mathfrak{N}_2} &   \mbox{if } X\cap \Lambda_2\not=\emptyset\,,\; X\cap \Lambda\setminus \Lambda_2\not=\emptyset\,,{\rm dist}(X,\{x_2=0\})\leq r\\
\Phi^\Lambda_{H_0(\beta_1,0)}( X) & \mbox{otherwise. }
\end{cases}
\]
{In the above, $r$ is as in \eqref{eqn:r}.}
As in the case of conductivity, we consider the time-dependent Hamiltonian $H^\Lambda(t) := H^\Lambda_0(0,\Delta V t)$,
and the current-through-the-line operator is  
\[
I^\Lambda (t) := \partial_{\beta_1} H^\Lambda_0(0,\Delta V t)\,.
\]
Note that $\dot H^\Lambda(t)$ is now localized in a strip $-r< x_2<r$ and $I^\Lambda$ is localized in a strip $-r<x_1<r$.
Hence, assuming as before a gap for all $t\in [0,2\pi/\Delta V)$,   we can now apply Theorem~\ref{AdiThm} for $L$-localized driving and observable and   obtain   
\[
   \tr  \left(\rho^{\Delta V,\Lambda}(t)  I^\Lambda (t) \right)    \;=\; 
\Delta V\cdot 2 \,{\rm Im} \left\langle  \partial_{\beta_1} \varphi_0^\Lambda(0,\Delta V\,t) ,  \partial_{\beta_2} \varphi_0^\Lambda(0,\Delta V\,t)  \right\rangle+\Or(\Delta V ^2)\,.
\]
We thus proved that  the Hall conductance for the finite system at finite voltage $\Delta V$ is given by
\begin{equation}\label{conductance}
\tilde \sigma^{\Delta V,\Lambda}_{12}(t) \;=\; 2 \,{\rm Im} \left\langle  \partial_{\beta_1} \varphi_0^\Lambda(0,\Delta V\,t) ,  \partial_{\beta_2} \varphi_0^\Lambda(0,\Delta V\,t)  \right\rangle +\Or(\Delta V  )\,.
\end{equation}
As first observed in \cite{AS} and \cite{NT}, in this case 
  the averaging argument from the previous section readily shows quantization of the average Hall conductance, 
\[
\langle \tilde \sigma_{12}\rangle \in \tfrac{1}{2\pi}\,\Z\,.
\]
However, it follows from a recent result of Hastings and Michalakis   \cite{HM} (see \cite{BBDF} for a streamlined version of the proof under potentially stronger assumptions)\footnote{As remarked before, strictly speaking \cite{HM} and \cite{BBDF} apply to spin systems only. However, it is believed that these can be transferred to the present setting of interacting fermions with the appropriate modifications. See also \cite{KWKH} for some numerical indications in this sense.} that the leading term in \eqref{conductance} is indeed quantized up to terms that are almost-exponentially small in  the linear size $M$ of $\Lambda$. In particular,  there is a sequence  $k_M\in \Z$  and a function $f \colon 2\N \to \R$ with $\lim_{M\to \infty} M^n f(M) = 0$ for any $n\in\N$  such that
\begin{equation}\label{quantcond}
 \left| 2 \,{\rm Im} \left\langle  \partial_{\beta_1} \varphi_0^\Lambda(0,\Delta V\,t) ,  \partial_{\beta_2} \varphi_0^\Lambda(0,\Delta V\,t)  \right\rangle\;-\; \tfrac{k_M}{2\pi}\right| \leq f(M)\,.
\end{equation}
 In \cite{BBDF}, Theorem 1.4, it is also shown that if ground state expectations of all local observables have a thermodynamic limit,  then also $k_M$ converges and thus becomes constant for $M$ large enough.

In summary it thus follows that in such systems the infinite volume Hall conductance at zero field is quantized,
\begin{equation}\label{infinitecond}
\tilde \sigma_{12} := \lim_{\Delta V\to 0} \lim_{M\to\infty}\frac{  \tr  \left(\rho^{\Delta V,\Lambda}(t)  I^\Lambda (t) \right)  }{\Delta V} = \frac{k}{2\pi} \quad\mbox{ for some }\quad k\in\Z\,.
\end{equation}
As a more technical remark, we note that in \cite{HM} the authors actually prove quantization of the quantity
$2 \,{\rm Im} \left\langle  \partial_{\beta_1} \varphi_0^\Lambda(0) ,  \partial_{\beta_2} \varphi_0^\Lambda(0 )  \right\rangle$ in the sense of \eqref{quantcond} without assuming a gap for $\beta\not=0$. But it follows from their proof that when the gap persists for all $\beta=(0,\beta_2)$, then \eqref{quantcond} holds for all $t$ with the same $k_M$.  However, in order to derive the formula \eqref{conductance} for the conductance from microscopic first principles via the adiabatic theorem, we cannot dispose of the gap conditions for all times $t$. While the derivation of \eqref{infinitecond} through the combination of our adiabatic theorem and the results of \cite{HM} and \cite{H} constitutes the first rigorous proof of quantization of Hall conductance for interacting fermion systems starting from microscopic first principles, it is not yet fully satisfactory because of the gap assumption for all $t$, instead  of only for the fixed initial Hamiltonian $H_0$.
Although it is argued in \cite{BBDF} that in the specific example discussed in the present section the gap assumption for $H_0(0,0)$ implies 
a gap for $H_0(0,\beta_2)$ for all $\beta_2$, we expect that in general  
 one has to leave the realm of standard adiabatic theory and consider almost stationary states for systems where the driving closes the gap. Such states are constructed in \cite{T2}.
 
A different approach to derive the quantization of the Hall conductivity in interacting fermionic systems has been developed by Fr\"ohlich in the early nineties, see \cite{Fr} and references therein for a recent account. This approach is based on the coupling of matter (in the form of fermionic fields) to an electromagnetic gauge field $A$. An effective action for $A$ is derived by ``integrating out'' the fermionic degrees of freedom, and response coefficients like the Hall conductance can be computed from the derivative of this effective action with respect to $A$.

Another open problem is to show that 
    \eqref{conductivity} and \eqref{conductance} hold, at least in the thermodynamic limit,  with errors that are asymptotically smaller than any power of $\mathcal{E}$, resp.\ of $\Delta V$. For non-interacting systems this can be indeed shown (e.g.\ \cite{PST, ST}) and it is expected to hold for interacting systems as well. Indeed, in \cite{KS} the authors show under a gap assumption for all $\beta$ that the averaged Hall conductance satisfies \eqref{conductance}  with error terms of order $(\Delta V)^\infty$. However, their error estimates are not uniform in the size of the system and could  deteriorate in the thermodynamic limit. 

\section{The adiabatic expansion: Proof of Proposition~\ref{SuperLemma} } \label{PropProof}

\begin{proof}[Proof of Proposition~\ref{SuperLemma}.]
To simplify the notation and to improve readability, we often drop the dependence on the box $\Lambda$, on $\epsi$,  and on time $t$.
The strategy of the proof is to determine inductively the coefficients $A_\mu$ and $K_\mu$. 

We start by computing 
\begin{eqnarray*}\lefteqn{ 
\I \epsi   \frac{\D}{\D t}
 U_{\rm sa} (t,s)  \stackrel{\eqref{superadi}}{=} \I \epsi \frac{\D}{\D t}\Big(  V(t) U_{\rm a} (t,s)  V (s)^* \Big) }\\ &  \stackrel{\eqref{adiabaticevolution}}{=}&  \I\epsi  {  \dot V(t)} U_{\rm a}(t,s)  V (s)^* +     V(t)   {  H_{\rm a} (t)} U_{\rm a} (t,s) V (s)^*    \\[2mm] 
&  \stackrel{\eqref{HaDef}}{=}&   V(t)\Big(   \I\epsi    V(t)^* \dot V(t)  +    H (t)  +\epsi K(t)
\Big)V(t)^*U_{\rm sa} (t,s)     \\  & = &   \Big( H(t)  + \underbrace{V(t)\big(   \I\epsi    V(t)^* \dot V(t)  +     H (t)- V(t)^*H(t)V(t)   
  +\epsi K(t)
\big)V(t)^* }_{=:\tilde R(t)}    \Big) \,U_{\rm sa} (t,s) \,.
\end{eqnarray*}
We now choose the coefficients $A_\mu$ entering through \eqref{KSDef} in the definition \eqref{VDef} of $V $ and the coefficients~$K_\mu$ entering through \eqref{KSDef} in the definition \eqref{HaDef} of $H_{\rm a} $  in such a way that $U_{\rm a}$ satisfies \eqref{adievo2} and
  the remainder term $\tilde R$ satisfies 
\begin{equation}\label{Rexp}
\tilde R (t)   = V(t)\Big(  \I\epsi    V(t)^* \dot V(t)  +     H (t)- V(t)^*H(t)V(t)   
  +\epsi K(t)
\Big)  V(t)^* \stackrel{!}{=}\Or (\epsi^{n+1}) \,
\end{equation}
{where we set $n:=m+d$.}
To this end we expand $\tilde R(t)$ in powers of $\epsi$ and choose the coefficients $A_\mu$ and $K_\mu$ inductively.
 Expanding $V^*HV$ yields
\begin{align*}
V^*HV &=  \E^{-\I  \epsi  S  } H \E^{\I  \epsi  S  } =  \sum_{k=0}^n \frac{\epsi^k}{k!}\, \ad_{S }^k(H)
+  \frac{ \epsi^{n+1}}{(n+1)!} \, \E^{-\I \tilde\gamma S  }   \ad_{S }^{n+1} (H) \E^{\I  \tilde \gamma S  }\\
&=: \sum_{\mu=0}^n  \epsi^\mu H_\mu +\epsi^{n+1}h_n(\epsi)\,,
\end{align*}
where  $\tilde \gamma\in [0,\epsi]$ and each $H_\mu$ is defined as  the sum of those terms in the series that carry a factor $\epsi^\mu$.  Above  we denoted by $\ad_{S }^k(H) \equiv \ad_{S (t)}^{k} (H(t)) $ the nested commutator $[-\I S (t),[\cdots, [-\I S (t),[-\I S (t),H (t)]]\cdots]]$, where $-\I S (t)$ appears $k$ times. Including the factor $-\I$ into the definition of  $\ad_S := -\I\, \add_S$ will make computations in the following more transparent. While one could write down an explicit expression for $H_\mu$ (cf.\ \cite{BDF}), this is not necessary for the following. It is only important that 
\[
H_\mu =- \ad_{H} (A_\mu) + L_\mu\,,
\]
where $L_\mu$ contains a finite number of iterated commutators of the operators $A_\nu$, $\nu<\mu$, with~$H$.
Explicitly, the first orders are 
\[
H_0 = H\,, \qquad H_1 = -\ad_{H} (A_1)\,,\qquad H_2 =-\ad_{H} (A_2) - \tfrac{1}{2} [A_1,[A_1,H]]\,.
\]
In order to expand $V^*\dot V$, one uses Duhamel's formula
\[
\I \epsi V^*\dot V = -\epsi^2 \int_0^1 \E^{-\I \lambda \epsi S } \,\dot S   \,\E^{\I \lambda \epsi S }\,\D \lambda\,,
\]
expands the integrand as a series of nested commutators, and integrates term by term, to obtain
\begin{align*}
\I \epsi V^*\dot V &= -\epsi^2    \sum_{k=0}^{n-2} \frac{\epsi^k}{(k+1)!} \,\ad_{S }^k(\dot S )
- \frac{ \epsi^{n+1}}{(n-1)!} \int_0^1 \E^{-\I \lambda \tilde\gamma S  }   \ad_{S  }^{n-1} (\dot S ) \E^{\I \lambda \tilde \gamma S  }\,\D \lambda
\\
& = \sum_{\mu=1}^{n } \epsi^\mu Q_\mu + \epsi^{n+1} q_n(\epsi)\,,
\end{align*}
where again $Q_\mu$ collects all terms in the sum proportional to $\epsi^\mu$. Note that $Q_\mu$ is a finite sum of iterated commutators of the operators $A_\nu$ and $\dot A_\nu$ for $\nu<\mu$. One finds for the first terms
\[
Q_1 = 0\,,\qquad  Q_2 = -\dot A_1 \,,\qquad Q_3 = - \dot A_2 + \tfrac{\I}{2} [ A_1,\dot A_1]\,.
\]
Writing also $\epsi K = \sum_{\mu =1}^n \epsi^\mu K_\mu$, 
inserting the expansions into \eqref{Rexp} yields
\[
 \I\epsi   V ^* \dot  V  +H-     V ^* H  V + \epsi K    = \sum_{\mu=1}^n \epsi^\mu (Q_\mu- H_\mu+K_\mu) + \epsi^{n+1} ( q_n(\epsi)- h_{n}(\epsi) )\,,
\]
and it remains to determine $A_1,\ldots, A_n$ and $K_2,\ldots, K_n$ inductively such that
\[
0\stackrel{!}{=} (Q_\mu- H_\mu+K_\mu )  =  (Q_\mu+      \ad_{H} (A_\mu)- L_\mu  +K_\mu    ) 
\]
i.e.\ 
\begin{equation}\label{solve}
 \ad_{H} (A_\mu)   =  L_\mu   -Q_\mu  -K_\mu
\end{equation}
for all $\mu = 1,\ldots, n$. 

With   $L_1=Q_1=0$, for $\mu=1$ we thus must choose $A_1$ and $K_1$ such that 
\begin{equation}\label{solve1}
\ad_H(A_1)   = -     K_1\,.
\end{equation}
Recall that in standard adiabatic theory one   chooses $K_1=K_\parallel$, ensuring   \eqref{adievo2}. 
 Since  the map $B\mapsto \ad_H(B) =   -\I \,[H,B]$ defines an automorphism of the space 
\[
\mathcal{A}^{\rm OD}_\Lambda := \left\{ B \in \mathcal{A}^{\mathfrak{N}}_\Lambda\,|\, B= P_*BP_*^\perp + P_*^\perp B P_*\right\}
\]
of off-diagonal operators, cf.\ Appendix~\ref{AppendixI},  and since $K_\parallel\in \mathcal{A}^{\rm OD}_\Lambda$,  the equation $\ad_{H} (A_1)   =     -K_\parallel $
  has a unique off-diagonal solution $A_1$.
 However, since $K_\parallel$ is not a local Hamiltonian in general, the corresponding $A_1$ would  not be a local Hamiltonian as well and we cannot set  $K_1=K_\parallel$.
On the other hand, we need that $K_1^{\rm OD} =   K_\parallel$ in order to have the crucial intertwining property \eqref{adievo2} for the adiabatic evolution. The way out of this apparent dilemma  is to add a diagonal part $K_1^{\rm D}$ such that 
$K_1 = K_\parallel + K_1^{\rm D}$ is a local Hamiltonian.

This can be achieved by employing a linear map $\mathcal{I} ^\Lambda_H: \mathcal{A}^{\mathfrak{N}}_\Lambda\to \mathcal{A}^{\mathfrak{N}}_\Lambda$ constructed in the context of the so-called quasi-adiabatic or  spectral flow which has the following  properties, cf.\  Appendix~\ref{AppendixI}: 
\begin{enumerate}
\item[(I1)] $\mathcal{I}_H  $  maps local Hamiltonians to local Hamiltonians,
$\mathcal{I}_H : \mathcal{L}_{\mathcal{S}, k+1,L_H} \to \mathcal{L}_{\mathcal{S}, k,L_H }$.
\item[(I2)] $\mathcal{I}_H  $ commutes with $H$ and $P_*$,
\[
\mathcal{I}_H  (HA) = H \mathcal{I}_H  ( A) \,, \quad \mathcal{I}_H  ( AH) =   \mathcal{I}_H  ( A)H\,, \quad \mathcal{I}_H  (P_*A) = P_* \mathcal{I}_H  ( A) \,, \quad \mathcal{I}_H  ( AP_*) =   \mathcal{I}_H  ( A)P_*\,.
\]
\item[(I3)] The restriction of $\mathcal{I}_H  $  to $\mathcal{A}^{\rm OD}$ inverts the map $ \ad_H(\cdot)$, i.e.\ for $B\in  \mathcal{A}^{\rm OD}$ it holds that
\[
  \mathcal{I}_H ( \ad_H(B) )  = \ad_H(   \mathcal{I}_H (B) ) =  B\,.
\]
\end{enumerate}
A slight  modification of the standard definition of $\mathcal{I}_H $ {(see e.g.\ \cite{HW,BMNS})} explained in Appendix~\ref{AppendixI} allows for a fourth  property:
\begin{enumerate}
\item[(I4)] If the width $\delta$ of the spectral patch $\sigma_*$ is smaller than the gap $g$, then $\mathcal{I}_H$ can be constructed in such a way  that all operators in the range of $\mathcal{I}_H  $ have a vanishing $P_*(\cdots)P_*$ block, i.e.\ $P_* \,\mathcal{I}_H (A) P_* = 0$    for all  $A\in \mathcal{A}$.
 \end{enumerate}
\begin{lemma}\label{AKlem}
Let  $A_1 := - \mathcal{I}_H ( \mathcal{I}_H (\dot H ) )$ and $K_1  := -\ad_H(A_1)$. Then  
\[
  K^{\rm OD}_1 = K_\parallel\,.
\]
If $\delta<g$, then $P_* K_1P_*=P_* A_1P_* =0$. 
\end{lemma}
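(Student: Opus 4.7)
The plan is to reduce everything to block-algebraic manipulations with respect to the decomposition $\mathcal{A}^\mathfrak{N} = \mathcal{A}^{\rm D} \oplus \mathcal{A}^{\rm OD}$, where $\mathcal{A}^{\rm D} := \{B : B = P_* B P_* + P_*^\perp B P_*^\perp\}$. The two key facts I would rely on are: (i) both $\ad_H$ and $\mathcal{I}_H$ preserve this block decomposition, and (ii) property (I3), which makes $\mathcal{I}_H$ the two-sided inverse of $\ad_H$ on $\mathcal{A}^{\rm OD}$. The identity $K_1^{\rm OD} = K_\parallel$ then reduces to the Kato-type relation $\ad_H(K_\parallel) = -\dot H^{\rm OD}$, obtained by differentiating $[H, P_*] = 0$.

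First I would verify block preservation. For $\ad_H$, using $[H, P_*] = 0$ one checks directly that $P_* \ad_H(B) P_* = \ad_H(P_* B P_*)$ and similarly for the other three blocks, so $\ad_H(\mathcal{A}^{\rm D}) \subset \mathcal{A}^{\rm D}$ and $\ad_H(\mathcal{A}^{\rm OD}) \subset \mathcal{A}^{\rm OD}$. For $\mathcal{I}_H$, the analogous statement follows directly from property (I2), since $\mathcal{I}_H(P_* B) = P_* \mathcal{I}_H(B)$ and $\mathcal{I}_H(B P_*) = \mathcal{I}_H(B) P_*$ together imply that $\mathcal{I}_H$ commutes with the four block-projectors $B \mapsto P_*^{(\perp)} B P_*^{(\perp)}$.

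Writing $\dot H = \dot H^{\rm D} + \dot H^{\rm OD}$, block preservation gives $A_1^{\rm OD} = -\mathcal{I}_H(\mathcal{I}_H(\dot H^{\rm OD}))$, and then
\[
K_1^{\rm OD} = -\bigl[\ad_H(A_1)\bigr]^{\rm OD} = -\ad_H(A_1^{\rm OD}) = \ad_H\bigl(\mathcal{I}_H(\mathcal{I}_H(\dot H^{\rm OD}))\bigr).
\]
Since $\mathcal{I}_H(\dot H^{\rm OD}) \in \mathcal{A}^{\rm OD}$ by block preservation, property (I3) collapses the outer pair, giving $K_1^{\rm OD} = \mathcal{I}_H(\dot H^{\rm OD})$. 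The task is then to identify this operator with $K_\parallel$. Here I would differentiate the intertwining relation $[H, P_*] = 0$ in $t$ to obtain $[H, \dot P_*] = -[\dot H, P_*]$, and then use $K_\parallel = \I[\dot P_*, P_*]$ together with the Jacobi-type computation
\[
[H, [\dot P_*, P_*]] \;=\; [[H, \dot P_*], P_*] \;=\; -[[\dot H, P_*], P_*] \;=\; -\dot H^{\rm OD},
\]
where the last equality follows from expanding $[[\dot H, P_*], P_*] = \dot H P_* + P_* \dot H - 2 P_* \dot H P_* = P_*^\perp \dot H P_* + P_* \dot H P_*^\perp$. This yields $\ad_H(K_\parallel) = \pm \dot H^{\rm OD}$, and applying $\mathcal{I}_H$ to both sides (noting $K_\parallel \in \mathcal{A}^{\rm OD}$) gives $\mathcal{I}_H(\dot H^{\rm OD}) = K_\parallel$ up to the sign dictated by the convention in (I3). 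The claim $[K_1, P_*] = [K_\parallel, P_*]$ then follows because the diagonal part $K_1^{\rm D}$ commutes with $P_*$.

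For the statement under $\delta < g$, property (I4) is the main tool: it asserts $P_* \mathcal{I}_H(X) P_* = 0$ for every $X \in \mathcal{A}$. Applying this to $X = \mathcal{I}_H(\dot H)$ yields $P_* A_1 P_* = -P_* \mathcal{I}_H(\mathcal{I}_H(\dot H)) P_* = 0$. The vanishing of $P_* K_1 P_*$ is then immediate: since $[H, P_*] = 0$ one has $P_* [H, A_1] P_* = [H, P_* A_1 P_*] = 0$, and $K_1 = -\ad_H(A_1) = \I[H, A_1]$ inherits this zero block. The main obstacle in the proof is not technical but notational: correctly tracking factors of $\I$ and signs through the definition of $\ad_H$, the minus sign in the definition of $A_1$, and the inversion convention in (I3); once the block-preservation lemma is set up, the remaining identities are purely algebraic.
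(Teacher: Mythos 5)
Your overall route is the paper's own. The reduction $K_1^{\rm OD}=-\ad_H(A_1^{\rm OD})=\ad_H\big(\mathcal{I}_H(\mathcal{I}_H(\dot H^{\rm OD}))\big)=\mathcal{I}_H(\dot H^{\rm OD})$ via block preservation (from (I2)) and (I3) is exactly the first display of the paper's proof; the identification of $\mathcal{I}_H(\dot H^{\rm OD})$ with $K_\parallel$ uses the same ingredients as the paper --- differentiate $[H,P_*]=0$, use off-diagonality, apply (I3) --- with only the cosmetic difference that you apply (I3) to $K_\parallel$ (inverting $\ad_H(K_\parallel)$), whereas the paper applies it to $\dot P_*$ after computing $[\mathcal{I}_H(\dot H^{\rm OD}),P_*]$ and then uses that both operators are off-diagonal. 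Your treatment of the $\delta<g$ claim via (I4) together with $P_*[H,A_1]P_*=[H,P_*A_1P_*]$ is correct and in fact more explicit than the paper, which leaves that part implicit.

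The one genuine gap is the sign you leave open. Writing $\ad_H(K_\parallel)=\pm\dot H^{\rm OD}$ and appealing to ``the sign dictated by the convention in (I3)'' does not close the argument: (I3) contains no sign freedom, since $\mathcal{I}_H$ is a two-sided inverse of $\ad_H$ on $\mathcal{A}^{\rm OD}$, so once $\ad_H:=-\I[H,\cdot\,]$ and $K_\parallel:=\I[\dot P_*,P_*]$ (see \eqref{Kato}) are fixed, the sign is determined. Your own (correct) identity $[H,[\dot P_*,P_*]]=-\dot H^{\rm OD}$ then yields $\ad_H(K_\parallel)=-\I[H,K_\parallel]=[H,[\dot P_*,P_*]]=-\dot H^{\rm OD}$, hence $\mathcal{I}_H(\dot H^{\rm OD})=-K_\parallel$, which is the opposite of the asserted identity for $A_1=-\mathcal{I}_H(\mathcal{I}_H(\dot H))$; a rank-one check with $H(t)=\begin{pmatrix}0 & t\\ t & 1\end{pmatrix}$ at $t=0$ confirms this. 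Since the entire nontrivial content of $K_1^{\rm OD}=K_\parallel$ is precisely this sign (the rest is soft block algebra), you must either exhibit a compensating sign --- there is none under the conventions as literally stated --- or state explicitly which convention for $\ad_H$, $\mathcal{I}_H$, or $A_1$ you adopt. Be aware that the paper's own proof makes the same silent choice in the step $-\mathcal{I}_H([H,\dot P_*])=\I\dot P_*$, and the tension is visible when comparing \eqref{A1} with the alternative formula in the remark following Proposition~\ref{SuperLemma}; so a careful bookkeeping of conventions is exactly what your write-up still needs to supply.
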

\begin{proof}
We have 
\[
 K^{\rm OD}_1  =-\ad_H(A_1^{\rm OD}) \stackrel{\rm (I2)}{=} 
 \ad_H( \mathcal{I}_H(\mathcal{I}_H(\dot H^{\rm OD})))
   \stackrel{\rm (I3)}{=}   \mathcal{I}_H(\dot H^{\rm OD})\,.
\]
Using that $\dot P_* \in \mathcal{A}^{\rm OD}$, we find that
\begin{eqnarray*}
 [ \mathcal{I}_H (\dot H^{\rm OD} ), P_*]  &\stackrel{\rm (I2)}{=} &\mathcal{I}_H ([ \dot H^{\rm OD} ,P_*])= \mathcal{I}_H ([ \dot H  ,P_*]) =  - \mathcal{I}_H ([   H ,\dot P_*]) \stackrel{\rm (I3)}{=}    \I \dot P_* =   \I [[\dot P_*,P_*],P_*] \\&= &[K_\parallel,P_*]\,,
\end{eqnarray*}
which implies that $
\mathcal{I}_H (\dot H )^{\rm OD}  = K_\parallel^{\rm OD}=K_\parallel
$.\end{proof}

By (I1), $A_1$ is a local Hamiltonian and by Lemma~\ref{manyadlemma} also $K_1$ is a local Hamiltonian. However, to make  the following induction work, we need to be a bit more explicit:
 According to (I1) there is a sequence $(\xi_{0,k})_{k\in\N_0}$ in $\mathcal{S}$ depending only on $H$ and its time derivatives $H^{(r)}$, $r=1,\ldots, n$,  through their $\|\cdot\|_{a,l}$-norms such that   $ \mathcal{I}_H (\dot H)^{(r)} \in \mathcal{L}_{\xi_{0,k},k,L_H}$ for $r=0,\ldots, n-1$ uniformly in time, i.e.\  with
\[
    \|\Phi_{  \mathcal{I}_H (\dot H) }^{(r)}\|_{\xi_{0,k},k,L_H,T}    <\infty
\]
for all $k\in\N_0$, $r=0,\ldots, n-1$, and $T\geq 0$. Applying (I1)  once more, we conclude that there is a sequence 
$(\xi_{1,k})_{k\in\N_0}$ in $\mathcal{S}$ such that $A_1^{(r)} \in \mathcal{L}_{\xi_{1,k},k,L_H}$ for all $k\in\N_0$ and $r=0,\ldots, n-1$ uniformly in time. Finally, by Lemma~\ref{manyadlemma}, also $K_1^{(r)} \in \mathcal{L}_{\xi_{1,k},k,L_H}$ for all $k\in\N_0$ and $r=0,\ldots, n-1$ uniformly in time.

We now proceed inductively. Assume that for $\mu>1$ we constructed $A_\nu$ and $K_\nu$ for all $\nu <\mu$. Thus     $Q_\mu$ and $L_\mu$ are determined and we need to solve \eqref{solve}.
Assuming that $  K_\mu^{\rm OD}=0$,   the off-diagonal part of \eqref{solve} is solved by setting
\[
A_\mu = \mathcal{I}_H  ( L_\mu-Q_\mu   )\,.
\]
Then we pick $K_\mu$ to make the diagonal part of the right-hand side vanish as well:
\[
K_\mu = L_\mu- Q_\mu -\ad_H( A_\mu)    = H_\mu - Q_\mu  \,.
\]
Note that $K_\mu$ is indeed diagonal by (I3) for $\mu>1$.

Assuming   that $A_\nu^{(r)} \in \mathcal{L}_{\xi_{\mu,k},k,L_H}$ for all $k\in\N_0$ and $r=0,\ldots, n-\nu$ uniformly in time, we find by Lemma~\ref{manyadlemma}  that 
 $L_\mu^{(r)}$ and $Q_\mu^{(r)}$ are all in $\mathcal{L}_{\xi_{\mu-1,k},k,L_H}$ uniformly in time for $r=0,\ldots,n-\mu $ and $k\in\N_0$. Thus by Lemma~\ref{Ilemma2} there is a sequence 
$(\xi_{\mu,k})_{k\in\N_0}$ in $\mathcal{S}$ such that $A_\mu^{(r)} \in \mathcal{L}_{\xi_{\mu,k},k,L_H}$ for all $k\in\N_0$ and $r=0,\ldots, n-\mu$ uniformly in time.
In summary, using also Lemma~\ref{Slemma}, we conclude that $S^{\epsi}= \sum_{\mu=1}^n \epsi^{\mu-1} A_\mu$
has an interaction $\Phi_{S^{\epsi}}$ such that for some sequence $\xi_k$ in $\mathcal{S}$ it holds that $\|\Phi_{S^{\epsi}}\|_{\xi_k,k,L_H,T}<\infty$ for all $k\in\N_0$ and $T\geq 0$.

 To see that the remainder term
 $ R   = V \left( q_n - h_n \right) V^*$ is in $\mathcal{L}_{\mathcal{S},\infty,L_H}$ uniformly in time, first note that $q_n$ and $h_n$ each contain a number of terms that are just multi-commutators and can be estimated by Lemma~\ref{manyadlemma}, as well as a remainder term from the Taylor expansion, that can be estimated by combining 
Lemma~\ref{manyadlemma} and Lemma~\ref{transformlemma}. Finally the conjugation with $V$ that leads to $R $ is again 
 estimated by  Lemma~\ref{transformlemma}. Thus we proved \eqref{supadieq} and are left to check the additional claims (a)--(c).

For (a) note that 
if at some time $t'\in \R$ it holds that $\frac{\D^n}{\D t^n} H (t')=0$ for all $n=1,\ldots, m+d$, then, by the above induction,   also
$S^\epslam(t') = K^\epslam(t') =0$.  

Claim (b) was shown in Lemma~\ref{AKlem}. 

For (c) we first recall that for $\sigma_*^\epslam(t) = \{E_*^\epslam(t)\}$ we have $P_*A_1P_*=0$ 
and thus that $\tilde A_1 := A_1 - P_*^\perp A_1P_*^\perp = \tilde A_1^{\rm OD}$ is off-diagonal. Hence
$\tilde A_1$ is the unique off-diagonal solution of $ -\ad_H(\tilde A_1)   =     K_1^{\rm OD} = K_\parallel$, which, according to Appendix~\ref{AppendixI}, is given by
\[
\tilde A_1 = \I [ K_\parallel, R_*]  = -[ [\dot P_*,P_*],R_*] = P_*\dot P_* R_* + R_* \dot P_* P_* 
\]
with $R_* := (H-E_*)^{-1} P_*^\perp$.
For $\tilde K_2$ we finally obtain
\begin{eqnarray*}
\tilde K_2= P_* K_2 P_* &=&  P_*( \dot A_1 + \I\,[H,A_2] - \tfrac{1}{2} [A_1,[A_1,H]]   )P_*= - \tfrac{1}{2}  P_*    [A_1,[A_1,H-E_*]]    P_*
\\&=& - \tfrac{1}{2}  P_*    [\tilde A_1,[\tilde A_1,H-E_*]]    P_* = P_*\dot P_* R_* \dot P_* P_*\,.
\end{eqnarray*}
Here we used that $P_*\dot A_1 P_* = -P_*\mathcal{I}_H (\mathcal{I}_H (\ddot H)) P_* =0$.
This concludes the proof of Proposition~\ref{SuperLemma}.
\end{proof}

\section{Proof of the adiabatic theorem} \label{sec:proofAdiabatic}

 We start with the proof of  the superadiabatic theorem, Theorem~\ref{AdiThm}. 

\begin{proof}[Proof of Theorem~\ref{AdiThm}.]
We freely use the notation from Proposition~\ref{SuperLemma} and its proof provided in the last section.
Also note that for  any self-adjoint operator $B \in \mathcal{A}^\mathfrak{N}_\Lambda$ it holds that
$\|B\| = \sup_P |\tr (PB)|$, where the supremum is taken over all  rank-1 orthogonal projections $P$.
Let thus $P$ be any such projection  and consider first the evolution of a local observable $O\in \mathcal{A}_X^\mathfrak{N}$, $X\subset \Lambda$. A simple Duhamel argument gives
\begin{eqnarray}\label{proof1} 
D(O)& :=& \tr  \left( P  \left(     U^\epslam_{\rm sa}(s,t) \,O \, U^\epslam_{\rm sa}(t,s) \right) - U^\epslam(s,t) \, O  \, U^\epslam(t,s)  \right)\nonumber\\
&=& \tr \left( \left(   U^\epslam_{\rm sa}(t,s)P   U^\epslam_{\rm sa}(s,t) - U^\epslam(t,s)P U^\epslam(s,t) \right)\,O\right)\nonumber\\
&=&    \int_s^t\D \tau \, \tr \Big(  \frac{\D}{\D \tau} \Big( U^\epslam(s,\tau)  U^\epslam_{\rm sa}(\tau,s)P   U^\epslam_{\rm sa}(s,\tau)U^\epslam(\tau,s)\Big)
 U^\epslam(s,t)\,O\, U^\epslam(t,s)
\Big)\nonumber\\
&=& 
\I\epsi^n  \int_s^t\D \tau \, \tr \left(   \left[  R^\epslam(\tau),  U^\epslam_{\rm sa}(\tau,s) P   U^\epslam_{\rm sa}(s,\tau)\right]   U^\epslam(\tau,t)\,O\, U^\epslam(t,\tau)
\right)\nonumber\\
&=& 
- \I \,\epsi^n  \int_s^t\D \tau \, \tr \left(   U^\epslam_{\rm sa}(\tau,s) P   U^\epslam_{\rm sa}(s,\tau) \left[ R^\epslam(\tau), U^\epslam(\tau,t)\,O\, U^\epslam(t,\tau)\right] 
\right)\,.
\end{eqnarray}
Since $P$ and thus also $U^\epslam_{\rm sa}(\tau,s) P   U^\epslam_{\rm sa}(s,\tau)$ has trace one, we have that 
\begin{equation}\label{localEst}
\begin{aligned}
\left|D(O)\right| 
&\leq |t-s| \,\epsi^n \sup_{\tau\in[0,T]} \left\|\left[ R^\epslam(\tau), U^\epslam(\tau,t)\,O\, U^\epslam(t,\tau)\right] \right\|\\
&\leq C \, \epsi^n \, \|\Phi_R\|_{\zeta_0,0,L_H,T} \,\|O\|\, |X|^2  \,\zeta({\rm dist}(X,L_H)) \,|t-s|  (1+\epsi^{-d}|t-s| ^{d})\\
&\leq C\,\epsi^m\,  \|\Phi_R\|_{\zeta_0,0,L_H,T}\,\|O\| \,|X|^2  \,\zeta({\rm dist}(X,L_H))\,|t-s|  (1+|t-s| ^{d})
\,,
\end{aligned}
\end{equation}
where the second inequality follows from Lemma~\ref{CommuLemma} (note that due to the adiabatic time scale we pick up the factor $\epsi^{-d}$) and we set ${\rm dist}(X,L_H) := \min_{x \in X} {\rm dist}(x,L_H)$ (compare \eqref{distxL}).
Recall  that $B^\epslam = \sum_{X\subset \Lambda} \Phi_B^\epslam(X)$. Hence,  substituting $\Phi_B^\epslam (X)$ for~$O$ in \eqref{localEst}, we obtain 
\begin{eqnarray}  \label{comparerhoPi}\lefteqn{\hspace{-20pt}
\Big|    \tr  \left( P  \left(     U^\epslam_{\rm sa}(s,t) \,\Phi_B^\epslam (X) \, U^\epslam_{\rm sa}(t,s)  - U^\epslam(s,t) \, \Phi_B^\epslam (X)  \, U^\epslam(t,s) \right) \right)\Big| \leq }\nonumber\\
& \leq& 
 \epsi^m C \sum_{X \subset \Lambda}   |X|^2\,\zeta({\rm dist}(X,L_H)) \, \|\Phi_B^\Lambda(X)\| 
 \nonumber\\
&\leq &
  \epsi^mC   \sum_{x\in \Lambda} \sum_{X \subset \Lambda:\:x\in X}   |X|^2\,\zeta({\rm dist}(x,L_H)) \, \|\Phi_B^\Lambda(X)\|\nonumber\\
&\leq &
 \epsi^mC  \sum_{x\in\Lambda}   \zeta({\rm dist}(x,L_H)) \,  \sum_{y\in \Lambda} F_\zeta(d_L^\Lambda(x,y))\sum_{X \subset \Lambda:\:x,y\in X}  |X|^2 \, \frac{\|\Phi_B^\Lambda(X)\|}{F_\zeta(d_L^\Lambda(x,y))}\nonumber\\
&\leq&  \epsi^mC  \|\Phi_B\|_{\zeta,2,L} \,\|F\|_\Gamma \sum_{x\in\Lambda}   \zeta({\rm dist}(x,L_H))  \zeta({\rm dist}(x,L)) 
\nonumber\\
&\leq &\epsi^mC  \|\Phi_B\|_{\zeta,2,L} \,\|F\|_\Gamma M^{d-|\ell|-|\ell_H|}
\,.
\end{eqnarray}
In the third inequality we used that summing over all sets $X$ for which $x$ minimizes the distance to $L_H$ and then over all $x\in\Lambda$ would also include each term in the sum on the previous line at least once.
In the second-to-last inequality we used Lemma~\ref{distLemma}.
\end{proof}

Note that a key step in the previous proof was the application of Lemma~\ref{CommuLemma} to control the evolution of local observables on the {\em long} adiabatic time scale. Since this lemma is only available  for evolutions generated by exponentially localised Hamiltonians, the problem of approximating the superadiabatic time-evolution by dropping higher order terms in the generator $H_{\rm a}$ is non-trivial and will only have a satisfactory solution for the case $\delta=0$.

\begin{proof}[Proof of Theorem~\ref{expThm}.] 
 We start with (a). By evaluating the Taylor formula for the analytic function $\gamma \mapsto \E^{-\I \gamma S^\epslam }  B^\epslam \E^{\I  \gamma S^\epslam }$ at $\gamma=\epsi$, we find
\begin{eqnarray}\label{PiExpand}
 \nonumber
        \E^{-\I \epsi S^\epslam }  B^{\Lambda} \E^{\I  \epsi S^\epslam }  
&=  &  \sum_{j=0}^k \frac{ \epsi^j}{j!} \,  \ad_{S^\epslam }^j (B^\epslam) 
+ \frac{ \epsi^{k+1}}{(k+1)!}\,  \E^{-\I \tilde\gamma S^\epslam }   \ad_{S^\epslam }^{k+1} (B^\epslam) \E^{\I  \tilde \gamma S^\epslam } 
 \nonumber\\
&=:&   \sum_{j=0}^k \frac{\epsi^j}{j!} \,  \ad_{S^\epslam_{(k+j-1)} }^j (B^\epslam) 
  + R_1
\end{eqnarray}
for some $\tilde\gamma\in [0,\epsi]$. 
The norm of the remainder term $R_1$ can now be estimated using Lemma~\ref{lemma:comm1} and the argument that took us from \eqref{localEst} to \eqref{comparerhoPi}. Note, however,  that one uses in this argument  the existence of a function $\tilde \zeta\in\mathcal{S}$ with $\mathcal{B}_{\zeta,k+1,L_H} \subset \mathcal{B}_{\tilde \zeta,k+1,L_H}$ such that $S^\epslam  \in \mathcal{L}_{\tilde \zeta, k+1,L_H}$ (compare Lemma~\ref{Slemma} (a)). 

For (b) first note that $U^\epslam_{\parallel\,(k)}(t,s) $ generated by $K^\epslam_{(k)}(t)$   agrees with $\widehat U^\epslam_{\parallel\,(k)}(t,s) $ generated by 
$
\widehat K^\epslam_{(k)}(t)  := \sum_{\mu=1}^{k+1} \epsi^{\mu-1} K^\epslam_\mu(t)
$
on the range on $P_*^\epslam(s)$ since both evolutions have the intertwining property \eqref{adievo2} and the difference of the generators  
\[
K^\epslam_{(k)}(t) - \widehat K^\epslam_{(k)}(t) = K^\epslam_\parallel - K^\epslam_1  -  \sum_{\mu=2}^{k+1} \epsi^{\mu-1} P^\epslam_*(t)^\perp K^\epslam_\mu(t)P^\epslam_*(t)^\perp
\]
 is non-zero only in its $P_*^\perp(\cdots)P_*^\perp$-block.

Now we follow in principle the same strategy as in the proof of Theorem~\ref{AdiThm} to compare the time-evolutions
$U^\epslam_{\rm a} $  and  
$\widehat U^\epslam_{\parallel\,(k)} $. Hence we only point out the differences. Replacing $U^\epslam_{\rm sa} $ by $U^\epslam_{\rm a} $ and $U^\epslam $ by $\widehat U^\epslam_{\parallel\,(k)} $ in  \eqref{proof1}
and restricting to projections $P$ with $P= P P_*(s) $, 
the difference of the generators $R^\epslam(\tau)$ is replaced by 
\[
T^\epslam(\tau) := H_{\rm a}^\epslam(\tau) - \widehat K^\epslam_{(k)}(\tau) = H^\epslam(\tau)  +  \sum_{\mu=k+2}^{m+d} \epsi^{\mu-1} K^\epslam_\mu(\tau)
\]
and we need to control  the norm of 
\begin{eqnarray}\lefteqn{
\epsi^{-1} P_*^\epslam(\tau) \left[  T^\epslam(\tau) , U^\epslam_{\parallel\,(k)} (\tau,t)\,O\, U^\epslam_{\parallel\,(k)} (t,\tau)\right] P_*^\epslam(\tau) }\nonumber\\
&=&
\epsi^{-1} P_*^\epslam(\tau) \left[ H^\epslam(\tau) - \inf(\sigma^\epslam_*(t))-\delta/2  , U^\epslam_{\parallel\,(k)} (\tau,t)\,O\, U^\epslam_{\parallel\,(k)} (t,\tau)\right] P_*^\epslam(\tau) \label{term1} \\
 &&+\; 
 \epsi^{-1}P_*^\epslam(\tau) \left[  \sum_{\mu=k+2}^{m+d} \epsi^{\mu-1} K^\epslam_\mu(\tau) , U^\epslam_{\parallel\,(k)} (\tau,t)\,O\, U^\epslam_{\parallel\,(k)} (t,\tau)\right] P_*^\epslam(\tau) \,.\label{term3}
 \end{eqnarray}
In \eqref{term1} we subtracted the number $\inf(\sigma^\epslam_*(\tau))+\delta/2$, which has vanishing commutator with any operator.  Since $\|(H^\epslam(\tau) - \inf(\sigma^\epslam_*(\tau)) )P_*^\epslam(\tau)\|=\delta$ uniformly in $\Lambda$, we have 
\[
\|\eqref{term1}\|\leq  \frac{\delta}{\epsi} \|O\|\,.
\]
For \eqref{term3} we proceed as in the proof of Theorem~\eqref{AdiThm} with one difference:
  This time we cannot apply Lemma~\ref{CommuLemma} as before, since the Hamiltonian $\tilde K^\epslam_{(k)}(\tau)$ is not in $\mathcal{L}_{\mathcal{E},0}$ but only in $\mathcal{L}_{\mathcal{S},\infty}$. However, since \eqref{Uparallel} has no adiabatic time scaling, we don't need to control the growth of the error
in time and we can use the second estimate of Lemma~\ref{CommuLemma}. \end{proof}

\appendix
\section*{Appendices}
In the following appendices we collect the various technical details that are at the basis of the adiabatic theorem and the underlying formalism. Throughout these appendices we will make use of the notation established in Section~\ref{sec:framework}, but for the sake of readability we will often drop the superscript $\Lambda$  when no confusion arises.

\section{Lemma on functions in $\mathcal{S}$}\label{Sapp}

In this appendix we prove the following lemma on functions in $\mathcal{S}$.

\begin{lemma}\begin{itemize}
\item[{\rm (a)}]
For $\zeta,\xi \in\mathcal{S}$ it holds that  either $\mathcal{B}_{\zeta,n,L} \subset \mathcal{B}_{\xi,n,L}$ or
$\mathcal{B}_{\xi,n,L}\subset \mathcal{B}_{\zeta,n,L} $ (or both) for all $n\in\N_0$ and $L \in \loc$. Hence the spaces $\mathcal{B}_{\mathcal{S},n,L}$, and thus also $\mathcal{B}_{\mathcal{S},\infty,L}$, are indeed vector spaces.
\item[{\rm (b)}] Let $f:[0,\infty)\to (0,\infty)$ be a function with 
$ \sup\{ r^nf(r)\,|\, r\in [0,\infty)\} <\infty$   for all  $ n\in\N_0$. Then there exists a function $\zeta\in \mathcal{S}$ and $c>0$ such that $cf\leq \zeta$.
\end{itemize}\label{Slemma}
\end{lemma}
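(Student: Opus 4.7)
The plan is to reduce (a) to a pointwise comparison: since $\zeta\leq C\xi$ implies $F_\zeta\leq C F_\xi$ and hence $\|\cdot\|_{\xi,n,L}\leq C\|\cdot\|_{\zeta,n,L}$, the inclusion $\mathcal{B}_{\zeta,n,L}\subset\mathcal{B}_{\xi,n,L}$ follows at once, so it suffices to show that any two $\zeta,\xi\in\mathcal{S}$ admit such a comparison in at least one direction. First I would observe that supermultiplicativity combined with boundedness forces $\zeta\leq 1$ on $(0,\infty)$ (else $\zeta(nr)\geq\zeta(r)^n$ would violate boundedness), making $-\log\zeta\geq 0$ subadditive. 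Fekete's lemma then produces the exponential rate
\[
\alpha(\zeta)\;:=\;\lim_{r\to\infty}\frac{-\log\zeta(r)}{r}\;=\;\inf_{r>0}\frac{-\log\zeta(r)}{r}\;\in\;[0,\infty),
\]
the pointwise upper bound $\zeta(r)\leq e^{-\alpha(\zeta)\,r}$ for $r>0$, and an exponential lower bound $\zeta(r)\geq\zeta(1)^{\lceil r\rceil}$ obtained from iterating supermultiplicativity.

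The comparison then splits into cases. If $\alpha(\zeta)\neq\alpha(\xi)$, say $\alpha(\zeta)>\alpha(\xi)$, then $\zeta(r)/\xi(r)=e^{-(\alpha(\zeta)-\alpha(\xi))r+o(r)}\to 0$, which combined with strict positivity and continuity of both functions on compact intervals yields $\zeta\leq C\xi$ globally. The equal-rate case $\alpha(\zeta)=\alpha(\xi)=:\alpha$ is what I expect to be the main obstacle: writing $\zeta=e^{-\alpha r-g_\zeta}$ and $\xi=e^{-\alpha r-g_\xi}$ with $g_\zeta,g_\xi\geq 0$ subadditive, non-decreasing, and sublinear, one must show that $g_\zeta-g_\xi$ is bounded either from above or from below. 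My approach would be a contradiction argument: assuming simultaneously $g_\zeta(r_n)-g_\xi(r_n)\to+\infty$ along $\{r_n\}$ and $g_\xi(s_n)-g_\zeta(s_n)\to+\infty$ along $\{s_n\}$, evaluate at $r_n+s_n$ and combine the subadditivity inequalities $g_i(r_n+s_n)\leq g_i(r_n)+g_i(s_n)$ with the non-decreasing property to force inconsistent growth estimates.

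For part (b) the plan is constructive. First I would pass to the non-increasing majorant $\tilde f(r):=\sup_{s\geq r}f(s)$, which dominates $f$ and inherits $A_n:=\sup_r r^n\tilde f(r)<\infty$ for every $n$; then seek $\zeta=e^{-\phi}$ with $\phi\geq 0$ non-decreasing, subadditive, $\phi(r)\leq-\log\tilde f(r)+K$ for some $K>0$, and $\phi(r)/\log r\to\infty$ (so that $\zeta$ decays faster than any polynomial). The polynomial bounds translate to $-\log\tilde f(r)\geq\sup_n(n\log(1+r)-\log A_n)$, which identifies the available rate of growth. A suitable $\phi$ is then chosen from the menu of admissible subadditive profiles matched to the actual decay of $f$: one can take $\phi(r)=cr^\beta$ with $\beta\in(0,1]$ for $f$ of stretched-exponential decay, or $\phi(r)=c(\log(e+r))^{1+\epsilon}$ for small $\epsilon>0$ in the slower-decay regime, whose subadditivity I would verify using the inequality $\log(e+r+s)\leq\log(e+r)+\log(e+s)-1$ (the shift by $e$ keeps the logarithm bounded below by $1$, which avoids the singular behavior of $x\mapsto x^{1+\epsilon}$ near $0$). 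The main technical hurdle is arranging this rate-matching uniformly over the admissible class of $f$'s, including the marginal cases where iterated-logarithm refinements of $\phi$ are needed; once $\phi$ is constructed, $\zeta:=e^{-\phi}$ lies in $\mathcal{S}$ and satisfies $cf\leq\zeta$ with $c=e^{-K}$.
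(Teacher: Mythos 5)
Your part (a) follows the paper's strategy exactly (pass to $\log\zeta$, use Fekete's lemma to extract the exponential rate, and reduce to pointwise domination up to a constant), and your unequal-rate case is essentially the paper's argument; note only that continuity is not part of the definition of $\mathcal{S}$ and is not needed, since monotonicity and positivity give $\xi\ge\xi(R_0)>0$ on $[0,R_0]$ (the paper fixes the compact part by an additive shift of $\ln\xi$). The equal-rate case, however, is a genuine gap, and the mechanism you propose cannot close it. First, $g_\zeta(r)=-\log\zeta(r)-\alpha r$ need not be non-decreasing: the definition of $\mathcal{S}$ only forces $-\log\zeta$ to be non-decreasing, so $g_\zeta$ may decrease with slope up to $\alpha$. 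Second, and more seriously, even granting monotonicity, the properties at your disposal (nonnegative, subadditive, sublinear, non-decreasing) do not imply that $g_\zeta-g_\xi$ is semibounded, so no contradiction can be forced by evaluating subadditivity at $r_n+s_n$. Indeed, take $g(r)=1+\int_0^r\phi$ and $h(r)=1+\int_0^r\psi$ with $\phi,\psi$ positive, non-increasing step functions tending to zero whose difference $\phi-\psi$ alternates sign on blocks of rapidly growing length, so that $\int_0^r(\phi-\psi)$ oscillates unboundedly in both directions; then $g,h$ are concave (hence subadditive), non-decreasing, nonnegative and sublinear, and $\zeta(r)=e^{-r-g(r)}$, $\xi(r)=e^{-r-h(r)}$ both lie in $\mathcal{S}$ with the same rate $\alpha=1$, yet neither $\zeta\le C\xi$ nor $\xi\le C\zeta$ holds for any constant. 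So in the equal-rate regime pointwise comparability itself can fail, and this case requires input beyond the properties you list. (This is also the one point where the paper is very terse: it asserts that one of $\inf(\tilde\zeta-\tilde\xi)$, $\inf(\tilde\xi-\tilde\zeta)$ is finite; your plan amounts to trying to prove that assertion, and your sketch does not do it.)

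Part (b) is also incomplete, and the missing piece is the heart of the statement. Passing to the monotone majorant $\tilde f$ is harmless, but a fixed ``menu'' of profiles cannot work: the hypothesis only gives $-\log\bigl(cf(r)\bigr)\ge n\log(1+r)-\log A_n$ for every $n$, i.e.\ bare super-logarithmic growth, and this growth can be arbitrarily slow, e.g.\ of order $\log(1+r)\,a(r)$ with $a\to\infty$ as slowly as one likes. For such $f$ no $\phi$ with $\phi(r)/\log r\to\infty$ of the form $cr^\beta$ or $c(\log(e+r))^{1+\epsilon}$, nor any member of a prescribed list of iterated-logarithm profiles, satisfies $\phi\le-\log(cf)+K$, so $\zeta=e^{-\phi}$ cannot dominate $cf$. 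You flag this ``rate matching'' as the main hurdle but leave it unresolved, and resolving it is precisely the content of (b). The paper's construction is adaptive rather than menu-based: choose an increasing, $f$-dependent sequence $x_k$ with $\ln(cf(x))\le-k\ln(x+1)$ for $x\ge x_k$, and let $\tilde g$ equal $-k\ln(x+1)$ on $[x_k,x_{k+1})$; then $e^{\tilde g}$ dominates $cf$, decays faster than any polynomial by construction, and one verifies superadditivity of $\tilde g$ using convexity of $-k\ln(1+x)$ and the ordering of the branches. This $f$-dependent patching of logarithmic envelopes (or some equivalent diagonal construction) is the idea your proposal is missing; without it the argument does not cover general $f$.
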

\begin{proof}
First note that for $\xi,\zeta\in\mathcal{S}$ with $\xi\leq \zeta$ it holds that $\|\Phi\|_{\zeta,n,L} \leq \|\Phi\|_{\xi,n,L}$ and hence
$\mathcal{B}_{\xi,n,L} \subset \mathcal{B}_{\zeta,n,L}$. We will now show that for any pair of functions $\xi,\zeta\in\mathcal{S}$
it holds that either $\mathcal{B}_{\xi,n,L} \subset \mathcal{B}_{\zeta,n,L}$ or $\mathcal{B}_{\zeta,n,L} \subset \mathcal{B}_{\xi,n,L}$ or $\mathcal{B}_{\xi,n,L} = \mathcal{B}_{\zeta,n,L}$ for all $n\in\N_0$ and $L \in \loc$.

To this end, let $\tilde \xi := \ln(\xi)$
  for $\xi\in \mathcal{S}$, i.e.\ $\xi=\E^{\tilde \xi}$. Then $\tilde \xi$ is  non-increasing and   super-additive,  that is, 
\[
\tilde \xi(x+y) \geq \tilde \xi(x) + \tilde \xi(y) \quad  \forall x,y \in [0,\infty) \,.
\]
For the following considerations we can restrict functions in $\mathcal{S}$ to $\N_0\subset[0,\infty)$, as $d^\Lambda$ and ${\rm dist}(\cdot,L)$ take values only in $\N_0$ and thus the norms $\|\cdot\|_{\xi,n,L}$ depend only on the values of  $\xi$ on~$\N_0$.

Fekete's super-additivity lemma  \cite{F} 
 says that for any super-additive function $f:\N_{0}\to \R$ the limit $\lim_{x\to \infty}  f(x)/x$ exists and equals
\[
c_f := \sup_{x\in \N} \frac{  f(x)}{x}\, .
\]
In general, the limit could be $+\infty$. However, for $\xi\in\mathcal{S}$  we know that $c_{ \tilde\xi }\leq 0$ since $\lim_{x\to\infty}\xi(x) = 0$ and thus $\lim_{x\to\infty}\tilde \xi(x) = -\infty$.

Now assume that we have two functions $\xi,\zeta\in \mathcal{S}$. 
If $c_{ \tilde\xi }<c_{ \tilde\zeta }$, then there exists $x_0 \in\N $ such that
\[
\tilde \xi(x)\leq \tilde \zeta(x)\quad\forall x \ge x_0\,. 
\]
 If $\tilde \xi(x)\leq \tilde \zeta(x)$ also for $x< x_0$, then correspondingly $\xi =\E^{\tilde\xi}\le\E^{\tilde\zeta}= \zeta$, and we argued at the beginning of the proof that $\mathcal{B}_{\xi,n,L} \subset \mathcal{B}_{\zeta,n,L}$ for all $n \in \N_0$ and $L \in \loc$.  Assume  on the contrary  that 
\[
\tilde a:=  \min_{0\leq x\leq x_0}  \left\{ \tilde \zeta(x)- \tilde \xi(x) \right\} <0\,.
\]
 Then 
\[
\tilde \xi_a(x) :=  \tilde \xi(x) +\tilde a  \leq \tilde \zeta(x) \quad\forall x\in [0,\infty) \,.
\]
Since $\tilde a<0$, also $\tilde \xi_a$ is super-additive and thus 
\[
\xi_a := \E^{\tilde a} \xi= \E^{\tilde \xi_a}    \leq \E^{\tilde \zeta} = \zeta
\]
is in $\mathcal{S}$. Notice now  that for $\xi\in\mathcal{S}$ and $a \in (0,1] $  then  $a\xi\in\mathcal{S}$ and, as sets, 
$\mathcal{B}_{\xi,n,L} = \mathcal{B}_{a\xi,n,L}$  for all $n\in\N_0$ and $L \in \loc$ (since trivially $\|\Phi\|_{a\xi,n,L} = a^{-1} \|\Phi\|_{\xi,n,L}$).
Hence, if $\Phi \in \mathcal{B}_{\xi,n,L}=\mathcal{B}_{\xi_a,n,L}$, then
 $\Phi\in\mathcal{B}_{\zeta,n,L}$ as well.

In the case that $c_{ \tilde\xi} =c_{ \tilde\zeta} $ but $\xi\not=\zeta$ we have that $\lim_{x\to\infty} (\tilde \zeta(x)- \tilde \xi(x))  /x =0$, and thus either $\tilde a :=  \inf_{x\in\N_0}  \left\{ \tilde \zeta(x)- \tilde \xi(x) \right\} <0$ but finite or $\tilde b:=  \inf_{x\in\N_0}  \left\{\tilde \xi(x)- \tilde \zeta(x)\right\} <0$ but finite
(or both). Assume without loss of generality that $\tilde a<0$ (otherwise revert the roles of $\zeta$ and $\xi$). Then by the same argument given before we find that $\E^{\tilde a} \xi\leq \zeta$ and can conclude analogously that $\mathcal{B}_{\xi,n,L}\subset \mathcal{B}_{\zeta,n,L}$.

In summary we found that for any $\xi,\zeta\in\mathcal{S}$ either $\mathcal{B}_{\xi,n,L} \subset \mathcal{B}_{\zeta,n,L}$ or $\mathcal{B}_{\zeta,n,L} \subset \mathcal{B}_{\xi,n,L}$ or $\mathcal{B}_{\xi,n,L} = \mathcal{B}_{\zeta,n,L}$  for all $n \in \N_0$ and $L \in \loc$, and thus we proved part (a).

For part (b),  set $c^{-1}:=\sup_{x \in [0,\infty)} f(x)$ and $\tilde f(x) := \ln(c f(x))$.  Note that the assumptions on $f$ imply  that 
\[
\lim_{x\to \infty} (\tilde f(x) + k\ln(x+1)) = -\infty \qquad\mbox{for all $k\in\N_0$}\,.
\]
Hence there exists a strictly   increasing  sequence $(x_k)_{k\in \N_0}$ such that $\tilde f(x) \leq - k \ln(x+1)$ for all $x\geq x_k$.  Notice that, since $c f(x) \le 1$ we have $\tilde f(x) \le 0$, so that we can take $x_0 = 0$.  Then 
\[
\tilde g (x) = -\;\sum_{k=0}^\infty {\bf 1}_{x_k \leq x < x_{k+1}}(x) \, k \,\ln(x+1)
\]
defines a super-additive function, since each function $-k\ln(x+1)$ is convex and thus super-additive and $-k\ln(x+1) < -\tilde k\ln(x+1)$
for $k>\tilde k$.
Using that
$\lim_{x\to \infty} (\tilde g(x) + k\ln(x+1)) = -\infty$ for all $k\in\N_0$ and $\tilde g(x)\geq \tilde f(x)$ for all $x\geq x_0 =0 $,
we find that
\[
 cf = \E^{\tilde f}  \leq \E^{\tilde g} =: g
\]
with $g\in\mathcal{S}$.
\end{proof}

\section{Lieb--Robinson bound on the torus}\label{AppendixLR}

One key technical ingredient in all of the following constructions is the so-called Lieb--Robinson bound \cite{LR} for the speed of propagation of local changes in interacting systems on lattices.
We will state a recent version of the Lieb--Robinson bound for fermionic systems by Nachtergaele, Sims, and Young \cite{NSY} in Theorem~\ref{LRB}, but adapted to our present setting of a torus. Given Lemma~\ref{LRlemma} below, the proof of Theorem~\ref{LRB} works line by line as the proof in \cite{NSY}.

First we need to introduce some more notation.
It is well known (see e.g.\ \cite{NSY}) and straightforward to check that the functions $F$ and $F_\zeta$ have the following crucial properties.
\[
\|F \|_\Gamma:=  \sup_{x\in\Gamma}\sum_{y\in\Gamma} F  (d (x,y))     <\infty
\]
and 
\[
   \sup_{x,y\in\Gamma} \sum_{z\in \Gamma}\frac{F_\zeta (d (x,z)) F_\zeta (d (z,y))}{F_\zeta (d (x,y)))} <\infty\,.
\]
However, we will mainly need the following local versions on the ``torus'' $\Lambda$.
\begin{lemma}\label{LRlemma}
It holds for all $\Lambda$, $\zeta\in\mathcal{S}$, and $L\in\loc$, that  
\[
  \sup_{x\in\Lambda}\sum_{y\in\Lambda} F  (d^\Lambda(x,y)) \leq \|F  \|_\Gamma 
\]
and
\[
\sup_{x,y\in\Lambda}\sum_{z\in\Lambda}\frac{F_\zeta(d_L^\Lambda(x,z)) F_\zeta(d_L^\Lambda(z,y))}{F_\zeta(d_L^\Lambda(x,y))}  \leq 2^{d+1} \|F\|_\Gamma
\]
and
\[
\sup_{x,y\in\Lambda}\sum_{z\in\Lambda}\frac{F_\zeta(d_L^\Lambda(x,z)) F_\zeta(d^\Lambda(z,y))}{F_\zeta(d_L^\Lambda(x,y))}  \leq 2^{2d+2} \|F\|_\Gamma\,.
\]

\end{lemma}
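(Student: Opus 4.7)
The plan is to establish the three bounds in turn, with each inheriting techniques from the previous one and the added difficulty coming from the progressively weaker ``triangle inequalities'' available.

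For the first bound, the map $y \mapsto y \lminus x$ is a bijection of $\Lambda$ onto itself, which together with $d^\Lambda(x,y) = d(0,y\lminus x)$ rewrites the sum as $\sum_{z\in\Lambda} F(d(0,z))$. Regarding $\Lambda$ as a subset of $\Gamma$, this is bounded termwise by $\sum_{z\in\Gamma} F(d(0,z)) = \|F\|_\Gamma$.

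For the second bound, the first step would be to note that $d_L^\Lambda$ inherits the triangle inequality from $d^\Lambda$, because the extra $\mathrm{dist}(z,L)$ terms appear on the right-hand side with multiplicity two (once in $d_L^\Lambda(x,z)$ and once in $d_L^\Lambda(z,y)$). Combined with the super-multiplicativity $\zeta(r+s)\geq \zeta(r)\zeta(s)$ and the monotonicity of $\zeta$, this forces $\zeta(d_L^\Lambda(x,z))\zeta(d_L^\Lambda(z,y))\leq \zeta(d_L^\Lambda(x,y))$, so the $\zeta$-quotient in the summand is bounded by $1$. The algebraic prefactor $(1+r)^{-(d+1)}$ is then handled by the standard case split: since at least one of $d_L^\Lambda(x,z),d_L^\Lambda(z,y)$ exceeds $d_L^\Lambda(x,y)/2$, the elementary estimate $F(r/2)\leq 2^{d+1} F(r)$ lets one absorb $F(d_L^\Lambda(x,y))$ against the denominator, leaving a single $F$-sum over $z$ that is controlled by $\|F\|_\Gamma$ via part (1).

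For the third bound, the proof would follow the same two-step template (bound the $\zeta$-quotient using super-multiplicativity, then estimate the $F$-quotient by a case split) but must use a weaker form of the triangle inequality because the second index in the numerator involves $d^\Lambda$ rather than $d_L^\Lambda$. Specifically, combining the triangle inequality for $d^\Lambda$ with the Lipschitz bound $|\mathrm{dist}(y,L)-\mathrm{dist}(z,L)|\leq d^\Lambda(y,z)$ for the distance-to-$L$ gives
\[
 d_L^\Lambda(x,z) + 2\, d^\Lambda(z,y) \;\geq\; d_L^\Lambda(x,y).
\]
The case split then runs according to whether $d_L^\Lambda(x,z)\geq d_L^\Lambda(x,y)/2$ or $d^\Lambda(z,y)\geq d_L^\Lambda(x,y)/4$, and the sharper threshold for $d^\Lambda(z,y)$ together with a corresponding iterate of $F(r/4)\leq 2^{2(d+1)}F(r)$ is what produces the extra factor of $2^{d+1}$ in the constant, yielding the claimed $2^{2d+2}\|F\|_\Gamma$.

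The main obstacle will be controlling the $\zeta$-quotient in the third bound. Whereas in part (2) the symmetric $d_L^\Lambda$-triangle inequality makes $\zeta(d_L^\Lambda(x,z))\zeta(d_L^\Lambda(z,y))/\zeta(d_L^\Lambda(x,y))\leq 1$ trivially, the asymmetric inequality above only gives $\zeta(d_L^\Lambda(x,z))\zeta(d^\Lambda(z,y))^2 \leq \zeta(d_L^\Lambda(x,y))$, leaving a stray factor of $\zeta(d^\Lambda(z,y))$ that cannot be bounded uniformly. The way around is to perform the case split before separating $\zeta$ from $F$: in the regime $d^\Lambda(z,y)\geq d_L^\Lambda(x,y)/4$ one uses the super-multiplicative bound $\zeta(d_L^\Lambda(x,z))\zeta(d^\Lambda(z,y))\leq \zeta(d^\Lambda(z,y)+d_L^\Lambda(x,z))$ together with monotonicity of $\zeta$, whereas in the regime $d_L^\Lambda(x,z)\geq d_L^\Lambda(x,y)/2$ one applies $\zeta(d_L^\Lambda(x,z))\zeta(d^\Lambda(z,y))\leq \zeta(d_L^\Lambda(x,z))$ and matches it against the $F$-decay along $x$--$z$. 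Managing this interplay between the $\zeta$- and $F$-estimates in each subcase, and ensuring that the surviving sum is always a single $F$-sum amenable to part (1), is the technically subtle part of the proof.
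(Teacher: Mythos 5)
Your first two bounds are essentially the paper's argument: the first is the same translation\--invariance observation, and for the second you use the same reduction of the $\zeta$\--quotient to $1$ via the triangle inequality for $d^\Lambda_L$. For the polynomial factor the paper does not use a case split but the convexity estimate $(1+a+b)^{d+1}\le 2^d\bigl((1+a)^{d+1}+b^{d+1}\bigr)$, which is what produces the constant $2^{d+1}\|F\|_\Gamma$; your dichotomy (each $z$ falls into one of two regimes, each regime contributing up to $2^{d+1}\|F\|_\Gamma$) only yields $2^{d+2}\|F\|_\Gamma$, so as written you do not reach the stated constant. That is a minor point; the serious issue is the third bound.

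There you have correctly located the crux --- from $d^\Lambda_L(x,y)\le d^\Lambda_L(x,z)+2d^\Lambda(z,y)$ one only gets $\zeta(d^\Lambda_L(x,z))\,\zeta(d^\Lambda(z,y))^2\le\zeta(d^\Lambda_L(x,y))$, leaving an uncontrolled factor $1/\zeta(d^\Lambda(z,y))$ --- but your proposed repair does not close it. In the regime $d^\Lambda(z,y)\ge d^\Lambda_L(x,y)/4$, the step ``super-multiplicativity plus monotonicity'' would require $d^\Lambda_L(x,z)+d^\Lambda(z,y)\ge d^\Lambda_L(x,y)$, which is exactly what is unavailable (only the inequality with the factor $2$ holds); in the other regime the denominator $\zeta(d^\Lambda_L(x,y))$ remains unmatched and the ratio $\zeta(d^\Lambda_L(x,z))/\zeta(d^\Lambda_L(x,y))$ with $d^\Lambda_L(x,z)\ge d^\Lambda_L(x,y)/2$ is again of the uncontrollable form $\zeta(r/2)/\zeta(r)$. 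In fact no case analysis can succeed, because a single term already violates the claimed bound: take $x$ on the localization plane, $y$ at distance $D$ from it along a constrained direction, and $z=x$; then $d^\Lambda_L(x,x)=0$, $d^\Lambda_L(x,y)=2D$, and the summand is at least $\zeta(0)\,\zeta(D)/\zeta(2D)$, which is unbounded in $D$ for every $\zeta\in\mathcal{S}$ (a uniform bound $\zeta(D)\le C\,\zeta(2D)$ would give $\zeta$ a polynomial lower bound, contradicting its faster-than-polynomial decay). So the third inequality cannot be proved in the form stated. You should know that the paper's own proof has the same blind spot: it establishes only the polynomial part via $d^\Lambda_L(x,y)\le d^\Lambda_L(x,z)+2d^\Lambda(z,y)$ and the convexity estimate, and tacitly assumes the $\zeta$\--part reduces to $1$ ``along the same lines'' as in the second bound, which it does not. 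A correct version is obtained by weakening the weight in the denominator, e.g.\ replacing $F_\zeta(d^\Lambda_L(x,y))$ by $F_{\zeta^{1/2}}(d^\Lambda_L(x,y))$: since $\zeta\le 1$, one has $\zeta(d^\Lambda_L(x,z))\,\zeta(d^\Lambda(z,y))\le\zeta(d^\Lambda_L(x,y))^{1/2}$, the polynomial part goes through as in the paper, and $\zeta^{1/2}\in\mathcal{S}$, so this weakened statement is all that the later applications (where the localized spaces are unions over $\zeta\in\mathcal{S}$) actually need. I would prove that corrected version rather than keep fighting the stated one.
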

\begin{proof} By translation invariance of $d^\Lambda$ we have  
\[
 \sup_{x\in\Lambda}\sum_{y\in\Lambda} F  (d^\Lambda(x,y))  = \sum_{y\in\Lambda} F  (d^\Lambda (0,y)) \leq \sum_{y\in\Gamma} F  (d (0,y)) = \|F  \|_\Gamma \,.
\]
For the other estimate, first recall that $\zeta\in\mathcal{S}$ satisfies $ \zeta(r+s) \geq \zeta(r)\zeta(s)$ and is monotonically decreasing. Using this and 
 the triangle inequality for $d^\Lambda_L$, one easily sees that it
suffices to show the second estimate for $\zeta\equiv1$. 

Consider any  function $\delta:\Lambda\times\Lambda\to [0,\infty)$ satisfying the triangle inequality $\delta(x,y)\leq \delta(x,z)+\delta(z,y)$ and $\delta(x,y)\geq d^\Lambda(x,y)$ for all $x,y,z\in\Lambda$.
Then, using that $F$ is decreasing, we find for $x,y,z\in\Lambda$ that
\begin{align*}
\frac{F(\delta(x,z)) F(\delta(z,y))}{F(\delta(x,y))}&\leq \frac{F(\delta(x,z)) F(\delta(z,y))}{F(\delta(x,z)+\delta( z,y ) )}
= \frac{ (1+ \delta(x,z) + \delta(z,y))^{d+1}}{(1+ \delta(x,z)  )^{d+1}(1  + \delta(z,y))^{d+1}}\\
&\leq
2^d  \frac{ (1+ \delta(x,z)  )^{d+1} + \delta(z,y)^{d+1}}{(1+ \delta(x,z)  )^{d+1}(1  + \delta(z,y))^{d+1}}\\
&\leq
2^d\left(\frac{1}{(1+ \delta(x,z))^{d+1}} + \frac{1}{(1+ \delta(z,y))^{d+1}}
\right)\\
&\leq
2^d\left(\frac{1}{(1+ d^\Lambda(x,z))^{d+1}} + \frac{1}{(1+ d^\Lambda(z,y))^{d+1}}
\right)\\
&=
2^d\big(F(d^\Lambda(x,z)) + F(d^\Lambda(z,y)) \big)\,.
\end{align*}
Together with the first estimate, the second one follows.  The third inequality follows along the same lines using in the first step that
\[
\delta(x,y) = d^\Lambda_L(x,y) \leq d(x,z) + 2 d^\Lambda(z,y) + {\rm dist}(x,L) +{\rm dist}(z,L) =  d^\Lambda_L(x,z) + 2 d^\Lambda(z,y)\,. \qedhere
\]
\end{proof}

Two more definitions are required for the formulation of the Lieb--Robinson bound.
For $X\subset \Lambda\subset \Gamma$, the set of \emph{boundary sets} of $X$ in $\Lambda$ is
\[
S_\Lambda(X) := \{Z\subset \Lambda\,|\,Z\cap X\not=\emptyset\mbox{ and } Z\cap(\Lambda\setminus X)\not=\emptyset\}\,.
\]
For a (possibly time-dependent) interaction $\Phi$, the \emph{$\Phi$-boundary} of a set $X\in\mathcal{F}(\Gamma)$ is defined as 
\[
\partial_\Phi X = \{ x\in X\,|\, \exists Z \in S_\Gamma(X)\,,\;t\in [0,\infty) \mbox{ with } x\in Z \mbox{ and } \Phi(t,Z)\not=0\}\,.
\]

\begin{theorem}[Lieb--Robinson bound]\label{LRB}
Let $H\in \mathcal{L}_{\zeta,0}$ with interaction $\Phi$ depending continuously on $t\in [0,\infty)$. For $t,s \in [0,\infty)$ denote by $u^{ \Lambda}_{t,s}$ its dynamics on $\mathcal{A}_\Lambda$, that is,
\[ u^{ \Lambda}_{t,s}(A)^\Lambda := U^{ \Lambda}(t,s) \,A^\Lambda\,  U^{ \Lambda}(s,t) \]
where $U^{ \Lambda}(t,s)$ is defined as in \eqref{Hdynamics} with $\epsi=1$.
Let $X,Y \subset \Lambda$ with $X\cap Y = \emptyset$ and let $A\in \mathcal{A}_X^+$ be even and $B\in \mathcal{A}_Y$.
Then
\begin{multline*}
\| [ u^{ \Lambda}_{t,s}(A), B] \| \leq \frac{1}{2^{2d } \|F\|_\Gamma} \|A\|\,\|B\|\, \left( \exp\left(  2^{2d+2} \|F\|_\Gamma  \|\Phi \|_{\zeta,0} \cdot|t-s|\right) -1\right)\times \\
\hspace{7cm} \times \sum_{x\in \partial_{\Phi }X}\sum_{y\in Y} F_\zeta(d^\Lambda(x,y))
\end{multline*}
for all $t,s\in [0,\infty)$, $s\leq t$.
\noindent Moreover, 
\[
\sum_{x\in \partial_{\Phi }X}\sum_{y\in Y} F_\zeta(d^\Lambda(x,y)) \leq  \|F \|_\Gamma \min\{|X|,|Y|\} \;\zeta( d^\Lambda(X,Y))\,.
\]
In the case of   $\zeta(r) = \E^{-ar}$ this motivates  the definition of the \emph{Lieb--Robinson velocity}
\begin{equation}\label{vDef}
v:= \frac{ 2^{2d+2} \|F\|_\Gamma \|\Phi \|_{a,0} }{a}\,.
\end{equation}
\end{theorem}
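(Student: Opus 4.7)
The plan is to follow the proof of the fermionic Lieb--Robinson bound in \cite{NSY} essentially verbatim, with Lemma~\ref{LRlemma} substituted for the analogous infinite-lattice convolution estimates wherever the latter are invoked. The proof is largely algebraic; the torus identification enters only through the distance $d^\Lambda$ and the resulting propagation constants, so the argument goes through provided the two geometric inputs in Lemma~\ref{LRlemma} hold uniformly in $\Lambda$.

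First I would set $I_B(t) := [u^\Lambda_{t,s}(A), B]$ and derive a Duhamel-style integral inequality. Using the Heisenberg equation $\partial_t u^\Lambda_{t,s}(A) = -i[H(t), u^\Lambda_{t,s}(A)]$ together with the Jacobi identity yields
$$\partial_t I_B(t) = -i[H(t), I_B(t)] + i[u^\Lambda_{t,s}(A), [H(t), B]].$$
Conjugating away the first, unitary term leads to a bound on $\|I_B(t)\|$ in terms of the integral of $\bigl\|\bigl[u^\Lambda_{\tau,s}(A), [H(\tau), B]\bigr]\bigr\|$ over $[s,t]$. The initial term $\|I_B(s)\| = \|[A,B]\|$ vanishes because $A \in \mathcal{A}_X^+$ is even and $B$ is supported disjointly from $X$, while the evenness of each $\Phi(\tau, Z) \in \mathcal{A}_Z^\mathfrak{N} \subset \mathcal{A}_Z^+$ confines the effective sum in $[H(\tau), B]$ to those $Z$ meeting $Y$; when $B$ is not already even, one first splits $B = B^+ + B^-$ inside $\mathcal{A}_Y$ and handles each parity separately. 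Iterating the inequality by re-inserting the commutator $[\Phi(\tau, Z), B]$ as a new ``$B$'' produces a Dyson-like expansion whose $n$-th term involves an $n$-fold convolution of $F_\zeta$ along chains of intermediate lattice points. Here Lemma~\ref{LRlemma} enters decisively: its sub-convolution bound supplies the uniform-in-$\Lambda$ constant $2^{2d+2} \|F\|_\Gamma$ per convolution step, so the geometric series sums to the exponential factor $\exp(2^{2d+2} \|F\|_\Gamma \|\Phi\|_{\zeta,0}\,|t-s|) - 1$ with the prefactor $1/(2^{2d}\|F\|_\Gamma)$.

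The closing auxiliary bound on the double sum is elementary: monotonicity of $F$ and $\zeta$ together with super-multiplicativity give $F_\zeta(d^\Lambda(x, y)) \leq \zeta(d^\Lambda(X, Y))\, F(d^\Lambda(x, y))$ whenever $x \in X$, $y \in Y$; summing over whichever of $X$ or $Y$ has smaller cardinality and applying the first estimate of Lemma~\ref{LRlemma} yields the claimed inequality. The only genuine obstacle relative to the flat-lattice case is the uniform-in-$\Lambda$ sub-convolution control afforded by Lemma~\ref{LRlemma}, which compensates for the loss of translation invariance of $F$ about a single basepoint on the quotient $\Lambda$. Once this input is in hand, as it is, no further analytic novelty arises.
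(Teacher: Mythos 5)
Your proposal matches the paper's approach: the paper gives no independent argument for Theorem~\ref{LRB} beyond observing that, with the torus convolution estimates of Lemma~\ref{LRlemma} replacing the infinite-lattice ones, the proof of the fermionic Lieb--Robinson bound in \cite{NSY} carries over line by line, which is exactly your strategy. Your sketch of the Duhamel iteration, the role of evenness of $A$, and the elementary double-sum estimate is consistent with that cited argument, so there is nothing to add.
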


\section{Technicalities on  local Hamiltonians}\label{AppendixTech}

This appendix is devoted to the proof of several results concerning local operators and  local Hamiltonians that were used repeatedly in the proof of the adiabatic theorem, Theorem~\ref{AdiThm}.

We start with a simple lemma that is at the basis of most arguments concerning localization near $L$.
\begin{lemma}\label{distLemma} 
It holds that 
\[
\sum_{y\in\Lambda} F_\zeta (d^\Lambda_L(x,y)) \leq \zeta\left( {\rm dist} (x,L)\right)\,\|F\|_\Gamma \le \|F\|_\Gamma\,.
\]
\end{lemma}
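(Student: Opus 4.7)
The plan is to exploit two trivial lower bounds on $d^\Lambda_L$ together with the basic properties of $\zeta\in\mathcal{S}$. By the very definition
\[
 d^\Lambda_L(x,y) = d^\Lambda(x,y) + {\rm dist}(x,L) + {\rm dist}(y,L),
\]
we have both $d^\Lambda_L(x,y) \geq d^\Lambda(x,y)$ and $d^\Lambda_L(x,y) \geq {\rm dist}(x,L)$. The first inequality controls the polynomial denominator in $F_\zeta$, while the second allows us to pull out the $y$-independent factor $\zeta({\rm dist}(x,L))$ using that $\zeta$ is non-increasing.

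Concretely, I would first write
\[
F_\zeta(d^\Lambda_L(x,y)) = \frac{\zeta(d^\Lambda_L(x,y))}{(1+d^\Lambda_L(x,y))^{d+1}} \leq \frac{\zeta({\rm dist}(x,L))}{(1+d^\Lambda(x,y))^{d+1}} = \zeta({\rm dist}(x,L))\,F(d^\Lambda(x,y)),
\]
summing then yields, by the first estimate of Lemma~\ref{LRlemma},
\[
\sum_{y\in\Lambda} F_\zeta(d^\Lambda_L(x,y)) \leq \zeta({\rm dist}(x,L)) \sum_{y\in\Lambda} F(d^\Lambda(x,y)) \leq \zeta({\rm dist}(x,L))\,\|F\|_\Gamma.
\]

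For the final inequality $\zeta({\rm dist}(x,L))\,\|F\|_\Gamma \leq \|F\|_\Gamma$, I would note that the super-multiplicativity $\zeta(r+s)\geq \zeta(r)\zeta(s)$ with $r=s=0$ forces $\zeta(0) \leq 1$ (since $\zeta(0)>0$), and then monotonicity gives $\zeta(r) \leq 1$ for all $r\geq 0$. There is really no obstacle here; the lemma is essentially a bookkeeping exercise, the only mildly non-obvious point being the observation that elements of $\mathcal{S}$ are automatically bounded by $1$.
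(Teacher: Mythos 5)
Your proposal is correct and follows essentially the same route as the paper: both arguments use $d^\Lambda_L(x,y)\geq {\rm dist}(x,L)$ together with monotonicity of $\zeta$ to extract the factor $\zeta({\rm dist}(x,L))$, bound the remaining sum by $\|F\|_\Gamma$ via $d^\Lambda_L\geq d^\Lambda$ and the first estimate of Lemma~\ref{LRlemma}, and derive $\zeta\leq 1$ from $\zeta(0)\geq\zeta(0)^2$ plus monotonicity. The only cosmetic difference is that you perform a pointwise bound on $F_\zeta$ while the paper first pulls out $\sup_y\zeta(d^\Lambda_L(x,y))$; the content is identical.
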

\begin{proof}
One has
\begin{align*}
\sum_{y\in\Lambda} F_\zeta (d^\Lambda_L(x,y)) &= \sum_{y\in\Lambda}\frac{\zeta(d^\Lambda_L(x,y))}{(1 + d^\Lambda_L(x,y))^{d+1}}
\;\leq\; \sup_{y\in\Lambda} \zeta(d^\Lambda_L(x,y))\,\|F\|_\Gamma \\
&\leq \zeta\left({\rm dist}(x,L)\right)\,\|F\|_\Gamma\,.
\end{align*}
The second inequality in the statement follows from the fact that for $\zeta \in \mathcal{S}$ we have $\zeta \le 1$. Indeed, $\zeta(0) = \zeta(0+0) \ge \zeta(0)^2$ implies $1 \ge \zeta(0) \ge \zeta(r)$ for $r \in [0,\infty)$ due to monotonicity of~$\zeta$.
\end{proof}

The next lemma shows that the norm of a  local Hamiltonian localized near $L$ grows at most like the volume of $L$.
\begin{lemma}\label{BoundLemma}
Let $H\in \mathcal{L}_{\zeta,0,L}$, then there is a constant $C_\zeta $ depending only on $\zeta$ such that
\[
  \|H^\Lambda\|\leq M^{d-|\ell|}\,C_\zeta \, \|\Phi_H\|_{\zeta,0,L} 
  \,.
\]
\end{lemma}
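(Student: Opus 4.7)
My plan is to combine the triangle inequality with the defining bound of the $\|\cdot\|_{\zeta,0,L}$-norm, and then reduce the remaining geometric sum to a convergent series via the rapid decay of any $\zeta \in \mathcal{S}$.

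First I would write
\[
\|H^\Lambda\| \;\leq\; \sum_{X \subset \Lambda} \|\Phi_H^\Lambda(X)\| \;\leq\; \sum_{x\in\Lambda}\,\sum_{\substack{X\subset\Lambda \\ x\in X}} \|\Phi_H^\Lambda(X)\|\,,
\]
where the second inequality uses that each nonempty $X$ occurs at least once on the right (and all summands are nonnegative, so multiple counting is harmless). Now I would apply the definition of $\|\Phi_H\|_{\zeta,0,L}$ with the choice $y=x$: since $\{x,x\}\subset X$ is equivalent to $x \in X$, and $d^\Lambda_L(x,x) = 2\,{\rm dist}(x,L)$, this gives
\[
\sum_{\substack{X\subset\Lambda \\ x\in X}} \|\Phi_H^\Lambda(X)\| \;\leq\; \|\Phi_H\|_{\zeta,0,L}\,F_\zeta\bigl(2\,{\rm dist}(x,L)\bigr)\,.
\]

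It then remains to show that $\sum_{x\in\Lambda} F_\zeta(2\,{\rm dist}(x,L)) \leq C_\zeta\, M^{d-|\ell|}$. The key step here is the combinatorial bound on the number $N_k := \#\{x\in\Lambda : {\rm dist}(x,L)=k\}$ of lattice points at distance $k$ from the localization hyperplane. Factoring $\Lambda$ as the product of the $d-|\ell|$ unconstrained directions (contributing a factor $M^{d-|\ell|}$) and the $|\ell|$ constrained directions (in which ${\rm dist}(\cdot,L)$ is measured), and counting lattice points of $\ell^1$-norm $k$ in $\Z^{|\ell|}$, one obtains $N_k \leq C\, M^{d-|\ell|}(k+1)^{|\ell|-1}$ for $|\ell|\geq 1$, while for $|\ell|=0$ one has $N_0 = M^d$ and $N_k=0$ for $k\geq 1$. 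Stratifying the sum by the value of ${\rm dist}(x,L)$ therefore yields
\[
\sum_{x\in\Lambda} F_\zeta\bigl(2\,{\rm dist}(x,L)\bigr) \;\leq\; C\, M^{d-|\ell|}\,\sum_{k=0}^{\infty} \frac{(k+1)^{|\ell|-1}\,\zeta(2k)}{(1+2k)^{d+1}}\,.
\]

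The final sum is finite because every $\zeta\in\mathcal{S}$ satisfies $\sup_{r\geq 0} r^n\zeta(r)<\infty$ for all $n\in\N$, so it decays faster than any inverse polynomial; the resulting constant depends only on $\zeta$ (through finitely many of the quantities $\sup_r r^n\zeta(r)$) and on $d$, giving $C_\zeta$. There is no real obstacle; the only point requiring care is the elementary lattice-point count defining $N_k$, which must be carried out uniformly in $M$ so that the $M$-dependence of the final bound is exactly $M^{d-|\ell|}$.
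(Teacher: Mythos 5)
Your proof is correct, and it follows the same basic strategy as the paper (triangle inequality, then the defining bound of the $\|\cdot\|_{\zeta,0,L}$-norm, then a geometric sum producing the factor $M^{d-|\ell|}$), but with two small and legitimate deviations in the bookkeeping. The paper bounds $\sum_{Z}\|\Phi_H^\Lambda(Z)\|$ by a \emph{double} sum over pairs $x,y\in\Lambda$, applies the norm for general $(x,y)$, and then sums out $y$ using Lemma~\ref{distLemma}, which leaves $\|\Phi_H\|_{\zeta,0,L}\,\|F\|_\Gamma\sum_{x\in\Lambda}\zeta({\rm dist}(x,L))$; the final bound $\sum_{x\in\Lambda}\zeta({\rm dist}(x,L))\leq C_\zeta M^{d-|\ell|}$ is then only asserted (``the series in $x$ is summable in $|\ell|$ directions''). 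You instead take the diagonal choice $y=x$ in the norm, which is valid since $\{x,x\}\subset X$ just means $x\in X$ and $d^\Lambda_L(x,x)=2\,{\rm dist}(x,L)$; this avoids both the double sum and Lemma~\ref{distLemma}, at the price of carrying the full weight $F_\zeta(2\,{\rm dist}(x,L))$ into the sum over $x$. Your stratification by $k={\rm dist}(x,L)$ with the count $N_k\leq C M^{d-|\ell|}(k+1)^{|\ell|-1}$ (uniform in $M$, since distinct torus points at distance $k$ have distinct representatives of $\ell^1$-norm $k$ in $\Z^{|\ell|}$) makes explicit exactly the counting argument the paper leaves implicit; note that here convergence of the $k$-series already follows from $\zeta\leq 1$ and $|\ell|-1-(d+1)\leq -2$, so the superpolynomial decay of $\zeta$ is not even needed at this point. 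In short: same approach, slightly more self-contained and explicit at the two places where the paper delegates to Lemma~\ref{distLemma} or to a one-line remark, and the constant you obtain depends only on $\zeta$ and $d$, just as in the paper.
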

\begin{proof} We have
\begin{align*}
\|H^\Lambda\|&\leq
\sum_{Z\subset\Lambda} \|\Phi_H(Z)\| \leq \sum_{x,y\in\Lambda}\sum_{Z \subset \Lambda:\:\{x,y\}\subset Z}\frac{\|\Phi_H^\Lambda(Z)\|}{F_\zeta(d^\Lambda_L(x,y))} F_\zeta(d^\Lambda_L(x,y))\\
&\leq  \|\Phi_H\|_{\zeta,0,L} \sum_{x \in\Lambda} \sum_{ y\in\Lambda} F_\zeta(d^\Lambda_L(x,y)) \leq    \|\Phi_H\|_{\zeta,0,L} \sum_{x \in\Lambda} \zeta\left({\rm dist}(x,L) \right)\,\|F\|_\Gamma \\
&\leq  \|\Phi_H\|_{\zeta,0,L} \sum_{x \in\Gamma} \zeta\left({\rm dist}(x,L) \right)\,\|F\|_\Gamma \leq C_\zeta  \, \|\Phi_H\|_{\zeta,0,L}\, M^{d-|\ell|}
\,,
\end{align*}
since the series in $x$ is summable in $|\ell|$ directions. 
\end{proof} 

We continue with a norm estimate on iterated commutators with  local Hamiltonians all localized near the same $L$.  
\begin{lemma}\label{lemma:comm1}
There is a constant $C_k$ depending only on $k\in\N$ such that
for any $A_1  \in \mathcal{L}_{\zeta,k,L}$, $A_2,\ldots,A_k \in \mathcal{L}_{\zeta,k}$, $X \subset \Lambda$, and   $O\in\mathcal{A}^+_X$ it holds that
\[
\| \add_{A_k^\Lambda} \circ \cdots \circ \add_{A_1^\Lambda} (O)  \|\leq  C_k\,\|O \| \,|X|^k\,   \zeta\left({\rm dist}(X,L) \right) \,\|\Phi_{A_1}\|_{\zeta,k-j,L}\,
\prod_{j=2}^k \|\Phi_{A_j}\|_{\zeta,k-j}\  \,.
\]
\end{lemma}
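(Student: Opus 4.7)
The strategy is to expand the nested commutator via the interaction decomposition, identify a ``chain condition'' that leaves only finitely many non-vanishing summands, and control the resulting weighted sum by induction on $k$.

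By multilinearity,
\[
\add_{A_k} \circ \cdots \circ \add_{A_1}(O) = \sum_{Z_1,\ldots,Z_k \subset \Lambda} [\Phi_{A_k}(Z_k), \cdots, [\Phi_{A_1}(Z_1), O]\cdots].
\]
Each $\Phi_{A_j}(Z_j) \in \mathcal{A}_{Z_j}^\mathfrak{N} \subset \mathcal{A}_{Z_j}^+$ is even, and since the commutator of two even operators is again even, a short induction shows the $(j-1)$-fold inner bracket is even and supported in $Y_{j-1} := X \cup Z_1 \cup \cdots \cup Z_{j-1}$. Two even operators with disjoint support commute, so the summand vanishes unless the \emph{chain condition} $Z_j \cap Y_{j-1} \neq \emptyset$ holds for every $j$. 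Applying $\|[a,b]\| \leq 2\|a\|\|b\|$ iteratively reduces the problem to estimating
\[
I_k(X) := \sum_{(Z_1,\ldots,Z_k) \text{ chain}} \prod_{j=1}^k \|\Phi_{A_j}(Z_j)\|,
\]
up to the overall factor $2^k \|O\|$.

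To carry out the induction, I would prove the more general estimate
\[
I_k^{(\vec n)}(X) := \sum_{(Z_j)\text{ chain}} \prod_{j=1}^k |Z_j|^{n_j} \|\Phi_{A_j}(Z_j)\| \;\leq\; C_{k,\vec n}\, |X|^k\, \zeta({\rm dist}(X,L))\, \|\Phi_{A_1}\|_{\zeta,n_1,L}\, \prod_{j=2}^k \|\Phi_{A_j}\|_{\zeta,n_j}
\]
for arbitrary $\vec n \in \N_0^k$. The base case $k=1$ follows from the elementary bound $\sum_{Z \ni x} |Z|^n \|\Phi(Z)\| \leq \sum_{y \in \Lambda} \sum_{Z \supset \{x,y\}} |Z|^n \|\Phi(Z)\| \leq \|\Phi\|_{\zeta,n,L} \sum_y F_\zeta(d_L^\Lambda(x,y))$, combined with Lemma~\ref{distLemma} to produce the factor $\zeta({\rm dist}(X,L))$ after summing $x$ over $X$. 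For the inductive step, I would isolate the $Z_k$-sum: by the chain condition, Lemma~\ref{LRlemma}, and the same dummy-variable trick,
\[
\sum_{Z_k \cap Y_{k-1} \neq \emptyset} |Z_k|^{n_k} \|\Phi_{A_k}(Z_k)\| \;\leq\; |Y_{k-1}|\, \|\Phi_{A_k}\|_{\zeta,n_k}\, \|F\|_\Gamma.
\]
Using $|Y_{k-1}| \leq |X| + \sum_{i<k}|Z_i|$ yields the recursion
\[
I_k^{(\vec n)}(X) \leq \|\Phi_{A_k}\|_{\zeta,n_k}\|F\|_\Gamma \Big(|X|\, I_{k-1}^{(\vec n_{-k})}(X) + \sum_{i<k} I_{k-1}^{(\vec n_{-k}+e_i)}(X)\Big),
\]
where $\vec n_{-k}$ drops the last component of $\vec n$ and $e_i$ increments the $i$-th component by one. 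The inductive hypothesis then yields the claim, with $|X|^{k-1}$-terms absorbed into $|X|^k$ using $|X| \geq 1$.

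The main bookkeeping obstacle is tracking how the multi-indices $\vec n$ accumulate along the recursion. Starting from $\vec n = 0$, each of the $k-1$ recursive iterations can increment at most one component $n_i$ (with $i$ strictly below the current level) by one unit, so the accumulated value of $n_j$ is bounded by the number of levels strictly above $j$, namely $k-j$. Since $\|\Phi\|_{\zeta,n}$ is monotone nondecreasing in $n$, every branch of the expansion can then be uniformly bounded by $\|\Phi_{A_1}\|_{\zeta,k-1,L} \prod_{j\geq 2} \|\Phi_{A_j}\|_{\zeta,k-j}$, which matches the statement. The number of distinct recursion trees is bounded by $k!$, and combining this with the $2^k$ from the commutator bounds and the $\|F\|_\Gamma^k$ from the Lieb--Robinson-type estimates produces the constant $C_k$.
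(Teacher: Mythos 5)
Your proof is correct and follows essentially the same route as the paper's: expand the nested commutator into interaction terms, use evenness to impose the overlap (chain) condition, apply the trivial commutator bound, and control the resulting weighted sums via the dummy-variable trick with the $F_\zeta$-norms, the summability bound $\|F\|_\Gamma$, Lemma~\ref{distLemma} for the factor $\zeta({\rm dist}(X,L))$, and absorption of the $|Z_i|$ factors into higher-index norms of the inner interactions. The paper writes out only the case $k=2$ explicitly and declares the general case obvious, so your induction with the multi-index bookkeeping is simply a systematic version of the same estimate.
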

\begin{proof} 
For better readability we give the proof only for the double commutator. The general statement is then obvious. We estimate 
\begin{align*}
\| &[A_2, [A_1,O]] \| \leq \sum_{\substack{Z_1\subset\Lambda\\ Z_1\cap X\not=\emptyset} } \sum_{\substack{Z_2\subset\Lambda\\ Z_2\cap (X\cup Z_1)\not=\emptyset} }\|[\Phi_{A_2}^{ \Lambda }(Z_2) ,[\Phi_{A_1}^{ \Lambda }(Z_1),O ]\| \\
&\leq  2^2
 \|O \| \sum_{\substack{x_1\in X,\\ y_1\in \Lambda}} F_\zeta(d_L^\Lambda(x_1,y_1)) \sum_{\substack{Z_1 \subset \Lambda: \\ \{x_1,y_1\}\subset Z_1}}
\frac{\|\Phi^{ \Lambda }_{A_1}(Z_1)\|}{F_\zeta(d_L^\Lambda(x_1,y_1))} \times\\
& \qquad\qquad \times
\sum_{\substack{x_2\in X\cup Z_1, \\ y_2\in \Lambda}} F_\zeta(d^\Lambda(x_2,y_2)) \sum_{\substack{Z_2 \subset \Lambda: \\ \{x_2,y_2\}\subset Z_2}}
\frac{\|\Phi^{ \Lambda }_{A_2}(Z_2)\|}{F_\zeta(d^\Lambda(x_2,y_2))}\\
&\leq 
 2^2 
 \|O \| \|\Phi^{ \Lambda }_{A_2}\|_{\zeta,0} \sum_{\substack{x_1\in X, \\ y_1\in \Lambda}} F_\zeta(d_L^\Lambda(x_1,y_1)) \sum_{\substack{Z_1 \subset \Lambda: \\ \{x_1,y_1\}\subset Z_1}}
\frac{\|\Phi^{ \Lambda }_{A_1}(Z_1)\|}{F_\zeta(d_L^\Lambda(x_1,y_1))} 
\sum_{\substack{x_2\in X\cup Z_1, \\ y_2\in \Lambda}} F_\zeta(d^\Lambda(x_2,y_2))\\
&\leq
 2^2 
 \|O \| \|\Phi^{ \Lambda }_{A_2}\|_{\zeta,0} \,\|F\|_\Gamma \sum_{\substack{x_1\in X, \\ y_1\in \Lambda}} F_\zeta(d_L^\Lambda(x_1,y_1))  
 \sum_{\substack{Z_1 \subset \Lambda: \\ \{x_1,y_1\}\subset Z_1}}
\frac{\|\Phi^{ \Lambda }_{A_1}(Z_1)\|}{F_\zeta(d_L^\Lambda(x_1,y_1))} (|X|+|Z_1|).
\end{align*}
Using that $|X|+|Z_1| \le 2 |X|\,|Z_1|$ as $|X|,|Z_1|\ge 1$, we can further bound
\begin{align*}
\| [A_2, [A_1,O]] \| &\leq
 2^3 
 \|O \| \|\Phi^{ \Lambda }_{A_2}\|_{\zeta,0}\,\|\Phi^{ \Lambda }_{A_1}\|_{\zeta,1,L} \,\|F\|_\Gamma \,|X|\,\sum_{\substack{x_1\in X \\ y_1\in \Lambda}} F_\zeta(d_L^\Lambda(x_1,y_1)) \\
 &\leq 
 2^3 
 \|O \| \|\Phi^{ \Lambda }_{A_2}\|_{\zeta,0}\,\|\Phi^{ \Lambda }_{A_1}\|_{\zeta,1,L} \,\|F\|_\Gamma^{2} \,|X|\,\sum_{x\in X}  \zeta\left({\rm dist}(x,L) \right)\\
&\leq  2^3 
 \|O \| \|\Phi^{ \Lambda }_{A_2}\|_{\zeta,0}\,\|\Phi^{ \Lambda }_{A_1}\|_{\zeta,1,L} \,\|F\|_\Gamma^{2} \,|X|^2\,   \zeta\left({\rm dist}(X,L) \right)
\end{align*}
and conclude the proof.
\end{proof}

The next lemma shows that such an iterated commutator of  local $L$-localized Hamiltonians is itself a  local
$L$-localized Hamiltonian. It is an adaption of Lemma~4.6 (ii) in \cite{BDF}.
\begin{lemma}\label{manyadlemma}
Let $n\in\N_0$, $k\in \N$, $A_0 \in \mathcal{L}_{\zeta,n,L}$, and $A_1,\ldots, A_k\in \mathcal{L}_{\zeta,n+k}$. Then $\add_{A_k}\cdots\add_{A_1}(A_0) \in \mathcal{L}_{\zeta,n,L}$
and
\[
\| \Phi_{\add_{A_{k}}\cdots\add_{A_{1}}(A_0)}\|_{\zeta,n,L} \leq C_{k,n}\, \|\Phi_{A_0}\|_{\zeta,n+k,L} 
 \prod_{j=1}^k \|\Phi_{A_j}\|_{\zeta,n+k} 
\]
with a constant $C_{k,n}$ depending only on $k$ and $n$.
In particular, for $A_0 \in  \mathcal{L}_{\zeta,\infty,L}$ and $A_1,\ldots, A_k\in \mathcal{L}_{\zeta,\infty} $ also $\add_{A_k}\cdots\add_{A_1}(A_0) \in \mathcal{L}_{\zeta,\infty,L}$.
\end{lemma}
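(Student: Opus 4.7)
The plan is induction on $k$, with the substantive work concentrated in the base case $k=1$. For the base case, fix $A_0 \in \mathcal{L}_{\zeta,n+1,L}$ and $A_1 \in \mathcal{L}_{\zeta,n+1}$, and exploit that the paper's convention places each $\Phi(X)$ in the even subalgebra $\mathcal{A}_X^{\mathfrak{N}} \subset \mathcal{A}_X^+$: operators with disjoint supports commute, so
\[
[A_1^\Lambda, A_0^\Lambda] \;=\; \sum_{Z \subset \Lambda}\,\Phi_{[A_1,A_0]}^\Lambda(Z), \qquad
\Phi_{[A_1,A_0]}^\Lambda(Z) \;:=\; \sum_{\substack{X,Y \subset \Lambda:\\ X \cup Y = Z,\, X \cap Y \neq \emptyset}} [\Phi_{A_1}^\Lambda(Y),\,\Phi_{A_0}^\Lambda(X)]
\]
defines a bona fide interaction in $\mathcal{A}_Z^{\mathfrak{N}}$ with the pointwise bound $\|\Phi_{[A_1,A_0]}^\Lambda(Z)\| \leq 2 \sum \|\Phi_{A_0}^\Lambda(X)\|\,\|\Phi_{A_1}^\Lambda(Y)\|$.

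To estimate $\sum_{Z \supset \{x,y\}}|Z|^n \|\Phi_{[A_1,A_0]}^\Lambda(Z)\|$, I would split into four cases depending on whether $x,y$ lie in $X$, in $Y$, or one in each, after the elementary bound $|Z|^n \leq (|X|+|Y|)^n \leq 2^n |X|^n |Y|^n$. The constraint $X \cap Y \neq \emptyset$ is handled by overcounting with an auxiliary summation variable $z \in X \cap Y$. In the mixed case $x \in X$, $y \in Y$, the $X$-sum is controlled by $\|\Phi_{A_0}\|_{\zeta,n,L}\,F_\zeta(d^\Lambda_L(x,z))$, the $Y$-sum by $\|\Phi_{A_1}\|_{\zeta,n}\,F_\zeta(d^\Lambda(z,y))$, and the subsequent $z$-summation collapses via the third estimate of Lemma~\ref{LRlemma} to a single factor proportional to $F_\zeta(d^\Lambda_L(x,y))$ --- precisely the denominator appearing in $\|\cdot\|_{\zeta,n,L}$. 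In the cases where both $x,y$ belong to $X$ (resp.\ to $Y$), the two-point $X$-sum already yields $\|\Phi_{A_0}\|_{\zeta,n+1,L}\,F_\zeta(d^\Lambda_L(x,y))$ after absorbing $\sum_{z \in X}$ into one extra power of $|X|$; the residual one-point $Y$-sum is handled by the standard trick
\[
\sum_{Y \ni z}|Y|^{n}\|\Phi_{A_1}^\Lambda(Y)\| \;\leq\; \sum_{w}\sum_{Y \supset \{z,w\}}|Y|^{n}\|\Phi_{A_1}^\Lambda(Y)\| \;\leq\; \|\Phi_{A_1}\|_{\zeta,n}\,\|F\|_\Gamma,
\]
at the cost of one additional power of $|Y|$ (accounting for the index $n+1$ on $A_1$). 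Combining the four contributions yields $\|\Phi_{[A_1,A_0]}\|_{\zeta,n,L} \leq C_{1,n}\,\|\Phi_{A_0}\|_{\zeta,n+1,L}\,\|\Phi_{A_1}\|_{\zeta,n+1}$.

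The inductive step proceeds by writing $\add_{A_k}\cdots\add_{A_1}(A_0) = [A_k,\,B]$ with $B := \add_{A_{k-1}}\cdots\add_{A_1}(A_0)$; the inductive hypothesis applied at level $n+1$, using $(n+1)+(k-1) = n+k$ so that the required index of $A_0$ remains $n+k$, places $B$ in $\mathcal{L}_{\zeta,n+1,L}$ with the appropriate product bound, and a final application of the base case to $[A_k,B]$ delivers the claim with constant $C_{k,n} := C_{1,n}\,C_{k-1,n+1}$ (absorbing $\|\Phi_{A_k}\|_{\zeta,n+1} \leq \|\Phi_{A_k}\|_{\zeta,n+k}$). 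The $\mathcal{L}_{\zeta,\infty,L}$ statement follows immediately, since for every $n$ the starting data already lie in $\mathcal{L}_{\zeta,n+k,L}$ and $\mathcal{L}_{\zeta,n+k}$ respectively.

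The main obstacle is ensuring that the $L$-localization of $A_0$ is transmitted through the commutator in the form demanded by $\|\cdot\|_{\zeta,n,L}$: the denominator on the right-hand side must involve $d^\Lambda_L(x,y)$ and not merely $d^\Lambda(x,y)$. Securing this is the precise purpose of the asymmetric third bound of Lemma~\ref{LRlemma}, whose two arguments $d^\Lambda_L$ and $d^\Lambda$ mirror the asymmetric roles of $A_0$ and $A_1$; the decisive point is to apply it with the $d^\Lambda_L$-factor attached to the $A_0$-contribution and the $d^\Lambda$-factor to the $A_1$-contribution, and never the other way around. All remaining difficulties --- keeping track of the four cases, of the powers of $|X|$ and $|Y|$, and of the index shifts --- are combinatorial bookkeeping of the type already carried out in Lemma~\ref{lemma:comm1}.
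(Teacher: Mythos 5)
Your proposal is, in structure, the paper's own proof of Lemma~\ref{manyadlemma}: the same commutator interaction \eqref{interaction_comm}, the same split into four cases according to where the anchor points $x,y$ sit, the same use of the asymmetric third estimate of Lemma~\ref{LRlemma} in the mixed case (with the $d^\Lambda_L$-factor on the $A_0$-side), the same auxiliary-point trick for one-point sums, and the same induction on $k$ with the index shift $n\to n+1$, which the paper compresses into ``the rest follows by induction''. Replacing the paper's binomial expansion of $|Z|^n$ by the cruder bound $|Z|^n\le 2^n|X|^n|Y|^n$ is harmless given the norm indices you assume.

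There is, however, one case where the bookkeeping as you describe it does not close, and it is exactly the case the paper itself dismisses with ``for the remaining cases just interchange the role of $A_1$ and $A_0$'': the case $\{x,y\}\cap X=\emptyset$, i.e.\ both anchor points lie in the support $Y$ of the \emph{non-localized} factor $A_1$. This case is not symmetric to the case $x,y\in X$, contrary to what your ``(resp.\ to $Y$)'' suggests. The two-point sum now runs over $\Phi_{A_1}$, whose norm produces only $F_\zeta(d^\Lambda(x,y))$ in the numerator, while the target denominator is $F_\zeta(d^\Lambda_L(x,y))$; if the residual one-point $X$-sum is then discarded into a bare factor $\|F\|_\Gamma$ by your ``standard trick'', one is left with the ratio $F_\zeta(d^\Lambda(x,y))/F_\zeta(d^\Lambda_L(x,y))$, which is unbounded in $x,y$. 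The $L$-localization has to be transported through the intersection point $z\in X\cap Y$: one must at least retain the factor $\zeta(\mathrm{dist}(z,L))$ that Lemma~\ref{distLemma} provides for the localized one-point sum (anchoring that sum at the degenerate pair $(z,z)$ even gives $F_\zeta(2\,\mathrm{dist}(z,L))$), and then balance the available decay against $d^\Lambda_L(x,y)\le d^\Lambda(x,y)+d^\Lambda(x,z)+d^\Lambda(z,y)+2\,\mathrm{dist}(z,L)$ using the supermultiplicativity of $\zeta$, which forces you to extract decay from the pairs involving $z$ in the $Y$-sum rather than from $(x,y)$ alone. Your closing remark attributes everything beyond the mixed case to ``combinatorial bookkeeping of the type already carried out in Lemma~\ref{lemma:comm1}''; this particular case is the one place where that is not true --- it is where the localization can genuinely be lost, it is the step that actually needs writing out, and it is precisely where the paper's own proof is tersest.
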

\begin{proof}
One defines the interaction of a commutator $  [A_1, A_0]$ as
\begin{equation} \label{interaction_comm}
\Phi^\Lambda_{ [A_1, A_0]}(Z) := \sum_{\substack{X_1,X_0\subset \Lambda:\\ X_1\cup X_0=Z, \,X_1\cap X_0\not=\emptyset}} [ \Phi^\Lambda_{A_1}(X_1), \Phi^\Lambda_{A_0}(X_0)]\,.
\end{equation}
We need to estimate the sum
\begin{align*}
\sum_{  Z \subset \Lambda:\: \{x,y\}\subset Z} &|Z|^n \frac{\|\Phi^\Lambda_{ [A_1, A_0]}(Z)\|}{F_\zeta(d_L^\Lambda(x,y))}\\
&\leq 2 \sum_{k=0}^n \binom{n}{k} \hspace{-2mm}\sum_{  Z \subset \Lambda:\: \{x,y\}\subset Z} \hspace{-2mm}\sum_{\substack{X_1,X_0\subset \Lambda:\\ X_1\cup X_0=Z, \,X_1\cap X_0\not=\emptyset}} \hspace{-8mm}|X_0|^k \,|X_1|^{n-k} \frac{\| \Phi^\Lambda_{A_1}(X_1)\|\,\| \Phi^\Lambda_{A_0}(X_0)\|}{F_\zeta(d_L^\Lambda(x,y))}
\end{align*}
uniformly in $x,y$ and $\Lambda$. One now splits the sum into four parts which are estimated separately: $X_0\cap \{x,y\}$ is either $\emptyset$, $\{x\}$, $\{y\}$, or $\{x,y\}$.
The part of the sum where  $x,y\in X_0$  can be estimated by
\begin{gather*}
  2 \sum_{k=0}^n \binom{n}{k}\sum_{  X_0\ni  x,y  }
  |X_0|^k\frac{\| \Phi^\Lambda_{A_0}(X_0)\|}{F_\zeta(d_L^\Lambda(x,y))}
  \sum_{z_0\in X_0} \sum_{z_1\in\Lambda} \sum_{X_1\ni z_0,z_1}
  |X_1|^{n-k} \frac{\| \Phi^\Lambda_{A_1}(X_1)\|\,}{F_\zeta(d^\Lambda(z_0,z_1))}F_\zeta(d^\Lambda(z_0,z_1))\\
  \leq 2 \sum_{k=0}^n \binom{n}{k} \|F \|_\Gamma   \|\Phi_{A_0}\|_{\zeta,k+1,L} \|\Phi_{A_1}\|_{\zeta,n-k}\\
  \leq 2^{n+1} \|F \|_\Gamma   \|\Phi_{A_0}\|_{\zeta,n+1,L} \|\Phi_{A_1}\|_{\zeta,n} \,,
\end{gather*}
 where we used  $\|\Phi\|_{\zeta,n,L}\leq \|\Phi\|_{\zeta,m,L}$ whenever $n\leq m$.
The part of the sum where  $x\in X_0$ but $y\in X_1\setminus X_0$  can be estimated by
\begin{gather*}
  2 \sum_{k=0}^n \binom{n}{k}\sum_{  z\in\Lambda  } \frac{F_\zeta(d^\Lambda(z,y))F_\zeta(d_L^\Lambda(x,z))}{F_\zeta(d_L^\Lambda(x,y))}\sum_{  X_0\ni x,z}
  |X_0|^k\frac{\| \Phi^\Lambda_{A_0}(X_0)\|}{F_\zeta(d_L^\Lambda(x,z))}
  \sum_{X_1\ni z,y}
  |X_1|^{n-k} \frac{\| \Phi^\Lambda_{A_1}(X_1)\|\,}{F_\zeta(d^\Lambda(z,y))} \\
  \leq 2^{n+1}  2^d 2^{ 2d+2}\|F\|_\Gamma  \|\Phi_{A_0}\|_{\zeta,n,L} \|\Phi_{A_1}\|_{\zeta,n}\,.
\end{gather*}
For the remaining cases just interchange the role of $A_1$ and $A_0$. We can finally collect the four estimates and find that
\[
\| \Phi_{[A_1,A_0]}\|_{\zeta,n,L} \leq 2^{n+2} (\|F \|_\Gamma +2^{3d+2} \|F \|_\Gamma)  \|\Phi_{A_0}\|_{\zeta,n+1,L} \|\Phi_{A_1}\|_{\zeta,n+1,L}\,.
\]
The rest follows by induction.
\end{proof}

We also need to control the norm of commutators with time-evolved local observables. This is the content of the next lemma, which is adapted from Lemma~4.7 in~\cite{BDF}.
\begin{lemma}\label{CommuLemma}
Let $H\in \mathcal{L}_{a,0}$ generate the dynamics $u^{ \Lambda}_{t,s}$ with Lieb--Robinson velocity $v$ as in \eqref{vDef}. Then there exists a constant $C>0$ such that for any $O\in \mathcal{A}_X^+$ with $X\subset \Lambda$, for any $A\in \mathcal{L}_{\zeta,0,L}$  and for any $t,s\in [0,\infty)$   it holds that 
\[
\| [A,u^{ \Lambda}_{t,s}(O)]\| \leq C \|O\| \|\Phi_A\|_{\zeta,0,L} |X|^2 \,\zeta({\rm dist}(X,L  )) \,(1+|t-s|^d)\,.
\]

 If $H\in \mathcal{L}_{\tilde\zeta,0}$ for some $\tilde \zeta\in\mathcal{S}$, one still has that for any $T>0$ there exists a constant $C$ such that
 \[
 \sup_{t,s\in [0,T]}\| [A,u^{ \Lambda}_{t,s}(O)]\| \leq C \|O\| \|\Phi_A\|_{\zeta,0,L} |X|^2 \,\zeta({\rm dist}(X,L  ))  \,.
 \]
\end{lemma}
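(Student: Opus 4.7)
The plan is to expand $A = \sum_{Z \subset \Lambda}\Phi_A(Z)$ and control the commutator termwise via
\[
\|[A, u^\Lambda_{t,s}(O)]\| \;\le\; \sum_{Z \subset \Lambda}\|[\Phi_A(Z), u^\Lambda_{t,s}(O)]\|.
\]
I would split the sum over $Z$ into three regimes: (i) $Z \cap X \neq \emptyset$; (ii) $Z \cap X = \emptyset$ with $d^\Lambda(X, Z) < 2v|t-s|$, where $v$ is the Lieb--Robinson velocity of \eqref{vDef}; and (iii) $Z \cap X = \emptyset$ with $d^\Lambda(X, Z) \ge 2v|t-s|$. Regime (i) is handled by the trivial bound, regime (iii) by a Lieb--Robinson estimate from Theorem~\ref{LRB}, and regime (ii) is the ``light-cone'' region whose volume will produce the $(1+|t-s|^d)$ factor.

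In regime (i) the estimate $\|[\Phi_A(Z), u^\Lambda_{t,s}(O)]\| \le 2\|\Phi_A(Z)\|\|O\|$ is enough, and the factor $\zeta({\rm dist}(X,L))$ appears pointwise: any $x \in Z \cap X$ has ${\rm dist}(x,L) \ge {\rm dist}(X,L)$, so setting $y = x$ in the definition of $\|\Phi_A\|_{\zeta,0,L}$ yields $\sum_{Z\ni x}\|\Phi_A(Z)\| \le \|\Phi_A\|_{\zeta,0,L} F_\zeta(2{\rm dist}(X,L)) \le \|\Phi_A\|_{\zeta,0,L}\,\zeta({\rm dist}(X,L))$, and summing over $x \in X$ contributes $C|X|\|O\|\|\Phi_A\|_{\zeta,0,L}\zeta({\rm dist}(X,L))$. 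In regime (iii) Theorem~\ref{LRB} (applied with $(A,B,X,Y) = (\Phi_A(Z),O,Z,X)$) gives $\|[\Phi_A(Z), u^\Lambda_{t,s}(O)]\| \le C\|\Phi_A(Z)\|\|O\||X|(e^{va|t-s|}-1)e^{-ad^\Lambda(X,Z)} \le C\|\Phi_A(Z)\|\|O\||X| e^{-ad^\Lambda(X,Z)/2}$, which is absolutely summable in $Z$ and independent of $|t-s|$. In regime (ii) I revert to the trivial commutator bound, parametrise $Z$ by a chosen closest point $x_Z \in Z$ with $d^\Lambda(X,x_Z) = d^\Lambda(X,Z)$, and estimate $\sum_{Z \ni x_Z}\|\Phi_A(Z)\| \le \|\Phi_A\|_{\zeta,0,L}F_\zeta(2{\rm dist}(x_Z,L))$; since the cardinality of admissible $x_Z$ is at most $C|X|(1+v|t-s|)^d$, this is exactly what produces the polynomial factor $(1+|t-s|^d)$.

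In regimes (ii) and (iii) the delicate step is extracting a \emph{clean} $\zeta({\rm dist}(X,L))$ out of the mixed decay $F_\zeta(2{\rm dist}(x_Z,L))$. This is done via the triangle inequality ${\rm dist}(X,L)\le d^\Lambda(X,x_Z)+{\rm dist}(x_Z,L)$ combined with super-multiplicativity $\zeta(r+s)\ge\zeta(r)\zeta(s)$, which gives $\zeta({\rm dist}(x_Z,L)) \le \zeta({\rm dist}(X,L))/\zeta(d^\Lambda(X,x_Z))$. The \emph{main obstacle} is precisely to handle the resulting ratios $e^{-a\,d^\Lambda(X,x_Z)/2}/\zeta(d^\Lambda(X,x_Z))$ in regime (iii), and $F_\zeta(2{\rm dist}(x_Z,L))/\zeta(d^\Lambda(X,x_Z))$ in regime (ii): one has to exploit that, by Fekete's super-additivity lemma applied to $\ln\zeta$ (cf.\ Lemma~\ref{Slemma}), every $\zeta \in \mathcal{S}$ decays no faster than an exponential $e^{-\alpha r}$, so that by rescaling $\zeta \mapsto \zeta(\,\cdot\,/c) \in \mathcal{S}$ (a change absorbed into the constant $C$) one can always arrange the Lieb--Robinson rate $a$ to dominate and the sum over $x_Z$ to converge.

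For the second statement, with $H \in \mathcal{L}_{\tilde\zeta,0}$ for a general $\tilde\zeta \in \mathcal{S}$, the same decomposition scheme applies but with Theorem~\ref{LRB} replaced by its $\tilde\zeta$-quasi-local counterpart (as available in~\cite{NSY} and adaptable to the torus as in Appendix~\ref{AppendixLR}). The prefactor of the Lieb--Robinson bound is then not a clean exponential in $|t-s|$, so the light-cone split of regimes (ii) and (iii) no longer yields a polynomial time factor; instead, the prefactor is uniformly bounded on $[0,T]$, which gives a finite $T$-dependent constant and the advertised uniform estimate without the explicit $(1+|t-s|^d)$ dependence.
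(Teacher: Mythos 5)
Your decomposition of $A$ into its interaction terms with a light-cone split is a genuinely different route from the paper, which instead decomposes the \emph{evolved observable}: there one writes $u^{\Lambda}_{t,s}(O)=\sum_k O^{(k)}$ with $O^{(k)}$ supported in the fattened sets $X_{v|t-s|+k}$, uses the Lieb--Robinson bound together with Lemma~\ref{NSWLemma} to get $\|O^{(k)}\|\le 2^{-2d}\|O\|\,|X|\,\E^{-ak}$, and then applies Lemma~\ref{lemma:comm1} to each piece, the volume growth of $X_{v|t-s|+k}$ producing the factor $(1+|t-s|^d)$. Your regime (i) and the Lieb--Robinson estimate in regime (iii) are fine as far as they go, and you correctly identify where the difficulty sits; but the way you propose to resolve it does not work. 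Replacing $\zeta$ by $\zeta(\cdot/c)$ is \emph{not} a change that can be absorbed into the constant $C$: the bound you would then prove carries the localization factor $\zeta({\rm dist}(X,L)/c)$, which exceeds $\zeta({\rm dist}(X,L))$ by an unbounded factor (already for $\zeta(r)=\E^{-r}$ the ratio is $\E^{(1-1/c)\,{\rm dist}(X,L)}$), whereas the statement requires the factor $\zeta({\rm dist}(X,L))$ for the \emph{same} $\zeta$ that appears in $\|\Phi_A\|_{\zeta,0,L}$. Likewise, Fekete's lemma only yields the lower bound $\zeta(r)\ge\zeta(1)^{r+1}$ with no control on the rate: a general $\zeta\in\mathcal{S}$ may decay like $\E^{-\alpha r}$ with $\alpha$ much larger than $a/2$, and then your regime-(iii) sum $\sum_{z}\E^{-a\,d^{\Lambda}(X,z)/2}/\zeta(d^{\Lambda}(X,z))$ diverges with the system size.

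Regime (ii) is where the proposal breaks most clearly: there you only have the trivial commutator bound, so nothing compensates $1/\zeta(d^{\Lambda}(X,x_Z))$, which inside the light cone can be as large as $1/\zeta(2v|t-s|)$, i.e.\ exponentially large in $|t-s|$ for exponential $\zeta$. Insisting on extracting $\zeta({\rm dist}(X,L))$ by supermultiplicativity therefore turns the light-cone contribution into $\zeta({\rm dist}(X,L))$ times an exponential in $|t-s|$ rather than $(1+|t-s|^d)$, and an exponential-in-time factor is useless precisely where the lemma is invoked, namely at microscopic times $|t-s|/\epsi$ in the proof of Theorem~\ref{AdiThm}; if you instead give up the extraction in regimes (ii)--(iii), you only recover the unlocalized bound $C\|O\|\,\|\Phi_A\|_{\zeta,0,L}\,|X|^2(1+|t-s|^d)$, i.e.\ the analogue of Lemma~4.7 of \cite{BDF}, not the stated one. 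This is not a bookkeeping issue: once $v|t-s|$ is comparable to ${\rm dist}(X,L)$, the light cone of $X$ reaches the localization region of $A$, and no pointwise trade of Lieb--Robinson decay against $1/\zeta$ can produce the factor $\zeta({\rm dist}(X,L))$ together with only polynomial growth in time. Note that the paper's proof keeps this factor by applying Lemma~\ref{lemma:comm1} with the localization distance of the \emph{original} set $X$ while only the volume factor refers to the fattened support $X_{v|t-s|+k}$ (whose distance to $L$ can be much smaller than ${\rm dist}(X,L)$); that is exactly the delicate step you ran into, and it deserves scrutiny rather than a patch by rescaling $\zeta$.
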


\begin{proof}
We consider first the case $H \in \mathcal{L}_{a,0}$. One uses the following property of partial traces, proved in  
\cite{NSW}.
\begin{lemma}[Lemma~2.1 of \cite{NSW}]\label{NSWLemma}
Let $\Hi_1$ and $\Hi_2$ be Hilbert spaces. Then the partial trace $\mathbb{E}: \mathcal{B}(\Hi_1\otimes\Hi_2)\to \mathcal{B}(\Hi_1)$ is a completely positive linear map with the following property:
Whenever $A\in  \mathcal{B}(\Hi_1\otimes\Hi_2)$ satisfies the commutator bound 
\[
\| [A,{\bf 1}\otimes B]\|\leq \eta\|A\|\|B\|\quad\mbox{ for all }\;B\in \mathcal{B}(\Hi_2)
\]
for some $\eta>0$, then 
\[
\| A - \mathbb{E}(A)\otimes {\bf 1}\|\leq \eta \|A\|\,.
\]
\end{lemma}

We decompose $\Lambda$ into regions $X _{v|t-s| +k}\equiv X^\Lambda_{v|t-s| +k}$, where
for any $Y\subset \Lambda$ and $\delta \geq 0$ we let
\[
Y^\Lambda_\delta := \{ z\in\Lambda\,|\, d^\Lambda(z,Y)\leq \delta\}
\]
be the ``fattening'' of the set $Y$ by $\delta$ in $\Lambda$.
Moreover, denote by $\mathbb{E}_Y: \mathcal{A}_\Lambda = \mathcal{A}_Y \otimes \mathcal{A}_{\Lambda\setminus Y} \to \mathcal{A}_Y$ the corresponding partial trace.
Defining
\[
O^{(0)} := \mathbb{E}_{X_{v|t-s|}} (u^{ \Lambda}_{t,s}(O))
\]
and for $k\geq 1$
\begin{align*}
O^{(k)} &:= \mathbb{E}_{X_{v|t-s|+k}} (u^{ \Lambda}_{t,s}(O)) - \mathbb{E}_{X_{v|t-s|+k-1}} (u^{ \Lambda}_{t,s}(O))\\
&= \mathbb{E}_{X_{v|t-s|+k}} \left( ({\bf 1} - \mathbb{E}_{X_{v|t-s|+k-1}} )u^{ \Lambda}_{t,s}(O)\right)
\end{align*}
we can write $u^{ \Lambda}_{t,s}(O) = \sum_{k=0}^\infty O^{(k)}$, where the sum is always finite, since eventually $X_{v|t-s|+k}=\Lambda$. According to Lemma~\ref{lemma:comm1} we have
\[
\| [A,O^{(k)}]\| \leq 2 \|O^{(k)}\|  \, \|F\|_\Gamma\,\|\Phi_A\|_{\zeta,0 ,L }\,\zeta({\rm dist}(X,L))\times \left\{ \begin{array}{cl} |X|\,(2(1+k))^d  &   \mbox{if } v|t-s|\leq 1\\
|X|\, (4kv|t-s|)^d& \mbox{if } v|t-s|> 1 \end{array}\right.\,,
\]
since
\[
|X_{v|t-s|+k}| \leq |X|\,(2(v|t-s|+k))^d \leq \left\{ \begin{array}{cl} |X|\,(2(1+k))^d  &   \mbox{if } v|t-s|\leq 1\\
|X|\, (4kv|t-s|)^d& \mbox{if } v|t-s|> 1\,.\end{array}\right.
\]
Since $\|\mathbb{E}_Y\|=1$, the norm of $O^{(k)}$ is estimated as
\[
\|O^{(k)}\|\leq \| u^{ \Lambda}_{t,s}(O) - \mathbb{E}_{X_{v|t-s|+k-1}} (u^{ \Lambda}_{t,s}(O))\otimes {\bf 1}\|\,.
\]
Using the Lieb--Robinson bound we find for any $B\in \mathcal{A}_{\Lambda\setminus X_{v|t-s|+k-1}}$ that
\begin{align*}
\| [ &u^{ \Lambda}_{t,s}(O), B] \| \\
&\leq \frac{1}{2^{2d }\|F\|_\Gamma} \|O\|\,\|B\|\, \left( \E^{a v \cdot|t-s|} -1\right) 
\|F \|_\Gamma \min\{|X|,|\Lambda\setminus X_{v|t-s|+k-1}|\} \;\E^{-a\cdot (v|t-s|+k)}\\
&\leq  2^{-2d} \|O\|\,\|B\|   \,|X|  \,\E^{-ak}
\end{align*}
and thus by Lemma~\ref{NSWLemma}
\[
\|O^{(k)}\|\leq  2^{-2d}  \|O\|\,  |X|  \,\E^{-ak} \,.
\]
Summing up, we conclude that for $v|t-s|> 1$ we have 
\[
\| [A,u^{ \Lambda}_{t,s}(O)]\| \leq    2^{-2d} v^d  \|O\|\,\|F \|_\Gamma  \,\zeta({\rm dist}(X,L))\,|X|^2 \|\Phi_A\|_{\zeta,0,L}\,    |t-s|^d \sum_{k=0}^\infty k^d\E^{-ak} 
\]
and a similar estimate in the case $v|t-s|\leq 1$. 

The bound for Hamiltonians in $\mathcal{L}_{\tilde\zeta,0}$, with $\tilde\zeta \in \mathcal{S}$, follows analogously by just using $X_k$ instead of $X_{v|t-s|+k}$, and noting that the exponential $\exp\left( 2^{2d+2}\|F\|_\Gamma  \|\Phi_H \|_{\tilde\zeta,0} \cdot|t-s|\right)$ from the Lieb--Robinson bound is bounded for $t,s$ in bounded sets.
\end{proof}

The final lemma in this appendix shows that adjoining a  local $L$-localized Hamiltonian with a unitary that is itself the exponential of a  local Hamiltonian yields again a  local and $L$-localized Hamiltonian. Here we adapted Lemma~4.8 from \cite{BDF}.
\begin{lemma}\label{transformlemma}
Let $S\in \mathcal{L}_{\zeta,0}$ be self-adjoint and let $D \in \mathcal{L}_{\mathcal{S},\infty,L}$, i.e.\ there is a sequence $(\tilde\zeta_n)_{n\in\N_0}$ in $\mathcal{S}$ such that $ \|\Phi_D\|_{\tilde \zeta_n,n+1,L}<\infty$. Then the family of operators
\[
\left\{A^\Lambda := \E^{-\I S^\Lambda} \,D^\Lambda\, \E^{\I S^\Lambda} \right\}_\Lambda
\]
defines a Hamiltonian $A\in \mathcal{L}_{\mathcal{S},\infty,L}$. More precisely, there is a constant $C_{\| \Phi_S\|_{\zeta,0 } }$ depending  on $\| \Phi_S\|_{\zeta,0 }$,   $\zeta$, $(\tilde\zeta_n)_{n\in\N_0}$, and $d$, and a sequence  $(\xi_n)_{n\in\N_0}$ in $ \mathcal{S}$,   such that
\[
\| \Phi_A \|_{\xi_n,n,L} \leq C_{\| \Phi_S\|_{\zeta,0 } } \, \|\Phi_D\|_{\tilde \zeta_n,n+1,L}
\]
for all $n\in\N_0$.
\end{lemma}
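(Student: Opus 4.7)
The strategy is to regard $V^\Lambda := \E^{-\I S^\Lambda}$ as the unit-time Heisenberg evolution generated by $S^\Lambda$, i.e.\ $A^\Lambda = u^S_1(D^\Lambda)$ where $u^S_t(B^\Lambda):= \E^{-\I t S^\Lambda}B^\Lambda \E^{\I t S^\Lambda}$. Since $S \in \mathcal{L}_{\zeta,0}$, the Lieb--Robinson bound of Theorem~\ref{LRB} applies (with $|t-s|=1$) and guarantees that the evolution is quasi-local: for each $Y\subset\Lambda$, the conjugated local term $u^S_1(\Phi_D^\Lambda(Y))$ is approximately supported in a neighborhood of $Y$, with the tail decaying at the rate $F_\zeta$. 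The interaction of $A^\Lambda$ can then be extracted by collecting these tails shell-by-shell via partial traces.

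Concretely, for $Y\subset\Lambda$ and $k\in\N_0$ let $Y^\Lambda_k := \{z\in\Lambda\mid d^\Lambda(z,Y)\leq k\}$ be the $k$-fattening of $Y$ (with $Y^\Lambda_{-1}:=\emptyset$), and let $\mathbb{E}_{Y^\Lambda_k}:\mathcal{A}_\Lambda\to \mathcal{A}_{Y^\Lambda_k}$ be the corresponding partial trace. Define
\[
\Delta_Y^{(k)} := \mathbb{E}_{Y^\Lambda_k}\!\bigl(u^S_1(\Phi_D^\Lambda(Y))\bigr) - \mathbb{E}_{Y^\Lambda_{k-1}}\!\bigl(u^S_1(\Phi_D^\Lambda(Y))\bigr) \in \mathcal{A}^+_{Y^\Lambda_k}\,,
\]
so that $u^S_1(\Phi_D^\Lambda(Y)) = \sum_{k\geq 0}\Delta_Y^{(k)}$. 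Setting $\Phi_A^\Lambda(Z) := \sum_{(Y,k):\,Y^\Lambda_k=Z}\Delta_Y^{(k)}$ then yields a bona fide interaction with $A^\Lambda = \sum_{Z\subset\Lambda}\Phi_A^\Lambda(Z)$. Combining Lemma~\ref{NSWLemma} with Theorem~\ref{LRB} and the second-to-last inequality in the proof of Theorem~\ref{LRB}, one obtains
\[
\|\Delta_Y^{(k)}\| \;\leq\; c_1\,\bigl(\E^{c_2\|\Phi_S\|_{\zeta,0}}-1\bigr)\,|Y|\,\|\Phi_D^\Lambda(Y)\|\,\zeta(k)
\]
for constants $c_1,c_2$ depending only on $d$ and $\|F\|_\Gamma$; the decay in $k$ comes from the $F_\zeta$-tail in the Lieb--Robinson estimate.

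To bound $\|\Phi_A\|_{\xi_n,n,L}$ we estimate
\[
\sum_{\substack{Z\subset\Lambda:\\\{x,y\}\subset Z}} |Z|^n\,\frac{\|\Phi_A^\Lambda(Z)\|}{F_{\xi_n}(d^\Lambda_L(x,y))}
\;\leq\; \sum_{k\geq 0}\;\sum_{\substack{Y:\,\{x,y\}\subset Y^\Lambda_k}} |Y^\Lambda_k|^n\,\frac{\|\Delta_Y^{(k)}\|}{F_{\xi_n}(d^\Lambda_L(x,y))}.
\]
Choosing for each $\{x,y\}\subset Y^\Lambda_k$ a pair $x',y'\in Y$ with $d^\Lambda(x,x'),d^\Lambda(y,y')\leq k$, the triangle inequality gives $d^\Lambda_L(x',y')\leq d^\Lambda_L(x,y)+2k$, while $|Y^\Lambda_k|\leq C(k+1)^d|Y|$. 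The resulting sum over $Y$ (with $\{x',y'\}\subset Y$) is precisely controlled by $\|\Phi_D\|_{\tilde\zeta_n,n+1,L}$—the extra factor of $|Y|$ beyond $|Y|^n$ in $|Y^\Lambda_k|^n$ is exactly what forces us to use the $(n{+}1)$-th rather than the $n$-th norm of $\Phi_D$. The remaining task is to choose $\xi_n\in\mathcal S$ so that
\[
\sum_{k\geq 0}(k+1)^{d(n+1)}\,\zeta(k)\,\sup_{x,y\in\Lambda}\frac{F_{\tilde\zeta_n}(\max\{0,d^\Lambda_L(x,y)-2k\})}{F_{\xi_n}(d^\Lambda_L(x,y))} \;<\; \infty\,.
\]

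The main obstacle is precisely this last step: balancing the fattening-shift in the argument of $F_{\tilde\zeta_n}$ against the decay required of $F_{\xi_n}$, while ensuring that the prefactors $(k+1)^{d(n+1)}$ are absorbed by $\zeta(k)$. Since $\zeta\in\mathcal{S}$ decays faster than any inverse polynomial, for each $n$ one constructs a positive function $h_n(r)$ dominating $\sum_k(k+1)^{d(n+1)}\zeta(k)\,F_{\tilde\zeta_n}(\max\{0,r-2k\})/F(r)$ and still satisfying $\sup_r r^N h_n(r)<\infty$ for all $N$. Lemma~\ref{Slemma}(b) then produces $\xi_n\in\mathcal{S}$ with $c\,h_n\leq \xi_n$, completing the estimate. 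That the constant $C_{\|\Phi_S\|_{\zeta,0}}$ depends only on $\|\Phi_S\|_{\zeta,0}$, $\zeta$, $(\tilde\zeta_n)$, and $d$ is then visible from the construction.
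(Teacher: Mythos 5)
Your proposal is correct and follows essentially the same route as the paper's proof: the same shell decomposition of $\E^{-\I S}\Phi_D^\Lambda(Y)\E^{\I S}$ via conditional expectations onto fattenings $Y_k$, the same bound $\|\Delta_Y^{(k)}\|\lesssim(\E^{c\|\Phi_S\|_{\zeta,0}}-1)|Y|\,\|\Phi_D^\Lambda(Y)\|\,\zeta(k)$ from the Lieb--Robinson bound combined with Lemma~\ref{NSWLemma}, the same definition of $\Phi_A^\Lambda(Z)$, and the same final step of absorbing the extra $|Y|$ into the $(n{+}1)$-norm and invoking Lemma~\ref{Slemma}(b) to produce $\xi_n$ (the paper merely organizes the counting as a split $S_1+S_{2,1}+S_{2,2}$ at $m_0\sim d_L^\Lambda(x,y)/4$, which is what your splitting of the $k$-sum amounts to). Only cosmetic bookkeeping differs: take $\Delta_Y^{(0)}:=\mathbb{E}_{Y_0}(\cdot)$ rather than a difference against $\mathbb{E}_\emptyset$, note that the shift in $d_L^\Lambda$ under moving $x,y$ to $x',y'\in Y$ is $4k$ rather than $2k$ once the distances to $L$ are included, and make the sum over the admissible pairs $(x',y')$ in the balls $B_k(x)\times B_k(y)$ explicit—all of which only change the polynomial-in-$k$ prefactor that $\zeta(k)$ absorbs.
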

\begin{proof}
We use the strategy and the notation from Lemma~\ref{CommuLemma}. For $O\in\mathcal{A}_X$, $X\subset \Lambda$,  
  define
\[
\Delta_0^\Lambda(O)  := \mathbb{E}_{X } ( \E^{-\I S^\Lambda} \,O\, \E^{\I S^\Lambda}  )
\]
and for $k\geq 1$
\[
\Delta_k^\Lambda(O) :=  \mathbb{E}_{X_{ k}} \left( ({\bf 1} - \mathbb{E}_{X_{ k-1}} )\, \E^{-\I S^\Lambda} \,O\, \E^{\I S^\Lambda}\right)\,.
\]
Again we have  $\E^{-\I S^\Lambda} \,O\, \E^{\I S^\Lambda} = \sum_{k=0}^\infty \Delta_k^\Lambda(O)$, where the sum is always finite, since eventually $X_{ k}=\Lambda$. As $\|\mathbb{E}_Y\|=1$, the norm of $\Delta_k^\Lambda(O)$ is estimated as
\[
\|\Delta_k^\Lambda(O)\|\leq \| \E^{-\I S^\Lambda} \,O\, \E^{\I S^\Lambda}  - \mathbb{E}_{X_{ k-1}} (\E^{-\I S^\Lambda} \,O\, \E^{\I S^\Lambda} )\otimes {\bf 1}\|\,.
\]
Since $S\in \mathcal{L}_{\zeta,0 }$,  using the Lieb--Robinson bound we find for any $B\in \mathcal{A}_{\Lambda\setminus X_{ k-1}}$ that
\[
\| [ \E^{-\I S^\Lambda} \,O\, \E^{\I S^\Lambda} , B] \|  \leq  
2^{-2d} \|O\|\,\|B\|\, \left(\E^{2^{2d+2}\|F\|_\Gamma \|\Phi_S \|_{\zeta,0 }} -1\right) 
   |X|  \;\zeta(k)
\]
and thus by Lemma~\ref{NSWLemma}
\begin{equation}\label{lastlemma1}
\|\Delta_k^\Lambda(O)\|\leq  2^{-2d} \|O\| \, \left( \E^{2^{2d+2}\|F\|_\Gamma \|\Phi_S \|_{\zeta,0 }} -1\right) 
   |X|  \;\zeta(k)\,.
\end{equation}
An interaction for $A$ can now be defined by
\begin{equation}\label{lastlemma2}
\Phi_A^\Lambda(Z) := \sum_{k=0}^\infty \sum_{Y\subset \Lambda\,:\, Y_k = Z} \Delta_k^\Lambda(\Phi^\Lambda_D(Y))\,.
\end{equation}
We can rewrite
\begin{align*}
\sum_{Z\subset\Lambda\,:\, \{x,y\}\subset Z} &|Z|^n \|\Phi^\Lambda_A (Z)\| \leq 
 \sum_{ Z\subset\Lambda\,:\, \{x,y\}\subset Z } \sum_{k=0}^\infty \sum_{Y\subset \Lambda\,:\, Y_k = Z} |Z|^n\, \|\Delta_k^\Lambda(\Phi^\Lambda_D(Y))\|\\
 &\leq 
 \sum_{Y\subset\Lambda} \sum_{k=0}^\infty {\bf 1}(x,y\in Y_k) \,|Y_k|^n \, \|\Delta_k^\Lambda(\Phi^\Lambda_D(Y))\|\\
 &\leq
 \sum_{Y\subset\Lambda\,:\, \{x,y\}\subset Y =Y_0 } \sum_{k=0}^\infty  (2k)^{dn} \,|Y |^n \, \|\Delta_k^\Lambda(\Phi^\Lambda_D(Y))\| \\
 &\quad + \; \sum_{m=1}^\infty  \sum_{Y\subset\Lambda\,:\, \{x,y\}\subset Y_m}  {\bf 1}(\{x,y\}\cap Y_{m-1}^c \not= \emptyset)
 \sum_{k\geq m} (2k)^{dn}\,|Y |^n \,\|\Delta_k^\Lambda(\Phi^\Lambda_D(Y))\|\\
 &=: S_1 + S_2\,.
\end{align*}
With \eqref{lastlemma1} one finds
\begin{align*}
S_1 &\leq   C \left( \sum_{Y\subset\Lambda\,:\, \{x,y\}\subset Y} |Y|^{n+1}\,\|\Phi^\Lambda_D(Y)\|\right) \left( \sum_{k=0}^\infty  (2k)^{dn}\zeta(k)\right)\\
 &\leq  
 C \, 
\|\Phi_D\|_{\tilde \zeta_n,n +1 ,L}\,F_{\tilde \zeta_n} (d^\Lambda_L(x,y))\,
 \,.
\end{align*}
For $S_2$ first note that
\[
\sum_{Y\subset\Lambda\,:\, \{x,y\}\subset Y_m}  {\bf 1}(\{x,y\}\cap Y_{m-1}^c \not= \emptyset) \leq \sum_{z_1\in B_m(x)} \sum_{z_2\in B_m(y)} 
\sum_{Y\subset \Lambda\,:\, \{z_1,z_2\}\subset Y} \,1\,
\]
together with  \eqref{lastlemma1}  shows that
\begin{align*}
S_2&\leq \sum_{m=1}^\infty \sum_{z_1\in B_m(x)} \sum_{z_2\in B_m(y)} 
\sum_{Y\subset \Lambda\,:\, \{z_1,z_2\}\subset Y} \,|Y|^{n+1} \,\,\|\Phi^\Lambda_D(Y)\|\, \sum_{k\geq m}  (2k)^{dn}\zeta(k) \\
&\leq 
 C \, 
\|\Phi_D\|_{\tilde \zeta_n,n+1,L}\,
\sum_{m=1}^\infty  \sum_{k\geq m}   (2k)^{dn}\zeta(k)\, \sum_{z_1\in B_m(x)} \sum_{z_2\in B_m(y)} 
F_{\tilde \zeta_n} (d^\Lambda_L(z_1,z_2))\,
 \,.
\end{align*}
The triangle inequality implies that
\[
 d^\Lambda_L(x,y)\leq  d^\Lambda_L(x,z_1)+ d^\Lambda_L(z_1,z_2)+ d^\Lambda_L(z_2,y) \leq 2m +  d^\Lambda_L(z_1,z_2)
\]
and with $m_0 =  \lfloor d^\Lambda_L(x,y)/4\rfloor$ we have for $m\leq m_0$ that $d^\Lambda_L(z_1,z_2)\geq d^\Lambda_L(x,y)/2$.
Thus
\begin{align*}
S_{2,1}&:= \sum_{m=1}^{m_0}  \sum_{k\geq m}  (2k)^{dn}\zeta(k)\, \sum_{z_1\in B_m(x)} \sum_{z_2\in B_m(y)} 
F_{\tilde \zeta_n} (d^\Lambda_L(z_1,z_2))\,
\\
 &\leq F_{\tilde \zeta_n} (d_L^\Lambda(x,y)/2)   \left( \sum_{k=0}^\infty  (2k)^{dn}\zeta(k)\right) \sum_{m=1}^{m_0} (2m)^{ 2d}
\,.
\end{align*}
Now let $m_0(r) := \lfloor r/4\rfloor$. Then the function
\[
f_1(r):= F_{\tilde \zeta_n} (r/2) \,(r +1)^{d+1} \, \sum_{m=1}^{m_0(r)}  m ^{ 2d} = \frac{\tilde\zeta_n(r/2)}{(r/2+1)^d} \,(r +1)^{d+1} \, \sum_{m=1}^{m_0(r)}  m ^{ 2d}
\]
satisfies the assumption of  part (b) of  Lemma~\ref{Slemma} and is thus, up to a constant factor,  bounded by some function $\xi_{n,1}\in\mathcal{S}$. We conclude that
$S_{2,1}\leq C \,F_{\tilde \xi_{n,1}} (d^\Lambda_L(x,y) )$.
The rest of $S_2$ is 
\begin{align*}
S_{2,2}&:=\sum_{m>m_0}  \sum_{k\geq m}   (2k)^{dn}\zeta(k)\, \sum_{z_1\in B_m(x)} \sum_{z_2\in B_m(y)} 
F_{\tilde \zeta_n} (d^\Lambda_L(z_1,z_2))\\ &\leq \|F  \|_\Gamma \sum_{m>m_0}  (2m)^d\sum_{k\geq m}   (2k)^{dn}\zeta(k)\,,
\end{align*}
and, as before, the function
\[
f_2(r):= (r +1)^{d+1} \sum_{m>m_0(r)}  (2m)^d\sum_{k\geq m}   (2k)^{dn}\zeta(k)
\]
satisfies the assumption of Lemma~\ref{Slemma} and is thus, up to a constant factor,  bounded by some function $\xi_{n,2}\in\mathcal{S}$. We conclude that
$S_{2,2}\leq C \,F_{\tilde \xi_{n,2}} (d^\Lambda_L(x,y) )$.

In summary we proved that 
\[
\sum_{Z\subset\Lambda\,:\, \{x,y\}\subset Z} |Z|^n \|\Phi^\Lambda_A (Z)\|  \leq C \,\|\Phi_D\|_{\tilde \zeta_n,n+1,L}\,\left(F_{\tilde \zeta_n} (d^\Lambda_L(x,y))
+F_{\tilde \xi_{n,1}} (d^\Lambda_L(x,y) )+F_{\tilde \xi_{n,2}} (d^\Lambda_L(x,y) )
\right)
\]
and Lemma~\ref{Slemma}  (a)  implies that $\Phi_A \in \mathcal{B}_{\xi_n,n+1,L}$ for some $\xi_n\in\mathcal{S}$.
 \end{proof}

\section{Local inverse of the Liouvillian}\label{AppendixI}
In this appendix we  discuss  the map $\mathcal{I}_H  = \mathcal{I}_{H,g}$ and  prove its properties used in the proof of the adiabatic theorem.

We start with some abstract considerations: Let $H\in \mathcal{B}(\Hi)$ be a self-adjoint operator on some finite-dimensional Hilbert space $\Hi$ and $P\in \mathcal{B}(\Hi)$ a spectral projection of $H$. The inner product $\langle A,B\rangle := \tr A^* B$ turns the algebra  $\mathcal{A} := \mathcal{B}(\Hi)$ into a Hilbert space that splits into the orthogonal sum $\mathcal{A} = \mathcal{A}^{\rm D}_P \oplus \mathcal{A}^{\rm OD}_P$  of the subspaces of diagonal and off-diagonal operators  with respect to $P$, 
\[
A = (PAP + P^\perp AP^\perp) + (P^\perp AP + P AP^\perp)=: A^{\rm D}_P + A^{\rm OD}_P\,.
\] 
The linear map (called Liouvillian)
\[
\ad_H:\mathcal{A}\to \mathcal{A}\,,\quad A \mapsto \ad_H(A) := -\I\, \add_H(A) = -\I [H,A]
\]
is skew-adjoint,
\[
\langle \ad_H(A), B\rangle = \I \,\tr ( [H,A]^*B) =     \I\, \tr (  A^* H  B) - \I\, \tr (   A^*  B H)  = -\langle A , \ad_H(B)\rangle\,,
\]
and thus
\[
{\rm ran}(\ad_H) = ({\rm ker}(\ad_H))^\perp\,.
\]
Let $A\in {\rm ker}(\ad_H)$, i.e.\   $[H,A]=0$. Then also $[A,P]=0$ and thus $ {\rm ker}(\ad_H)\subset  \mathcal{A}^{\rm D}_P$; consequently $\mathcal{A}^{\rm OD}_P \subset  {\rm ker}(\ad_H)^\perp= {\rm ran}(\ad_H) $. Hence for $B\in \mathcal{A}^{\rm OD}_P$ the equation
\begin{equation}\label{abstractCommu}
\ad_H(A) =B 
\end{equation}
has a unique solution   $A\in {\rm ker}(\ad_H)^\perp$. Since with $B^{\rm D}_P=0$ also $[H,A]^{\rm D}_P = [H, A^{\rm D}_P]=0$, 
for any solution $A$ of \eqref{abstractCommu}  $A^{\rm D}_P\in{\rm ker}(\ad_H) $. Hence, for $B\in \mathcal{A}^{\rm OD}_P$ the unique solution to \eqref{abstractCommu} in $  {\rm ker}(\ad_H)^\perp$ is actually off-diagonal, i.e.\ lies in $\mathcal{A}^{\rm OD}_P$. In summary we conclude that the map 
\[
\ad_{H,P}:\mathcal{A}^{\rm OD}_P \to \mathcal{A}^{\rm OD}_P\,,\quad A \mapsto \ad_H(A) =-\I\, [H,A]
\]
is an isomorphism and we denote its inverse by $\ad_{H,P}^{-1}$. 

Note that if   $P$ is the spectral projection of $H$ corresponding to a single eigenvalue $E$, i.e.\ $HP=EP$, then there exists an explicit formula for  $\ad_{H,P}^{-1} $,
\[
\ad_{H,P}^{-1} (B) =\I\,[(H-E)^{-1} P^\perp,B]=:\I\, [ R , B]\,,
\]
in terms of the reduced resolvent $R := (H-E)^{-1} P^\perp$. This can be checked by a simple computation:
\begin{align*}
[H, [ R , B]] &= [H - E, [ (H-E)^{-1} P^\perp , B]]\\
&= P^\perp   B - (H-E) B  (H-E)^{-1} P^\perp  -  (H-E)^{-1} P^\perp B (H-E) + BP^\perp\\
&= P^\perp   B - (H-E) P B P^\perp (H-E)^{-1}    -  (H-E)^{-1} P^\perp B P (H-E) + BP^\perp\\
&= P^\perp   B+ BP^\perp = B\,,
\end{align*}
where we used repeatedly that $B$ is off-diagonal and that $(H-E) P=0$.

One key ingredient to the proof of the adiabatic theorem is the following extension of the inverse Liouvillian $\ad_{H,P}^{-1}$ to a map on the full space $\mathcal{A}$. 
To construct it, first note that for any $g>0$ one can find a function
  $W_g\in L^1(\R)$ satisfying $\sup\{|s|^n |W_g(s)|\,|\, |s|>1\}<\infty$ for all $n\in\N$ and with a Fourier transform $\widehat W_g \in C^\infty(\R)$ satisfying
  \[
\widehat W_g(\omega) = \frac{-\I}{\sqrt{2\pi}\omega} \quad\mbox{ for } |\omega|\geq g\qquad\mbox{ and }\qquad   \widehat W_g(0)=0\,.
\]
 An example of a function $W_g$ having all these properties      is given in \cite{BMNS}. 
 
  We need a slightly modified version $\mathcal{W}_{g,\delta}$ of this function: Let $g>\delta \geq 0$ and $\chi_{g,\delta}\in C^\infty(\R)$ an even function with $\chi_{g,\delta}(\omega) = 0$ for $\omega \in [-\delta,\delta]$ and $\chi_{g,\delta}(\omega) = 1$ for $|\omega|\geq g$. Then $\mathcal{W}_{g,\delta}$ defined through its Fourier transform $\widehat{\mathcal{W}}_{g,\delta}:=  \chi_{g,\delta}\,\widehat{\mathcal{W}}_{g }$ 
 satisfies, in addition to the properties mentioned above for $\mathcal{W}_{g }$, also  $ \widehat{\mathcal{W}}_{g,\delta}(\omega)=0$ for all $\omega\in  [-\delta,\delta]$.
 
 Now let $H$ be a self-adjoint operator and assume that $\sigma_*(H)\subset \sigma(H)$ is a set of neighbouring eigenvalues, i.e.\ there is an interval $I\subset \R$ such that $I\cap \sigma(H) = \sigma_*(H)$.
Let   $g:= {\rm dist}( \sigma_*(H), \sigma(H)\setminus\sigma_*(H)) > 0$ be the size of the spectral gap, $\delta := {\rm diam}(\sigma_*(H))$ the width of the spectral patch $\sigma_*(H)$, and $P_*$ the corresponding spectral projection. 
 
 \begin{lemma}\label{Ilemma}
Let $H$, $\sigma_*$, and $P_*$ as above. Then for any $0\leq \tilde\delta <g$ the map 
\[
\mathcal{I}_{H,g,\tilde\delta} : \mathcal{A}\to \mathcal{A}\,, \quad \mathcal{I}_{H,g,\tilde\delta}(A) := \int_\R W_{g,\tilde\delta}(s) \,\E^{\I Hs}\,A\,\E^{-\I Hs} \,\D s
\]
satisfies 
\begin{equation}\label{ILinverse}
\mathcal{I}_{H,g,\tilde\delta}|_{\mathcal{A}^{\rm OD}_{P_*}} =  \ad_{H,P_*}^{-1}\,.
\end{equation}
If $\delta\leq \tilde \delta$, then moreover
\[
P_*\,\mathcal{I}_{H,g,\tilde\delta}(A)\,P_* = 0 \quad\mbox{ for all } A\in \mathcal{A}\,.
\]
\end{lemma}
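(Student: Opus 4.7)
The plan is to diagonalise $H$ and evaluate $\mathcal{I}_{H,g,\tilde\delta}$ in an eigenbasis, thereby reducing both claims to statements about the values of $\widehat W_{g,\tilde\delta}$ on the differences of eigenvalues of $H$. Choose an orthonormal basis $\{\varphi_n\}$ of $\Hi$ with $H\varphi_n = E_n\varphi_n$, and let $I_*$ be the index set labelling those $\varphi_n$ that lie in $\operatorname{ran} P_*$. Writing $A_{nm} := \langle \varphi_n, A\varphi_m\rangle$ and $e_{nm} := |\varphi_n\rangle\langle \varphi_m|$, one has $\E^{\I Hs}\,e_{nm}\,\E^{-\I Hs} = \E^{\I(E_n-E_m)s}\,e_{nm}$. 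Since $\Hi$ is finite-dimensional the expansion $A = \sum_{n,m} A_{nm}\,e_{nm}$ is a finite sum, so Fubini (with $W_{g,\tilde\delta}\in L^1(\R)$) allows one to interchange sum and integral and obtain
\[
\mathcal{I}_{H,g,\tilde\delta}(A) \;=\; \sum_{n,m} \sqrt{2\pi}\,\widehat W_{g,\tilde\delta}(E_m-E_n)\, A_{nm}\, e_{nm}.
\]

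For the first claim, let $A\in \mathcal{A}^{\rm OD}_{P_*}$. Then $A_{nm}$ can be non-zero only if exactly one of $n,m$ belongs to $I_*$, in which case the gap hypothesis forces $|E_m - E_n| \geq g$, so $\chi_{g,\tilde\delta}(E_m - E_n) = 1$ and $\widehat W_{g,\tilde\delta}(E_m - E_n) = \widehat W_g(E_m - E_n) = -\I/(\sqrt{2\pi}\,(E_m - E_n))$. Substituting into the display gives
\[
\mathcal{I}_{H,g,\tilde\delta}(A) \;=\; \sum_{n,m}\frac{-\I}{E_m-E_n}\,A_{nm}\,e_{nm},
\]
which is still off-diagonal with respect to $P_*$. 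A one-line calculation using $\ad_H(e_{nm}) = -\I(E_n-E_m)\,e_{nm}$ shows that $\ad_H$ applied to this expression returns $\sum_{n,m} A_{nm}\,e_{nm} = A$. Since the abstract discussion preceding the lemma identifies $\ad_{H,P_*}:\mathcal{A}^{\rm OD}_{P_*}\to \mathcal{A}^{\rm OD}_{P_*}$ as a bijection, this shows $\mathcal{I}_{H,g,\tilde\delta}(A) = \ad_{H,P_*}^{-1}(A)$.

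For the second claim, the $P_*(\cdot)P_*$-block of $\mathcal{I}_{H,g,\tilde\delta}(A)$ is determined by the coefficients with $n,m\in I_*$, for which $|E_m-E_n|\leq \delta \leq \tilde\delta$. On $[-\tilde\delta,\tilde\delta]$ the cut-off $\chi_{g,\tilde\delta}$ vanishes identically, hence so does $\widehat W_{g,\tilde\delta}$, and every coefficient in the corresponding block of the display above is zero. Therefore $P_*\,\mathcal{I}_{H,g,\tilde\delta}(A)\,P_* = 0$ for every $A\in\mathcal{A}$.

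The only technical point is the interchange of sum and integral in the opening step, but it is not really an obstacle: the sum is finite by finite-dimensionality of $\Hi$, and $W_{g,\tilde\delta}\in L^1(\R)$, so the interchange is immediate. Beyond this, both assertions reduce to reading off the value (or vanishing) of $\widehat W_{g,\tilde\delta}$ on the appropriate spectral gaps.
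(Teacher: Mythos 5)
Your proposal is correct and follows essentially the same route as the paper: insert the spectral decomposition of $H$ into the definition of $\mathcal{I}_{H,g,\tilde\delta}$, read off that $\widehat W_{g,\tilde\delta}(E_m-E_n)=-\I/(\sqrt{2\pi}(E_m-E_n))$ on the off-diagonal blocks (where $|E_m-E_n|\geq g$) and that it vanishes on the $P_*(\cdot)P_*$ block when $\delta\leq\tilde\delta$. Your explicit check that $\ad_H$ returns $A$ together with the bijectivity of $\ad_{H,P_*}$ just spells out what the paper calls ``evident'', so there is no substantive difference.
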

\begin{proof}
Both claims follow immediately by inserting the spectral decomposition of $H=\sum_n E_n P_n$, where we enumerate the eigenvalues such that $P_* = \sum_{n=1}^{n_*} P_n$,  into the definition of~$\mathcal{I}$:
\begin{align*}
\frac{1}{\sqrt{2\pi}}\,\mathcal{I}_{H,g,\tilde\delta}(A) &= \sum_{n,m} \widehat W_{g,\tilde\delta}(E_m-E_n)  P_n\,A\,P_m\\
&=  \sum_{n,m\leq n_*} \widehat W_{g,\tilde\delta}(E_m-E_n)  P_n\,A\,P_m+ \sum_{n,m>n_* }  \widehat W_{g,\tilde\delta}(E_m-E_n)  P_n\,A\,P_m\\
& \quad +\; \sum_{\substack{n\leq n_* \\ m>n_*}}  \widehat W_{g,\tilde\delta}(E_m-E_n)  P_n\,A\,P_m+ \sum_{\substack{n> n_* \\ m\leq n_*}}   \widehat W_{g,\tilde\delta}(E_m-E_n)  P_n\,A\,P_m\\
&= \sum_{n,m\leq n_*} \widehat W_{g,\tilde\delta}(E_m-E_n)  P_n\,A\,P_m+ \sum_{n,m>n_* }  \widehat W_{g,\tilde\delta}(E_m-E_n)  P_n\,A\,P_m\\
& \quad + \frac{ \I}{\sqrt{2\pi} } \left(  \sum_{\substack{n\leq n_* \\ m>n_*}}  \frac{P_n\,A\,P_m}{E_n-E_m}+ \sum_{\substack{n> n_* \\ m\leq n_*}}   \frac{P_n\,A\,P_m}{E_n-E_m}  \right)\,.
\end{align*}
For $A=A^{\rm OD}$ the first two terms in the final expression vanish and \eqref{ILinverse} is evident. In the first term we have $|E_m-E_n|<\delta$, and for $\delta \leq \tilde\delta$ this term thus vanishes identically.
\end{proof}
 
The usefulness  of  the map $\mathcal{I}_H$ in the context of  local Hamiltonians lies in the fact that it maps  local Hamiltonians to  local Hamiltonians, an observation originating from \cite{HW}.  Based on \cite{BMNS} and Lemma~4.8 in \cite{BDF} one can now show that it also preserves $L$-localization, i.e.\ that
\[
\mathcal{I}_{H,g}( \mathcal{L}_{\mathcal{S},\infty,L}) \subset \mathcal{L}_{\mathcal{S},\infty,L}\,,
\]
whenever $H\in \mathcal{L}_{\mathcal{E},\infty}$.
More precisely, one has the following lemma, which is an adaption of Lemma~4.8 in \cite{BDF}.

\begin{lemma}\label{Ilemma2}
Assume (A1)$_m$ and (A2) and let $D \in \mathcal{L}_{\mathcal{S},\infty,L}$, i.e.\ there is a sequence $(\tilde\zeta_n)_{n\in\N_0}$ in $\mathcal{S}$ such that $ \|\Phi_D\|_{\tilde \zeta_n,n+1,L}<\infty$. Then the family of operators
\[
 \big\{  \mathcal{I}_{H,g, \tilde \delta}(D)^{ \Lambda }  \big\}_{\Lambda} 
\]
defines a Hamiltonian $\mathcal{I}_{H,g,\tilde\delta}(D)\in \mathcal{L}_{\mathcal{S},\infty,L}$. More precisely, there is a constant $C_{\| \Phi_H\|_{a,0} }$ depending  on  $\|W_g\|_1$,  $\| \Phi_H\|_{a,0}$, $(\tilde\zeta_n)_{n\in\N_0}$,  $a$, and $d$, and a sequence  $(\xi_n)_{n\in\N_0}$ in $ \mathcal{S}$,   such that
\[
\| \Phi_{\mathcal{I}_{H,g,\tilde\delta}(D)}(t) \|_{\xi_n,n,L} \leq C_{\| \Phi_{H(t)}\|_{a,0} } \, \|\Phi_D(t)\|_{\tilde \zeta_n,n+1,L}
\]
for all $n\in\N_0$ and $t\in[0,\infty)$.
\end{lemma}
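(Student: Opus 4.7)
The strategy is a direct adaptation of Lemma~\ref{transformlemma}, with the single unitary $\E^{-\I S}$ replaced by the one-parameter family $\E^{\I Hs}$, $s \in \R$, over which we then integrate against $W_{g,\tilde\delta}(s)$. Two features make this work: first, $H$ is in $\mathcal{L}_{a,0}$ uniformly in $\Lambda$ by (A1)$_m$, so the Lieb--Robinson bound (Theorem~\ref{LRB}) applies with a velocity $v$ as in \eqref{vDef} that is independent of $\Lambda$; second, the weight $W_{g,\tilde\delta}$ has faster-than-any-polynomial decay in $s$.

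Concretely, for each fixed $s$ I would repeat the partial-trace construction from the proof of Lemma~\ref{transformlemma}: writing $\mathbb{E}_Y$ for the partial trace onto $\mathcal{A}_Y$ and $X_k$ for the $k$-fattening of $X \subset \Lambda$, set
\[
\Delta_k^H(s, O) \,:=\, \mathbb{E}_{X_k}\bigl( ({\bf 1} - \mathbb{E}_{X_{k-1}})\,\E^{\I Hs}\,O\,\E^{-\I Hs}\bigr)\,, \qquad O \in \mathcal{A}_X^+\,,
\]
with $\mathbb{E}_{X_{-1}} := 0$. Applying Theorem~\ref{LRB} directly to $u^H_s(O) := \E^{\I Hs} O \E^{-\I Hs}$ and any $B \in \mathcal{A}_{\Lambda \setminus X_{k-1}}$, followed by Lemma~\ref{NSWLemma}, yields
\[
\|\Delta_k^H(s, O)\| \,\leq\, C\,\|O\|\,|X|\,\E^{-a(k - v|s|)_+}\,,
\]
i.e.\ the $\E^{-ak}$ decay from Lemma~\ref{transformlemma} is replaced by $\E^{-a(k - v|s|)_+}$, which degrades linearly in $|s|$.

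An interaction for $\mathcal{I}_{H,g,\tilde\delta}(D)$ is then defined by
\[
\Phi^\Lambda_{\mathcal{I}_{H,g,\tilde\delta}(D)}(Z) \,:=\, \int_\R W_{g,\tilde\delta}(s) \sum_{k=0}^\infty \sum_{\substack{Y \subset \Lambda\\ Y_k = Z}} \Delta_k^H\bigl(s,\Phi_D^\Lambda(Y)\bigr)\, \D s\,,
\]
which sums to $\mathcal{I}_{H,g,\tilde\delta}(D)^\Lambda$ by linearity and Fubini. To bound $\sum_{Z \ni x,y} |Z|^n \|\Phi^\Lambda_{\mathcal{I}_{H,g,\tilde\delta}(D)}(Z)\|$ I would then, uniformly in $s$, use the same splitting into the pieces $S_1$ and $S_2$ appearing in the proof of Lemma~\ref{transformlemma}, and then integrate the resulting estimates against $|W_{g,\tilde\delta}(s)|\,\D s$. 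The effect is that the auxiliary functions $f_1(r)$ and $f_2(r)$ produced there are replaced by the $s$-averaged versions
\[
\tilde f_i(r) \,:=\, \int_\R |W_{g,\tilde\delta}(s)|\, f_i\bigl(r;v|s|\bigr)\,\D s\,,\qquad i = 1,2\,,
\]
where $f_i(r;v|s|)$ is the same function but with the decay $\zeta(k) = \E^{-ak}$ replaced by $\E^{-a(k - v|s|)_+}$ and the threshold $m_0(r)$ chosen so that the split still produces an effective localization at scale $d_L^\Lambda(x,y)$.

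The main technical point, and the only step that goes beyond a routine extension of Lemma~\ref{transformlemma}, is to check that each $\tilde f_i$ still satisfies the decay hypothesis of Lemma~\ref{Slemma}\,(b), i.e.\ $r^n\tilde f_i(r) \to 0$ as $r \to \infty$ for every $n \in \N_0$. This reduces, after interchanging the $s$-integration with the sums in $k$ via Fubini, to the integrability of $|s|^N |W_{g,\tilde\delta}(s)|$ for every $N \in \N$: the polynomial-in-$|s|$ growth of both the shifted decay function and of the cutoff $m_0$ is absorbed by the rapid decay of $W_{g,\tilde\delta}$. Lemma~\ref{Slemma}\,(b) then produces a sequence $(\xi_n)_{n \in \N_0}$ in $\mathcal{S}$ with $\tilde f_i \leq C_n F_{\xi_n}$, and the quantitative estimate claimed in the lemma follows, with a constant depending only on $\|W_{g,\tilde\delta}\|_1$, the polynomial moments of $W_{g,\tilde\delta}$, $\|\Phi_H\|_{a,0}$, $a$, $d$, and $(\tilde\zeta_n)_{n \in \N_0}$.
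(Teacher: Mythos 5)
Your proposal is correct and follows exactly the route the paper indicates: the paper omits the proof of Lemma~\ref{Ilemma2}, stating only that it combines the partial-trace decomposition from Lemma~\ref{transformlemma} with the superpolynomial decay of $W_{g,\tilde\delta}$ as in \cite{BMNS}, which is precisely what you carry out (Lieb--Robinson bound for the frozen dynamics $\E^{\I Hs}$, decay $\E^{-a(k-v|s|)_+}$ for the pieces $\Delta_k^H(s,\cdot)$, integration against $|W_{g,\tilde\delta}(s)|$, and Lemma~\ref{Slemma}\,(b) to produce the sequence $(\xi_n)$).
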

We omit the proof as one can combine an approach similar to the proof of Lemma~\ref{transformlemma} and the superpolynomial decay of $W_{g,\tilde \delta}$ exactly as in \cite{BMNS} to arrive at the conclusion.

\end{document}